
\documentclass[conference,letterpaper]{IEEEtran}

\addtolength{\topmargin}{9mm}

%
%

\usepackage[utf8]{inputenc} 
\usepackage[T1]{fontenc}
\usepackage{url}
\usepackage{ifthen}
\usepackage{cite}
\usepackage[cmex10]{amsmath} 


\usepackage{xcolor}
\usepackage{amssymb}
\usepackage{bm}

\usepackage{graphicx}    
\usepackage{subcaption}  

\usepackage{multicol}

\usepackage{multirow}

\newenvironment{proof}{\par\noindent\textit{Proof:}\ }{\hfill$\square$\par}

\newtheorem{theorem}{Theorem}
\newtheorem{definition}{Definition}

\newtheorem{proposition}{Proposition}
\newtheorem{lemma}{Lemma}

\definecolor{americanrose}{rgb}{1.0, 0.01, 0.24}
\definecolor{ao(english)}{rgb}{0.0, 0.5, 0.0}

\interdisplaylinepenalty=2500 

\hyphenation{op-tical net-works semi-conduc-tor}

\begin{document}
\title{Differential Confounding Privacy and Inverse Composition} 


\author{%
  \IEEEauthorblockN{Tao Zhang}
  \IEEEauthorblockA{Computer Science and Engineering \\
                    Washington University in St. Louis\\
                    tz636@nyu.edu}
  \and
  \IEEEauthorblockN{Bradley A. Malin}
  \IEEEauthorblockA{Department of Biostatistics\\
  Vanderbilt University\\
  b.malin@vumc.org}\\
  \and
  \IEEEauthorblockN{Netanel Raviv and Yevgeniy Vorobeychik}
  \IEEEauthorblockA{Computer Science and Engineering \\ 
                    Washington University in St. Louis\\
                    \{netanel.raviv,  yvorobeychik\}@wustl.edu}
}

\maketitle


\begin{abstract}
Differential privacy (DP) has become the gold standard for privacy-preserving data analysis, but its applicability can be limited in scenarios involving complex dependencies between sensitive information and datasets. 
To address this, we introduce \textit{differential confounding privacy} (DCP), a specialized form of the Pufferfish privacy (PP) framework that generalizes DP by accounting for broader relationships between sensitive information and datasets.
DCP adopts the $(\epsilon, \delta)$-indistinguishability framework to quantify privacy loss.
We show that while DCP mechanisms retain privacy guarantees under composition, they lack the graceful compositional properties of DP. To overcome this, we propose an \textit{Inverse Composition (IC)} framework, where a leader-follower model optimally designs a privacy strategy to achieve target guarantees without relying on worst-case privacy proofs, such as sensitivity calculation. Experimental results validate IC's effectiveness in managing privacy budgets and ensuring rigorous privacy guarantees under composition.
\end{abstract}

\section{Introduction}

Information privacy is a critical challenge in the digital age, driven by the widespread use of data-driven technologies and personal data. Addressing this requires robust, interpretable, and quantifiable privacy protection techniques.
The problem can be framed as follows: Let $X$ be a dataset processed by a (possibly randomized) mechanism $\mathcal{M}: \mathcal{X} \to \mathcal{Y}$, producing an output $Y = \mathcal{M}(X)$ that is publicly observed, where $\mathcal{X}$ and $\mathcal{Y}$ are the input and output spaces, respectively. The dataset $X$ encodes information about sensitive information (or the \textit{secret}) $S$. The goal of privacy protection is to quantify how much an attacker can infer about the secret $S$ from the output $Y$.

Differential privacy (DP) \cite{dwork2006calibrating} has become the gold standard for privacy-preserving data analysis. In DP, the secret $S$ is an integral part of a dataset, encompassing attributes, entries, values, or membership. DP ensures that any two \textit{adjacent datasets} $x, x' \in \mathcal{X}$, differing by a single entry ($x \simeq x'$), produce nearly indistinguishable outputs, limiting the information inferable about the secret.
A randomized mechanism $\mathcal{M}: \mathcal{X} \to \mathcal{Y}$ is \textit{$(\epsilon, \delta)$-differentially private} ($(\epsilon, \delta)$-DP) with $\epsilon \geq 0$ and $\delta \in [0,1]$ if it is \textit{$(\epsilon, \delta)$-indistinguishable} ($(\epsilon, \delta)\textup{-}\mathtt{ind}$) for all $x \simeq x'$:
\begin{equation}\label{eq:standard_DP_def}
    \sup_{\mathcal{W} \subset \mathcal{Y}} \left( \textbf{Pr}\left[\mathcal{M}(x) \in \mathcal{W}\right] - e^{\epsilon} \textbf{Pr}\left[\mathcal{M}(x') \in \mathcal{W}\right] \right) \leq \delta.
\end{equation}
Pufferfish privacy (PP) \cite{kifer2014pufferfish} generalizes differential privacy (DP) by extending the adjacency concept from datasets to \textit{distributions} over datasets, conditioned on the adversary’s prior knowledge \cite{desfontaines2019sok}. This allows PP to incorporate domain-specific assumptions and protect customizable sensitive properties of the data—extending beyond individual data points or records.
Appendix \ref{app:Puffish Privacy} shows the formal definition of PP.

In DP, the secret and the dataset enter the $(\epsilon,\delta)$-indistinguishability inequality through a deterministic one-to-one relationship: once the dataset is fixed, the secret (i.e., data point) is uniquely determined, and adjacent datasets correspond to distinct secret values. Hence, the randomness in the relationship between the secret and the mechanism output arises solely from the randomness of the mechanism itself.
However, in real-world systems such as multi-agent networks, machine learning pipelines, or distributed systems, sensitive information often arises from complex interactions and dependencies that go beyond individual records, data attributes, or features that are observable or directly inferable from the dataset (e.g., an individual’s activity at a specific time).
These systems involve derived knowledge, intermediate information, or correlations generated within internal mechanisms.
Here are a few illustrative examples (see Appendix \ref{app:examples} for more detailed examples).

\paragraph{Example: Online Learning and Personalized Recommendation Systems}
Online platforms use user interactions \(X\) (e.g., clicks, watch history) to infer secrets \(S\), such as latent preferences or user profiles.
The platform updates \(S\) based on \(X\), and \(S\) shapes future recommendations, influencing user behavior. Over time, \(S\) and \(X\) evolve together in a feedback loop (\(S \leftrightarrow X\)). Unlike static datasets, \(S\) is dynamically updated and not directly present in \(X\).

\paragraph{Example: Medical Diagnosis and Screening Pipelines}
In healthcare, tests and screenings generate datasets \(X\) (e.g., lab results, imaging data). The secret \(S\) is a diagnosis or health label inferred from \(X\), such as “high risk of condition \(S\).”
The causal flow starts with raw data \(X\), which is analyzed to derive \(S\). Feedback loops occur when a tentative diagnosis triggers further tests, creating new data that updates \(X\). Since \(S\) depends on multiple variables in \(X\) and evolves dynamically, \(S \not\subseteq X\).

To capture such general relationships between $S$ and $X$, we introduce \textit{differential confounding privacy} (DCP), a specialized form of the PP framework, to quantify the privacy leakage of the secret $S$ from the output $Y$. DCP adopts the \((\epsilon, \delta)\) scheme of DP and PP to parameterize probabilistic (near) indistinguishability.

Composition, the ability to run multiple private algorithms on the same dataset while preserving privacy, is a fundamental property of effective privacy frameworks.
Graceful composition properties are key to the success of DP \cite{dwork2022differential} and are essential for the practicality and scalability of any useful privacy framework. 
We show that the composition of DCP mechanisms remains DCP but lacks the graceful composition properties of DP even for independent mechanisms. Specifically, the \textit{optimal composition} bound for DP \cite{kairouz2015composition,murtagh2015complexity} underestimates the cumulative privacy loss in DCP, and the \textit{basic composition} property \cite{dwork2006our,kairouz2015composition} is generally not satisfied. This presents challenges in determining optimal or conservative privacy bounds for DCP under composition.

To tackle these challenges, we introduce the \textit{inverse composition} (IC) framework, which reimagines the approach to privacy composition. In contrast to traditional methods that depend on worst-case guarantees, such as sensitivity calculations (generally NP-hard \cite{xiao2008output}) and combinatorial composition analysis (generally \#P-hard \cite{murtagh2015complexity}), IC adopts a leader-follower model to strategically design privacy strategies.
The leader’s privacy strategy elicits an optimal response from the follower, ensuring that privacy guarantees are met without explicitly constraining the privacy strategy or the mechanism. IC provides a systematic method to achieve $(\epsilon, \delta)$-DCP guarantees under composition, offering a flexible framework for managing privacy budgets and addressing the complexities of DCP.

\textbf{Organization:} Section II defines the DCP framework. Section III analyzes the composition of DCP mechanisms and their limitations compared to DP. Section IV presents the inverse composition framework, and Section V provides numerical validation. Section VI concludes the paper. Technical details and proofs are in the appendices of the full version \cite{zhang2024confounding}.

\subsection{Related Work}


Differential confounding privacy (DCP) builds on the general Pufferfish privacy (PP) framework \cite{kifer2014pufferfish}, which extends the deterministic, one-to-one relationship between secrets and datasets assumed in Differential privacy (DP) \cite{dwork2006calibrating} to support more general and expressive privacy specifications.
Despite its expressive power, a key challenge in the PP framework is the scarcity of practical and broadly applicable mechanisms. Notable instantiations include Blowfish privacy \cite{he2014blowfish}, Wasserstein mechanisms for correlated data \cite{song2017pufferfish}, attribute privacy \cite{zhang2022attribute}, and mutual-information-based PP \cite{nuradha2023pufferfish}.

\textbf{Composition of PP }  
PP does not always compose \cite{kifer2014pufferfish,song2017pufferfish}. \cite{kifer2014pufferfish} introduce a sufficient condition, \textit{universally composable evolution scenarios} (UCES), for linear self-composition, analogous to DP’s basic composition property \cite{dwork2006calibrating}. UCES assumes a deterministic one-to-one mapping between a secret and a dataset, leveraging DP’s composition benefits by eliminating randomness in this relationship, which corresponds to an extremely confident attacker. However, in CP, where causation may flow from datasets to secrets, UCES is often infeasible. This complexity arises when another internal mechanism controls the state-dataset relationship, preventing the elimination of randomness.

\textbf{Classic Composition of DP }  
DP’s composition property underpins its success. The advanced composition of $(\epsilon, \delta)$-DP \cite{dwork2010boosting} tightens cumulative privacy bounds by considering probabilistic behavior. \cite{kairouz2015composition} derive optimal composition for homogeneous $(\epsilon, \delta)$-DP mechanisms, while \cite{murtagh2015complexity} show that finding tight bounds for heterogeneous mechanisms is $\#P$-complete. Recent works on interactive mechanisms confirm that they compose comparably to standard DP mechanisms \cite{vadhan2021concurrent,lyu2022composition,vadhan2023concurrent}.

\textbf{Fine-Grained Composition of DP }  
Functional approaches like R\'enyi DP (RDP) \cite{mironov2017renyi}, privacy profiles \cite{balle2018improving}, $f$-DP \cite{dong2022gaussian}, and privacy loss distribution (PLD) \cite{sommer2019privacy,koskela2020computing} provide fine-grained characterizations of privacy guarantees. These methods compose by aggregating functional terms, enabling mechanism-specific analyses, tighter privacy bounds, better privacy-utility trade-offs, and efficient privacy accounting \cite{bun2016concentrated,abadi2016deep,balle2018improving,wang2019subsampled,koskela2020computing,gopi2021numerical,zhu2022optimal,koskela2022individual}. Conversions to $(\epsilon, \delta)$-DP are achieved through mechanism standardization, but lossless conversions avoiding $\#P$-completeness under composition remain an open problem.

\section{Differential Confounding Privacy}

We formally define \textit{differential confounding privacy (DCP)} as follows. Let $\{\theta, \mathcal{G}\}$ describe the relationship between the secret $S$ and the dataset $X$, where:
\begin{itemize}
    \item $\theta\in\Theta \subseteq \Delta(\mathcal{S} \times \mathcal{X})$ determines a joint density $P_{\theta}(s,x)$, where $\Theta$ is a class of joint distributions over $S$ and $X$. 
    \item $\mathcal{G}: \mathcal{X} \to \mathcal{S}$ is a randomized mapping such that $P_\theta(s, x) > 0$ if and only if $\textbf{Pr}[\mathcal{G}(x) = s] > 0$.
\end{itemize}
In addition, let $P_{\theta}(s)$ and $P_{\theta}(x|s)$ be the marginal and conditional densities derived from $P_{\theta}(s,x)$.
In addition, let $\mathtt{D}(\cdot|\theta): \mathcal{S} \times \mathcal{S} \to \mathbb{R}$ be a metric satisfying $\mathtt{D}(s, s' | \theta) \leq \mathtt{d}$ for a given $\mathtt{d} \geq 0$ for any $s,s'\in\mathcal{S}$ with $P_{\theta}(s)>0$ and $P_{\theta}(s')>0$, which defines when two secrets $s$ and $s'$ are considered \textit{adjacent}.
Let $\mathcal{Q}$ be the set of such adjacent secrets.

DCP aims to protect the privacy of the secret $S$ by ensuring that any adjacent secret pair from the output of a randomized mechanism $\mathcal{M}(\gamma)$, where $\gamma: \mathcal{X} \to \Delta(\mathcal{Y})$ denotes the probability density function of the underlying distribution of the mechanism.
The corresponding mapping $\mathcal{M}(\cdot; \gamma): \mathcal{X} \to \mathcal{Y}$ takes an input dataset $x \in \mathcal{X}$ (but not the secret $S$) and produces an output $y = \mathcal{M}(x; \gamma) \in \mathcal{Y}$.
%
%
%
%
For any measurable set $\mathcal{W} \subseteq \mathcal{Y}$, the probability is defined as:
\begin{equation}\label{eq:prior_likelihood}
    \textbf{Pr}^{\theta,s}_{\gamma}\left[Y \in \mathcal{W}\right] \equiv \int_{x \in \mathcal{X}} \int_{y \in \mathcal{W}} \gamma\left(y \middle| x\right) P_{\theta}(x|s) \, dy \, dx.
\end{equation}
For discrete spaces, the density functions are replaced by probability mass functions, and integrals are replaced by summations. This work focuses on the continuous case. For simplicity, we may omit $\theta$ in the notation $\textbf{Pr}^{\theta,s}_{\gamma}$ when $\theta$ is fixed and does not vary in the discussion.

\begin{definition}[$(\epsilon, \delta, \theta,\mathcal{G})$-Differential Confounding Privacy]
A randomized mechanism $\mathcal{M}(\gamma): \mathcal{X} \to \mathcal{Y}$ is \textit{$(\epsilon, \delta, \theta,\mathcal{G})$-differentially confounding private ($(\epsilon, \delta,\theta,\mathcal{G})$-DCP)} with $\epsilon \geq 0$ and $\delta \in [0,1]$ if $\mathcal{M}$ satisfies \textit{$(\epsilon, \delta, \theta,\mathcal{G})$-indistinguishability ($(\epsilon, \delta, \theta,\mathcal{G})\textup{-}\mathtt{ind}$)}:
\begin{equation}\label{eq:confounding_privacy_def}
    \sup_{(s,s')\in\mathcal{Q} } \sup_{\mathcal{W} \subset \mathcal{Y}} 
    \left( \textbf{Pr}^{s}_{\gamma}\left[Y \in \mathcal{W}\right] - e^{\epsilon} \textbf{Pr}^{s'}_{\gamma}\left[Y \in \mathcal{W}\right] \right) \leq \delta.
\end{equation}
\end{definition}

DCP ensures that any two adjacent secrets $(s, s') \in \mathcal{Q}$ link to nearly indistinguishable outputs, thereby limiting the information that can be inferred about the secret. Specifically, the $(\epsilon, \delta, \theta, \mathcal{G})\textup{-}\mathtt{ind}$ condition bounds the difference in probabilities of $\mathcal{M}$ producing an output in any measurable set $\mathcal{W} \subset \mathcal{Y}$ for $s$ and $s'$, up to an additive factor $\delta$ and multiplicative factor $e^\epsilon$. Smaller $\epsilon$ and $\delta$ imply stronger privacy guarantees.
For simplicity, we omit $\mathtt{D}$ and $\mathtt{d}$ in the notation and refer to the property as $(\epsilon, \delta)$-DCP and $(\epsilon, \delta)\textup{-}\mathtt{ind}$.

\textbf{Example:} \textit{Privacy Risks in Navigation Analytics. }
Let $\mathbf{x}$ be the data from a connected vehicle, including GPS logs, driving speed, acceleration patterns, daily mileage, time-of-day usage, and onboard sensor readouts. Let $\mathbf{s}$ be the driver’s hidden risk profile (e.g., “high likelihood of accident or reckless driving”), inferred by an insurer or telematics service using a machine-learning model.
A mechanism $\mathcal{M}(\gamma)$ performs a navigation analytics service that uses the raw data $\mathbf{x}$ to identify traffic congestion hotspots and suggest better route planning for municipal authorities or map providers. The output $\mathbf{y} = \mathcal{M}(\mathbf{x}; \gamma)$ might include publicly accessible heatmaps of traffic speeds, congestion levels, and dangerous intersections across the city.
Even without direct access to $\mathbf{x}$, an insurance company could leverage $\mathbf{y}$ to infer details about individual drivers. For instance, granular heatmaps could reveal where and when specific vehicles frequently drive or expose unique behaviors like repeated harsh braking on certain routes. By cross-referencing this information with partial records (e.g., billing data) and applying advanced machine-learning models, the insurer could refine risk scores or reconstruct the driver’s hidden risk profile $\mathbf{s}$.

In this example, the secret $s$ is inferred information derived from $x$. Consequently, DP becomes inapplicable, as it is designed to protect individual input data records rather than properties inferred from them.



\section{Characterizing the Composition}

In this section, we characterized the composition properties of DCP mechanisms using copula and privacy loss random variables. 
We demonstrate that, unlike DP, DCP does not always exhibit composition as gracefully as DP when the same dataset (or overlapping datasets \cite{dwork2010boosting}) linked to the secret are used across multiple private data processing mechanisms.
The detailed version of this section is provided in Appendix \ref{app:section_III_details}.

\textbf{Composition }
In the DCP framework, as more computations are performed on the dataset $x$, it is important to understand how the privacy of the secret $s = \mathcal{G}(x)$ degrades under composition. 
Consider $k$ independent mechanisms $\mathcal{M}_{1}(\gamma_{1}), \mathcal{M}_{2}(\gamma_{2}), \dots, \mathcal{M}_{k}(\gamma_{k})$, where each $\mathcal{M}_{i}(\cdot;\gamma_{i}): \mathcal{X} \to \mathcal{Y}_{i}$ operates with a density function $\gamma_{i}$ and output space $\mathcal{Y}_{i}$. Let $\vec{\gamma} \equiv (\gamma_{i})_{i=1}^k$ denote the collection of density functions, $\vec{Y} \equiv (Y_{i})_{i=1}^k$ the random variables corresponding to the outputs of the mechanisms, and $\vec{y} \equiv (y_{i})_{i=1}^k \in \vec{\mathcal{Y}}\equiv \prod_{i=1}^k \mathcal{Y}_{i}$.
The composition $\mathcal{M}(\cdot;\vec{\gamma}): \mathcal{X} \to \vec{\mathcal{Y}}$ is defined as:
\[
\mathcal{M}(x; \vec{\gamma}) = \left(\mathcal{M}(x; \gamma_{1}), \dots, \mathcal{M}(x; \gamma_{k})\right).
\]
For any $\vec{\mathcal{W}} \subseteq \vec{\mathcal{Y}}$, we define:
\begin{equation}\label{eq:likelihood_composition}
    \textbf{Pr}^{s}_{\vec{\gamma}}\left[\vec{Y} \in \vec{\mathcal{W}}\right] \equiv \int\limits_{x \in \mathcal{X}} \int\limits_{\vec{y} \in \vec{\mathcal{W}}} \prod_{i=1}^{k} \gamma_{i}(y_{i} \mid x) P_{\theta}(x|s) \, d\vec{y} \, dx.
\end{equation}
The ($k$-fold) composition $\mathcal{M}(\vec{\gamma})$ satisfies $(\epsilon_{g}, \delta_{g})$-DCP for some $\epsilon_{g} \geq 0$ and $\delta_{g} \in [0, 1]$ if,
\[
\sup_{(s,s')\in\mathcal{Q}} \sup_{\vec{\mathcal{W}} \subset \vec{\mathcal{Y}}} 
\left( \textbf{Pr}^{s}_{\vec{\gamma}}\left[\vec{Y} \in \vec{\mathcal{W}}\right] - e^{\epsilon_{g}} \textbf{Pr}^{s'}_{\vec{\gamma}}\left[\vec{Y} \in \vec{\mathcal{W}}\right] \right) \leq \delta_{g}.
\]
Given a $\delta_{g}$, the tightest or optimal value of $\epsilon_{g}$ is defined as
\begin{equation}\label{eq:def_opt_comp}
    \textup{OPT}\left(\mathcal{M}(\vec{\gamma}), \delta_{g}\right) \equiv \inf\left\{\epsilon' \geq 0 \mid \mathcal{M}(\vec{\gamma}) \textup{ is $(\epsilon', \delta_{g})\textup{-}\mathtt{ind}$}\right\}.
\end{equation}

\textbf{Effective Mechanism }
For each $\gamma_{i}: \mathcal{X} \to \Delta(\mathcal{Y}_{i})$, let $\psi_{i}: \mathcal{S} \to \Delta(\mathcal{Y}_{i})$ denote the underlying density function of $\textbf{Pr}^{s}_{\gamma}[\cdot]$ given by Equation~(\ref{eq:prior_likelihood}). Specifically,
\[
\psi_{i}(y_{i} \mid s) = \int_{x \in \mathcal{X}} \gamma_{i}(y_{i} \mid x) P_{\theta}(x|s)\, dx.
\]
Additionally, let $\mathcal{N}_{i}(\cdot;\psi_{i}): \mathcal{S} \to \mathcal{Y}_{i}$ denote the corresponding \textit{effective mechanism} for all $i \in [k]$. Since $\textbf{Pr}^{s}_{\psi_{i}}[\cdot] = \textbf{Pr}^{s}_{\gamma_{i}}[\cdot]$ for all $i$, the mechanism $\mathcal{N}_{i}(\psi_{i})$ is $(\epsilon_{i}, \delta_{i})\textup{-}\mathtt{ind}$ if and only if $\mathcal{M}_{i}(s; \gamma_{i})$ is $(\epsilon_{i}, \delta_{i})$-DCP.
Let $\Psi_{i}$ denote the cumulative distribution function (CDF) associated with the density $\psi_{i}$ of each effective mechanism $\mathcal{N}_{i}$.
The composition of the effective mechanisms $\mathcal{N}(\cdot;\vec{\psi}): \mathcal{S} \to \vec{\mathcal{Y}}$, defined as $\mathcal{N}(s;\vec{\psi}) = \left(\mathcal{N}_{1}(s; \psi_{1}), \dots, \mathcal{N}_{k}(s; \psi_{k})\right)$.

If we consider the secret as a single-point dataset (see Appendix \ref{app:DCP_single_point_DP} for discussion of signle-point DP), $\textup{OPT}\left(\mathcal{N}(s;\vec{\psi}), \delta_{g}\right)$ can be calculated according to the optimal composition theory of DP \cite{murtagh2015complexity}. However, it is generally a \#P-complete problem \cite{murtagh2015complexity}.

The concept of effective mechanisms is introduced to illustrate the connection and the difference between the composition properties of DP and those of DCP, in terms of \textit{privacy loss random variable}.

\textbf{Privacy Loss Random Variable }
Privacy loss under the $(\epsilon, \delta)$ framework can be quantified through the \textit{privacy loss random variable} (PLRV). Let $\mathtt{p}$ and $\mathtt{q}$ be two probability density functions on the output space $\mathcal{Y}$. Define the PLRV as 
\[
\textbf{L}_{\mathtt{p}, \mathtt{q}}(y) = \log \frac{\mathtt{p}(y)}{\mathtt{q}(y)}.
\]
The PLRV, $\textbf{L}_{\mathtt{p}, \mathtt{q}}(Y)$ for $Y \sim \mathtt{p}$, captures the distinguishability between outputs under $\mathtt{p}$ and $\mathtt{q}$.
Let $\psi^{s}_{i}(\cdot)=\psi_{i}(\cdot|s)\in\mathcal{Y}_{i}$.
Then, for any $(s_{0}, s_{1})\in\mathcal{Q}$, the individual PLRV of each mechanism $\mathcal{M}_{i}(\gamma_{i})$ is given by $\textbf{L}_{\psi^{s_{0}}_{i}, \psi^{s_{1}}_{i}}(Y_{i})$ with $Y_{i}\sim \psi^{s_{0}}_{i}$.

If we consider a deterministic, one-to-one relationship between $S$ and $X$, DCP becomes DP.
The PLRV of the composition $\mathcal{M}(\vec{\gamma})$ coincides with the PLRV of the composition of efficient mechanisms $\mathcal{N}(\vec{\psi})$.
That is,
\begin{equation}\label{eq:PLRV_independent}
    \textbf{L}^{\textup{id}}_{\vec{\psi}_{s_{0}}, \vec{\psi}_{s_{1}}}(\vec{Y}) = \sum\nolimits_{i=1}^{k} \textbf{L}_{\psi_{i}^{s_{0}}, \psi_{i}^{s_{1}}}(Y_{i}),
\end{equation}
where $\vec{\psi}_{s}(\cdot)=\vec{\psi}(\cdot|s)=\prod^{k}_{i=1}\psi_{i}(\cdot|s)$. 
The fact that privacy loss can be either positive or negative introduces cancellations under composition, which are generally challenging to track.

As we will show shortly using \textit{copulas}, under a general $(\theta, \mathcal{G})$, the PLRV of $\mathcal{M}(\vec{\gamma})$ differs from \(\textbf{L}^{\textup{id}}{\vec{\psi}{s_{0}}, \vec{\psi}{s{1}}}(\vec{Y})\), even when the mechanisms $\{\mathcal{M}_{i}(\gamma_{i})\}^{k}_{i=1}$ are independent.

\textbf{Copula }
For any $s \in \mathcal{S}$, let $B(\cdot | s)$ denote the cumulative distribution function (CDF) and $b(\cdot | s)$ the density function of $\textbf{Pr}^{s}_{\vec{\gamma}}$ as defined in Equation~(\ref{eq:likelihood_composition}) of Appendix \ref{app:section_III_details}. To simplify, we assume the outputs $y_{i} \in \mathcal{Y}_{i}$ are univariate for all $i \in [k]$.
By Sklar's theorem~\cite{sklar1959fonctions}, there exists a \textit{copula} $C$ for every $s \in \mathcal{S}$ such that:
\[
B\left(Y_{1}, \dots, Y_{k} \mid s\right) = C\left(\Psi_{1}\left(Y_{1} \mid s\right), \dots, \Psi_{k}\left(Y_{k} \mid s\right) \mid s\right),
\]
where $C: [0,1]^{k} \to [0,1]$ is a multivariate CDF with uniform univariate marginals:
\[
C(u_{1}, \dots, u_{k}) = \textbf{Pr}\left[U_{1} \leq u_{1}, \dots, U_{k} \leq u_{k}\right],
\]
and $U_{i} \in [0,1]$ is a standard uniform random variable $\forall i \in [k]$.

Let $c$ denote the \textit{copula density function} of $C$. Then, the density function $b$ of $B$ can be expressed as:
\[
b_{s}\left(\vec{Y}\right) = c\left(\Psi_{1}\left(Y_{1} \mid s\right), \dots, \Psi_{k}\left(Y_{k} \mid s\right) \mid s\right) \prod_{i=1}^{k} \psi_{i}\left(Y_{i} \mid s\right).
\]
Given $\{\theta, \mathcal{G}\}$, the PLRV of the composition $\mathcal{M}(\vec{\gamma})$ is expressed as:
\[
\textbf{L}_{b_{s_{0}}, b_{s_{1}}}(\vec{Y}) = \textbf{L}^{\mathcal{G}}_{c_{s_{0}}, c_{s_{1}}}(\vec{Y}) + \textbf{L}^{\textup{id}}_{\psi_{s_{0}}, \psi_{s_{1}}}(\vec{Y}),
\]
where $\textbf{L}^{\mathcal{G}}_{c_{s_{0}}, c_{s_{1}}}(\vec{Y})$ depends on the copula, while $\textbf{L}^{\textup{id}}_{\psi_{s_{0}}, \psi_{s_{1}}}(\vec{Y})$, given by (\ref{eq:PLRV_independent}), is independent of the copula.

Since $\textbf{L}^{\textup{id}}_{\psi_{s_{0}}, \psi_{s_{1}}}(\vec{Y})$ is a sum of independent random variables, the \textit{privacy loss distribution} (PLD) of $\textbf{L}^{\textup{id}}_{\psi_{s_{0}}, \psi_{s_{1}}}(\vec{Y})$ corresponds to the \textit{convolution} of the individual PLDs of $\{\textbf{L}_{\psi_{i}^{s_{0}}, \psi_{i}^{s_{1}}}(Y_{i})\}_{i=1}^{k}$, analogous to composition in DP frameworks \cite{sommer2019privacy}. 
In contrast, $\textbf{L}^{\mathcal{G}}_{c_{s_{0}}, c_{s_{1}}}(\vec{Y})$, which depends on the CDFs $\{\Psi_{i}\}_{i=1}^{k}$, generally disrupts this property, preventing the PLD of $\textbf{L}_{b_{s_{0}}, b_{s_{1}}}(\vec{Y})$ from being a straightforward convolution of the individual PLDs.


Theorem~\ref{thm:order_composition_pre} establishes that the composition of individual DCP mechanisms remains DCP but lacks the graceful behavior of DP composition due to the loss of convolution, even for independent mechanisms. The full discussion and quantitative bounds are provided in Appendix~\ref{app:section_III_details}.


\begin{theorem}\label{thm:order_composition_pre}
Let $\mathcal{M}(\vec{\gamma})$ be the composition of $\{\mathcal{M}_{i}(\gamma_{i})\}_{i=1}^{k}$ be independent mechanisms. 
Let $\mathcal{N}_{i}(\psi_{i})$ be the effective mechanism of $\mathcal{M}_{i}(\gamma_{i})$, $\forall i$.
Suppose $\textbf{L}^{\mathcal{G}}_{c_{s_{0}}, c_{s_{1}}}(\vec{Y}) \neq 0$. Then, the following holds:
\begin{itemize}
    \item[(i)] $\mathcal{M}(\vec{\gamma})$ is $(\epsilon_{g}, \delta_{g})$-DCP for some $\epsilon_{g}\geq 0$ and $\delta_{g}\in[0,1]$.
    \item[(ii)] $\mathcal{M}(\vec{\gamma})$ is not generally $(\sum_{i=1}^{k} \epsilon_{i}, \sum_{i=1}^{k} \delta_{i})$-DCP.
    \item[(iii)] $\textup{OPT}\left(\mathcal{N}(s;\vec{\psi}), \delta_{g}\right)$ strictly \textit{underestimates} the privacy risk of $\mathcal{M}(\vec{\gamma})$.
\end{itemize}
\end{theorem}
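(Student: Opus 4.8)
The plan is to route the entire argument through the privacy-profile (hockey-stick) characterization of $(\epsilon,\delta)\textup{-}\mathtt{ind}$ together with the additive PLRV decomposition already recorded in the text, namely $\textbf{L}_{b_{s_{0}}, b_{s_{1}}}(\vec{Y}) = \textbf{L}^{\mathcal{G}}_{c_{s_{0}}, c_{s_{1}}}(\vec{Y}) + \textbf{L}^{\textup{id}}_{\psi_{s_{0}}, \psi_{s_{1}}}(\vec{Y})$. I would first invoke the standard fact that for output densities $\mathtt{p},\mathtt{q}$ with PLRV $\textbf{L}_{\mathtt{p},\mathtt{q}}$, the quantity $\sup_{\mathcal{W}}(\textbf{Pr}_{\mathtt{p}}[\mathcal{W}] - e^{\epsilon}\textbf{Pr}_{\mathtt{q}}[\mathcal{W}])$ equals the profile $\delta(\epsilon) = \mathbb{E}_{Y\sim\mathtt{p}}[\max\{1 - e^{\epsilon - \textbf{L}_{\mathtt{p},\mathtt{q}}(Y)},\,0\}]$, which is monotone decreasing in $\epsilon$ and valued in $[0,1]$. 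Applying this to the pair $b_{s_{0}},b_{s_{1}}$ (for the true composition) and to the product coupling $\prod_{i}\psi_{i}^{s_{0}},\prod_{i}\psi_{i}^{s_{1}}$ (for the composition of effective mechanisms) reduces all three claims to statements about these two profiles, for which the decomposition above is the central lever.

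For part (i), I would observe that $\mathcal{M}(\vec{\gamma})$ is itself a bona fide randomized map $\mathcal{X}\to\vec{\mathcal{Y}}$, so the DCP definition applies to it verbatim: for every fixed $\epsilon_{g}\geq 0$ the value $\delta_{g}(\epsilon_{g}) = \sup_{(s_{0},s_{1})\in\mathcal{Q}}\mathbb{E}_{\vec Y\sim b_{s_{0}}}[\max\{1 - e^{\epsilon_{g} - \textbf{L}_{b_{s_{0}},b_{s_{1}}}(\vec Y)},\,0\}]$ lies in $[0,1]$ since each term is bounded by total variation, so $(\epsilon_{g},\delta_{g}(\epsilon_{g}))$ is a valid pair and the DCP class is closed under composition, unlike PP. For part (ii) the plan is to exhibit a counterexample rather than a universal bound. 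DP basic composition delivers $(\sum_{i}\epsilon_{i},\sum_{i}\delta_{i})$ precisely because under a deterministic one-to-one secret--dataset map the composite PLRV collapses to the convolution $\textbf{L}^{\textup{id}}$ of independent terms; I would instantiate $(\theta,\mathcal{G})$ and a family of (e.g.\ Gaussian) mechanisms on a shared $x$ so that $P_{\theta}(x\mid s)$ induces a copula whose loss $\textbf{L}^{\mathcal{G}}_{c_{s_{0}},c_{s_{1}}}$ is nonzero and positively aligned with $\textbf{L}^{\textup{id}}$. Conditioning on $s$ leaves $x$ random and correlates the outputs, fattening the upper tail of $\textbf{L}_{b_{s_{0}},b_{s_{1}}}$ beyond the convolution, so that $\delta_{g}(\sum_{i}\epsilon_{i}) > \sum_{i}\delta_{i}$ and the $(\sum_{i}\epsilon_{i},\sum_{i}\delta_{i})$-DCP guarantee fails.

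For part (iii) I would compare the two profiles at the common target $\delta_{g}$. The value $\textup{OPT}(\mathcal{N}(s;\vec{\psi}),\delta_{g})$ is governed by the product coupling $\prod_{i}\psi_{i}$, i.e.\ by $\textbf{L}^{\textup{id}}$ alone, whereas $\textup{OPT}(\mathcal{M}(\vec{\gamma}),\delta_{g})$ is governed by the true joint $b_{s}$, i.e.\ by $\textbf{L}^{\mathcal{G}}_{c}+\textbf{L}^{\textup{id}}$. The key observation is that $b_{s}$ and $\prod_{i}\psi_{i}^{s}$ share identical one-dimensional marginals $\psi_{i}^{s}$ but differ by the nontrivial copula. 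I would argue that the extra dependence supplies distinguishing events, e.g.\ product rectangles $\prod_{i}\mathcal{W}_{i}$, on which $\textbf{Pr}^{s_{0}}_{\vec{\gamma}} - e^{\epsilon}\textbf{Pr}^{s_{1}}_{\vec{\gamma}}$ strictly exceeds its product-coupling value, so the supremum defining the profile cannot decrease and is strict when $\textbf{L}^{\mathcal{G}}_{c}\neq 0$; inverting the monotone profiles at $\delta_{g}$ then gives $\textup{OPT}(\mathcal{M}(\vec{\gamma}),\delta_{g}) > \textup{OPT}(\mathcal{N}(s;\vec{\psi}),\delta_{g})$, the claimed strict underestimation.

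The main obstacle is exactly this strict inequality under the bare hypothesis $\textbf{L}^{\mathcal{G}}_{c_{s_{0}}, c_{s_{1}}}\neq 0$: a nonzero copula loss could in principle cancel rather than reinforce in the operative tail, so monotone domination of profiles cannot simply be asserted. I would handle this by maximizing jointly over the test set $\mathcal{W}$, the adjacent pair $(s_{0},s_{1})\in\mathcal{Q}$, and the sign of the loss, showing that a non-degenerate copula strictly enlarges the set of achievable $(\textbf{Pr}^{s_{0}},\textbf{Pr}^{s_{1}})$ pairs on rectangle events beyond the Fr\'echet-independent coupling, whence the worst-case $\epsilon$ strictly grows. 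This is where the genuinely copula-specific work lives, and possibly an auxiliary positive-dependence condition carried in the appendix version of the statement.
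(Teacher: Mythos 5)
Your parts (i) and (ii) are serviceable but follow different routes from the paper, and (ii) is only a sketch. For (i), the paper does not argue boundedness of the hockey-stick profile directly; it goes through hypothesis testing (Lemmas \ref{lemma:existence_fDP} and \ref{lemma:composition_of_cp_is_cp}): the Neyman--Pearson lemma gives a UMP likelihood-ratio test for distinguishing $s_{0}$ from $s_{1}$ under the composition, and Propositions 2.4 and 2.12 of the $f$-DP framework convert the resulting trade-off function $f$ into a valid pair $(\epsilon, \delta(\epsilon))$ with $\delta(\epsilon)=1+f^{*}(-e^{\epsilon})$. Your profile-boundedness argument reaches the same (nearly trivial) conclusion more cheaply, so no objection there. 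For (ii), the paper does not construct a counterexample; it characterizes exactly when basic composition survives a nonzero copula loss, via the distorted profile $F(D;\epsilon)\equiv \mathbb{E}_{\textbf{L}\sim\textup{PLD}^{\mathtt{in}}}\bigl[\bigl(1-e^{\epsilon-\textbf{L}}/D\bigr)^{+}\bigr]$: basic composition holds if and only if $\textbf{L}^{\mathcal{G}}\leq \log\hat{D}$ (respectively $\log D^{\dagger}$) for thresholds determined by $\sum_{i}\epsilon_{i}$ and $\sum_{i}\delta_{i}$, and a nonzero $\textbf{L}^{\mathcal{G}}$ cannot guarantee this. Your counterexample plan is a legitimate way to prove a ``not generally'' claim, but you never exhibit the instance, so as written (ii) is an intention, not a proof.

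The genuine gap is in part (iii), and it is precisely the obstacle you flag yourself. Under the bare hypothesis $\textbf{L}^{\mathcal{G}}_{c_{s_{0}},c_{s_{1}}}(\vec{Y})\neq 0$, your rectangle-event argument cannot deliver the strict inequality: dependence induced by the shared dataset can make the outputs redundant (suppressing distinguishability) just as easily as it can reinforce them, so ``the supremum defining the profile cannot decrease'' is not a valid pointwise claim about test sets, and your proposed fallback---an auxiliary positive-dependence condition---is not available, because the theorem assumes none. The paper closes exactly this hole with a different device, a Blackwell-informativeness comparison between the true joint $\mathtt{D}_{[2]}$ and the product-of-marginals coupling $\mathtt{D}_{1,2}$ (Proposition \ref{prop:blackwell_ording}). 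Concretely, Lemma \ref{lemma:KL_divergence_and_CEL} shows that the Bayes posterior built from $\mathtt{D}_{[2]}$ has expected cross-entropy loss no larger than the mismatched posterior built from $\mathtt{D}_{1,2}$, the gap being a KL divergence, hence strict unless the two joints coincide, i.e., unless $\textbf{L}^{\mathcal{G}}=0$; Lemma \ref{lemma:blackwell_informative} converts this loss ordering into pointwise domination of trade-off functions, $T(\mathtt{D}_{[2]}(\cdot\mid s_{0}),\mathtt{D}_{[2]}(\cdot\mid s_{1}))\leq T(\mathtt{D}_{1,2}(\cdot\mid s_{0}),\mathtt{D}_{1,2}(\cdot\mid s_{1}))$, via Blackwell's theorem; and convex conjugation ($\delta(\epsilon)=1+f^{*}(-e^{\epsilon})$, with conjugates of strictly ordered trade-off functions strictly ordered) transfers this to $\underline{\textbf{opt}}_{\delta_{g}}<\textbf{opt}_{\delta_{g}}$ and $\underline{\textbf{dt}}_{\epsilon_{g}}<\textbf{dt}_{\epsilon_{g}}$. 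Note that the sign or direction of the dependence never enters this argument: strictness is inherited from the KL equality condition, not from any tail-reinforcement assumption, which is what lets the paper dispense with the positive-dependence hypothesis you would need. Without a substitute for this informativeness lemma, your part (iii) does not go through.
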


\section{Inverse Composition}

We propose \textit{inverse composition} (IC) to address challenges in privacy composition by formulating an optimization framework involving two decision-makers: a \textit{leader} and a \textit{follower}.
The leader employs a \textit{privacy strategy} $\alpha: \mathcal{S} \mapsto \Delta(\mathcal{Y}_\alpha)$, where $\mathcal{Y}_{\alpha}$ is the output space, to achieve any of two tasks:
\begin{enumerate}
    \item \textbf{Task-1:} Add a new mechanism $\mathcal{M}_{k+1}(\gamma_\alpha):\mathcal{X}\mapsto \mathcal{Y}_{\alpha}$ to $\mathcal{M}(\vec{\gamma})$ and ensure the new composition, denoted by $\mathcal{M}(\vec{\gamma}, \gamma_\alpha):\mathcal{X}\mapsto \vec{\mathcal{Y}}\times\mathcal{Y}_{\alpha}$, satisfies a target $(\epsilon_g, \delta_g)$-DCP constraint, where $\gamma_{\alpha}$ is determined by $\alpha$.
    \item \textbf{Task-2:} Determine the global privacy parameters $(\epsilon_g, \delta_g)$ of the existing composition $\mathcal{M}(\vec{\gamma}) = (\mathcal{M}_1(\gamma_1), \dots, \mathcal{M}_k(\gamma_k))$
    without adding a new mechanism.
\end{enumerate}
By observing $\vec{y}_{\alpha}=(\vec{y}, y_{\alpha})\in\vec{\mathcal{Y}}_{\alpha}\equiv\vec{\mathcal{Y}}\times \mathcal{Y}_{\alpha}$, the follower adopts a \textit{response strategy} $\pi: \vec{\mathcal{Y}}_{\alpha} \mapsto \Delta(\mathcal{S})$, reacting optimally to the leader's decisions by obtaining a probability distribution (inference) over the secrets.
In both tasks, the existing $k$ mechanisms may exhibit arbitrary dependencies, captured by $\mathcal{C}$. Additionally, the mechanism $\mathcal{M}_{k+1}(\gamma_{\alpha})$ can either intrinsically depend on the existing $k$ mechanisms or introduce further dependencies through $\alpha$, e.g., via \textit{copula perturbation} (see Appendix F).
In this section, we focus on the case when $\gamma_{\alpha}$ s independent of $\vec{\gamma}$.

The idea of IC is as follows. 
The privacy strategy $\alpha$ introduces an additional PLRV $\textbf{L}_{\alpha_{s_{0}}, \alpha_{s_{1}}}$, resulting in the total PLRV:
\begin{equation}\label{eq:PLRV_perturb}
    \textbf{L}^{\alpha}_{s_{0}, s_{1}}(\vec{Y}, Y_{\alpha}) = \textbf{L}_{\hat{b}_{s_{0}}, \hat{b}_{s_{1}}}(\vec{Y}) + \textbf{L}_{\alpha_{s_{0}}, \alpha_{s_{1}}}(Y_{\alpha}),
\end{equation}
where $\textbf{L}_{\hat{b}_{s_{0}}, \hat{b}_{s_{1}}}(\vec{Y})$, given by (\ref{eq:PLRV_dependent}) of Appendix \ref{app:section_III_details}, is fixed but unknown. 
Thus, changes in $\textbf{L}^{\alpha}{s_{0}, s_{1}}$ are driven by $\textbf{L}_{\alpha_{s_{0}}, \alpha_{s_{1}}}$, with $\alpha$ also influencing the response strategy $\pi$.
Traditional \textit{forward} approaches rely on generally challenging worst-case proofs, including sensitivity calculations (NP-hard \cite{xiao2008output}) required for the proper distributions of $\alpha$, and combinatorial analysis to compute $\epsilon_{g}$ and $\delta_{g}$ (\#P-complete even for independent composition \cite{murtagh2015complexity}). 
In contrast, IC designs $\alpha$ to elicit a follower response $\pi$ with certain properties. The key insight is that if $\alpha$ induces a $\pi$ satisfying these properties, the corresponding $\alpha$ will lead to $\textbf{L}^{\alpha}_{s_{0}, s_{1}}$ that \textit{automatically} achieves the desired privacy guarantees, without requiring explicit constraints on $\alpha$.

By doing so, IC ensures that the resulting privacy strategy $\alpha$ inherently fulfills \textbf{Task-1} and \textbf{Task-2} without requiring explicit constraints on $\alpha$, thereby avoiding the need for worst-case proofs.
We formulate IC as a constrained optimization problem.
The leader and the following choose $\alpha$ and $\pi$ by minimizing an expected loss based on any \textit{strictly proper scoring rule} (SPSR) $\mathcal{L}$ taking the form \cite{savage1971elicitation}
\begin{equation}
    \begin{aligned}
        \mathcal{L}(\pi, s,\vec{y}_{\alpha}) \equiv \Phi(\pi) + \Upsilon(\pi(s|\vec{y}_{\alpha})),
    \end{aligned}
\end{equation}
where $\Phi(\cdot)$ is strict convex, and $\Upsilon$ is some pointwise function.
In addition, for any valid $p: \vec{\mathcal{Y}}_{\alpha} \to \Delta(\mathcal{S})$, the expected loss satisfies 
\[
\mathbb{E}_{\alpha, \vec{\gamma}}^p[\mathcal{L}(p, S, \vec{Y}_\alpha)] \leq \mathbb{E}_{\alpha, \vec{\gamma}}^p[\mathcal{L}(\pi, S, \vec{Y}_\alpha)],
\]
where the expectation captures randomness in $S$ and $\vec{Y}_\alpha$.
It is well-known that $\pi^{*}=p$ is the \textit{unique} minimizer of $\mathbb{E}_{\alpha, \vec{\gamma}}^p[\mathcal{L}(\pi, S, \vec{Y}_\alpha)]$ \cite{gneiting2007strictly}.

For $\tau_{g}\geq 1$ and $\delta_{g}\in(0,1]$, define:
\[
\Pi[\tau_{g}, \delta_{g}] \equiv \left\{\pi \middle| 
\begin{aligned}
    &\pi(s \mid \vec{y}_{\alpha}) \geq \tau^{-1}_{g} P_{\theta}(s), \quad \forall s, \vec{y}_{\alpha}, \\
    &\mathbb{E}_{\pi}\left[\frac{\pi(S \mid \vec{y}_{\alpha})}{P_{\theta}(S)}\right] \leq \delta_{g}\tau_{g}, \quad \forall \vec{y}_{\alpha}
\end{aligned}
\right\},
\]
where the expectation captures randomness in $S$ by $\pi(\cdot \mid \vec{y}_{\alpha})$.
%
%
Additionally, for $\delta_{g}=0$, define:
\[
\Pi[\tau_{g}, 0] \equiv \left\{\pi \middle| 
    \tau^{-1}_{g} P_{\theta}(s) \leq \pi(s \mid \vec{y}_{\alpha}) \leq \tau_{g} P_{\theta}(s), \quad \forall s, \vec{y}_{\alpha}
\right\}.
\]
\textbf{Convex Optimization}
Given $\tau_{g} \geq 1$ and $\delta_{g} \in [0,1]$, consider the following two optimization problems:
\begin{equation}\tag{\texttt{Task-1}}\label{eq:IC_implement}
    \min_{\alpha, \pi} \mathbb{E}^{\mu}_{\alpha,\vec{\gamma}}\left[\mathcal{L}(\pi, S, \vec{Y}_{\alpha})\right], \text{ s.t. } \pi \in \Pi[\tau_{g}, \delta_{g}],
\end{equation}
\begin{equation}\tag{\texttt{Task-2}}\label{eq:IC_tight}
    \min_{\tau_{g}, \alpha, \pi} \mathbb{E}^{\mu}_{\alpha,\vec{\gamma}}\left[\mathcal{L}(\pi, S, \vec{Y}_{\alpha})\right], \text{ s.t. } \pi \in \Pi[\tau_{g}, \delta_{g}].
\end{equation}
The posterior $\mu$ need not be convex in $\alpha$. However, if we choose an SPSR $\hat{\mathcal{L}}(\pi,s,\vec{y}_{\alpha})
= -\log\bigl(\pi(s\mid\vec{y}_{\alpha})\bigr)\,P_{\alpha,\vec\gamma}(\vec{y}_{\alpha})$, where $P_{\alpha,\vec{\gamma}}(\vec{y}_{\alpha}) = \sum_{s}\alpha(y_{\alpha}\mid s)\,P_{\theta}(s)$, then the factor $P_{\alpha,\vec{\gamma}}(\vec{y}_{\alpha})$ cancels inside the expectation $\mathbb{E}^{\mu}_{\alpha,\vec{\gamma}}[\mathcal{L}]$, leaving only the product $\alpha(y_{\alpha}\mid s)\,P_{\theta}(s)$. Moreover, $\Pi[\tau_{g},\delta_{g}]$ is convex in $\pi$ and $\tau_{g}$ and does not depend on $\alpha$. Hence, with $\mathcal{L}=\hat{\mathcal{L}}$, both (\ref{eq:IC_implement}) and (\ref{eq:IC_tight}) reduce to convex programs.
%

\begin{theorem}\label{thm:inverse_composition}
Given $\{\theta,\mathcal{G}\}$, let $\{\mathcal{M}_{i}(\gamma_{i})\}_{i=1}^{k}$ be $k$ mechanisms, each satisfying $(\epsilon_{i}, \delta_{i})$-DCP. 
Let $\varepsilon(\tau_{g}, \theta)\equiv \log\left( 1 + \frac{\tau_{g}-1}{P^{*}_{\theta}}\right)$ with $P^{*}_{\theta}=\min_{s}P_{\theta}(s)> 0$.
When $\delta_{g} \neq 0$, let $\alpha^{*}$ be a solution to (\ref{eq:IC_implement}) and $\tau^{*}_{g}$ a solution to (\ref{eq:IC_tight}). Then, for any $\delta_{g} \in [0,1]$, the following hold:
\begin{itemize}
    \item[(i)] (\textbf{Task-1}) For any $\tau_{g} \geq 1$, if $\Pi[\tau_{g}, \delta_{g}] \neq \emptyset$, the composition $\mathcal{M}(\vec{\gamma}, \gamma_{\alpha^{*}})$ is $(\varepsilon(\tau_{g}, \theta), \delta_{g})$-DCP.

    \item[(ii)] (\textbf{Task-2}) The composition $\mathcal{M}(\vec{\gamma})$ is $(\varepsilon(\tau^{*}_{g}, \theta), \delta_{g})$-DCP.

    \item[(iii)] When $\delta_{g} = 0$, $\mathcal{M}(\vec{\gamma}, \gamma_{\alpha^{\dagger}})$ is $\varepsilon(\tau_{g},\theta)$-DCP if $\Pi[\tau_{g}, 0] \neq \emptyset$, for $\tau_{g}\geq 1$, and $\alpha^{\dagger}$ is a solution to (\ref{eq:IC_implement}). Furthermore, $\mathcal{M}(\vec{\gamma})$ is $\varepsilon(\tau^{\dagger}_{g},\theta)$-DCP if and only if $\tau^{\dagger}_{g}$ is a solution to (\ref{eq:IC_tight}).
\end{itemize}
\end{theorem}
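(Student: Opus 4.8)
The plan is to reduce each of the three claims to a statement about the Bayesian posterior induced by the leader's strategy, exploiting the defining property of the strictly proper scoring rule. For a fixed $\alpha$, the unique unconstrained minimizer of $\mathbb{E}^{\mu}_{\alpha,\vec{\gamma}}[\mathcal{L}(\pi,S,\vec{Y}_{\alpha})]$ over $\pi$ is the true posterior $\mu(s\mid\vec{y}_{\alpha})=f_{s}(\vec{y}_{\alpha})P_{\theta}(s)/P_{\alpha,\vec{\gamma}}(\vec{y}_{\alpha})$, where $f_{s}(\cdot)=f(\cdot\mid s)$ denotes the likelihood of the composition $\mathcal{M}(\vec{\gamma},\gamma_{\alpha})$ \cite{gneiting2007strictly}. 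First I would establish that at a solution $(\alpha^{*},\pi^{*})$ of (\ref{eq:IC_implement}) the follower reports exactly this posterior, so that the constraint $\pi^{*}\in\Pi[\tau_{g},\delta_{g}]$ becomes a constraint on the realized posterior itself. This uses properness together with convexity of $\Pi[\tau_{g},\delta_{g}]$ and its independence from $\alpha$: whenever $\Pi[\tau_{g},\delta_{g}]\neq\emptyset$, the leader can steer $\alpha$ so that the genuine posterior is both feasible and loss-minimizing, forcing $\pi^{*}=\mu(\cdot\mid\cdot)\in\Pi[\tau_{g},\delta_{g}]$.

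The second step converts membership in $\Pi[\tau_{g},\delta_{g}]$ into an $(\epsilon,\delta)$-indistinguishability bound. Writing $r_{s}(\vec{y}_{\alpha})=\pi^{*}(s\mid\vec{y}_{\alpha})/P_{\theta}(s)=f_{s}(\vec{y}_{\alpha})/P_{\alpha,\vec{\gamma}}(\vec{y}_{\alpha})$, the basic identity is $f_{s}(\vec{y}_{\alpha})/f_{s'}(\vec{y}_{\alpha})=r_{s}/r_{s'}$. The lower-bound constraint $\pi^{*}(s\mid\vec{y}_{\alpha})\geq\tau_{g}^{-1}P_{\theta}(s)$ gives $r_{s'}\geq\tau_{g}^{-1}$; summing the same bound over all secrets and using $\sum_{s}\pi^{*}(s\mid\vec{y}_{\alpha})=\sum_{s}P_{\theta}(s)=1$ yields $\pi^{*}(s\mid\vec{y}_{\alpha})\leq 1-\tau_{g}^{-1}(1-P_{\theta}(s))$, hence $r_{s}/r_{s'}\leq 1+(\tau_{g}-1)/P_{\theta}(s)\leq 1+(\tau_{g}-1)/P^{*}_{\theta}=e^{\varepsilon(\tau_{g},\theta)}$. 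This produces the multiplicative factor for (i) and (ii). The second, expectation constraint $\mathbb{E}_{\pi^{*}}[\pi^{*}(S\mid\vec{y}_{\alpha})/P_{\theta}(S)]\leq\delta_{g}\tau_{g}$ is then used to bound the hockey-stick quantity $\int\max(0,f_{s}(\vec{y}_{\alpha})-e^{\varepsilon(\tau_{g},\theta)}f_{s'}(\vec{y}_{\alpha}))\,d\vec{y}_{\alpha}$ by $\delta_{g}$, supplying the additive slack and completing $(\varepsilon(\tau_{g},\theta),\delta_{g})$-DCP for $\alpha^{*}$. Part (ii) is the same argument with $\tau_{g}$ itself optimized in (\ref{eq:IC_tight}), the minimal feasible $\tau^{*}_{g}$ certifying the existing composition $\mathcal{M}(\vec{\gamma})$.

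For (iii) with $\delta_{g}=0$ I would use the two-sided set $\Pi[\tau_{g},0]$: the bounds $\tau_{g}^{-1}P_{\theta}(s)\leq\pi^{*}(s\mid\vec{y}_{\alpha})\leq\tau_{g}P_{\theta}(s)$, or the lower bound alone via the normalization step above, give the pointwise ratio bound $f_{s}/f_{s'}\leq e^{\varepsilon(\tau_{g},\theta)}$ everywhere, so the hockey-stick quantity vanishes and $\mathcal{M}(\vec{\gamma},\gamma_{\alpha^{\dagger}})$ is $\varepsilon(\tau_{g},\theta)$-DCP. The ``if and only if'' for Task-2 then follows from strict monotonicity of $\varepsilon(\cdot,\theta)$ in $\tau_{g}$: problem (\ref{eq:IC_tight}) returns the smallest $\tau^{\dagger}_{g}$ for which the realized posterior ratio stays within $[\tau_{g}^{-1},\tau_{g}]$, which equals the true extremal likelihood ratio of $\mathcal{M}(\vec{\gamma})$; the certified exponent and the realized one therefore coincide, giving both necessity and sufficiency.

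The main obstacle I anticipate is the first step: showing that the solution of the constrained leader-follower program forces the realized posterior, and not merely the decision variable $\pi$, into $\Pi[\tau_{g},\delta_{g}]$. This requires verifying that constrained elicitation does not break properness, i.e.\ that the constrained minimizer still coincides with the genuine posterior induced by $\alpha^{*}$, and that feasibility of $\Pi[\tau_{g},\delta_{g}]$ alone lets the leader realize such an $\alpha^{*}$. A secondary technical point is the measure-theoretic passage from the second-moment-type expectation constraint to the additive hockey-stick bound, where one must control the set $\{f_{s}>e^{\varepsilon(\tau_{g},\theta)}f_{s'}\}$ and take the supremum over adjacent pairs $(s,s')\in\mathcal{Q}$ uniformly.
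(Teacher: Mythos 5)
Your proposal shares the paper's overall architecture: strict properness identifies the follower's optimal report with the true posterior induced by $\alpha$, and membership of that posterior in $\Pi[\tau_{g},\delta_{g}]$ is then converted into $(\varepsilon(\tau_{g},\theta),\delta_{g})$-indistinguishability, arriving at the same constant $e^{\varepsilon(\tau_{g},\theta)}=1+(\tau_{g}-1)/P^{*}_{\theta}$. Where you diverge is the conversion step. The paper never uses your normalization argument: it obtains the upper side of the posterior-to-prior ratio only \emph{probabilistically}, by applying Markov's inequality to the expectation constraint $\mathbb{E}_{\pi}[\pi(S\mid\vec{y}_{\alpha})/P_{\theta}(S)]\leq\delta_{g}\tau_{g}$ (Lemma~\ref{lemma:probabilistic_representation1}); it then turns the resulting two-sided posterior bound into a likelihood-ratio bound via a two-point-prior argument in the style of Dwork's semantic-security claim (Lemma~\ref{lemma:semantic_secure_1}); and finally it converts the statement ``ratio bounded except with probability $\delta_{g}$'' into $(\epsilon,\delta)$-indistinguishability with a Bayesian-DP-type lemma (Lemma~\ref{lemma:probability_to_deterministic}). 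In that chain the additive $\delta_{g}$ is produced by the Markov step.

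Your replacement --- the lower bound plus $\sum_{s}\pi(s\mid\vec{y}_{\alpha})=1$ gives $\pi(s\mid\vec{y}_{\alpha})\leq 1-\tau_{g}^{-1}(1-P_{\theta}(s))$, hence an \emph{everywhere} ratio bound $f_{s}/f_{s'}\leq e^{\varepsilon(\tau_{g},\theta)}$ --- has two problems. First, it is internally inconsistent with your own subsequent step: if the ratio bound holds pointwise everywhere, the hockey-stick integral is identically zero, so your argument actually proves $(\varepsilon(\tau_{g},\theta),0)$-DCP and the expectation constraint (and $\delta_{g}$ itself) has nothing left to do; the ``additive slack'' step you describe is vacuous. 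That the pointwise lower-bound constraint alone would render $\delta_{g}$ irrelevant should have been a red flag. Second, and decisively, the normalization step is valid only for discrete $\mathcal{S}$. The paper works with densities (it explicitly says it focuses on the continuous case): a conditional density satisfying $\pi(s\mid\vec{y}_{\alpha})\geq\tau_{g}^{-1}P_{\theta}(s)$ everywhere can still spike arbitrarily high at individual secrets (take $\tau_{g}^{-1}P_{\theta}$ plus a narrow bump carrying mass $1-\tau_{g}^{-1}$), so excluding a null set from the normalization integral yields no pointwise upper bound on $\pi(s\mid\vec{y}_{\alpha})/P_{\theta}(s)$, and your ratio bound collapses. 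This is exactly why the paper routes the upper bound through the expectation constraint and Markov's inequality. Two points in your favor: the obstacle you flag in step one --- that the constrained optimum must force the \emph{realized} posterior, not merely the reported $\pi$, into $\Pi[\tau_{g},\delta_{g}]$ --- is a genuine gap, but the paper glosses over it exactly as you do, simply asserting $\pi^{*}=\mu$; and the ``only if'' half of part (iii) is likewise not seriously argued in the paper's proof, so your sketch there is no weaker than the original.
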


Theorem~\ref{thm:inverse_composition} establishes that the privacy strategy $\alpha$ can be designed via IC to achieve \textbf{Task-1} and \textbf{Task-2} by eliciting an optimal response strategy $\pi$. When $\delta_{g} \neq 0$, the optimal $\pi$ satisfying the conditions of $\Pi[\tau_{g}, \delta_{g}]$ is sufficient for the corresponding $\alpha$ to achieve $(\varepsilon(\tau_{g},\theta), \delta_{g})$-DCP. In \textbf{Task-1}, $\textbf{L}_{\alpha_{s_{0}}, \alpha_{s_{1}}}$ is constructed to ensure that $\textbf{L}^{\alpha}_{s_{0}, s_{1}}$ satisfies $(\varepsilon(\tau_{g},\theta), \delta_{g})$-DCP without wasting privacy budget. In \textbf{Task-2}, $\textbf{L}_{\alpha_{s_{0}}, \alpha_{s_{1}}} = 0$, and the optimal $\epsilon^{*}_{g}=\varepsilon(\tau^{*}_{g},\theta)$ represents the tightest privacy parameter for $\textbf{L}^{\alpha}_{s_{0}, s_{1}}$.
When $\delta_{g} = 0$, the optimal $\pi$ satisfying the conditions of $\Pi[\tau_{g}, 0]$ is sufficient for $\alpha$ to achieve $(\varepsilon(\tau_{g},\theta), 0)$-DCP. 
However, these conclusions assume an ideal scenario where the convex optimization problems can be solved exactly. Since both (\ref{eq:IC_implement}) and (\ref{eq:IC_tight}) are infinite-dimensional, they cannot be solved directly.
The exploration of algorithmic and computational methods to address these challenges is left as future work.

\subsection{Numerical Experiments}

\begin{figure}[htb]
    \centering
    \begin{subfigure}[b]{0.23\textwidth}
        \includegraphics[width=\textwidth]{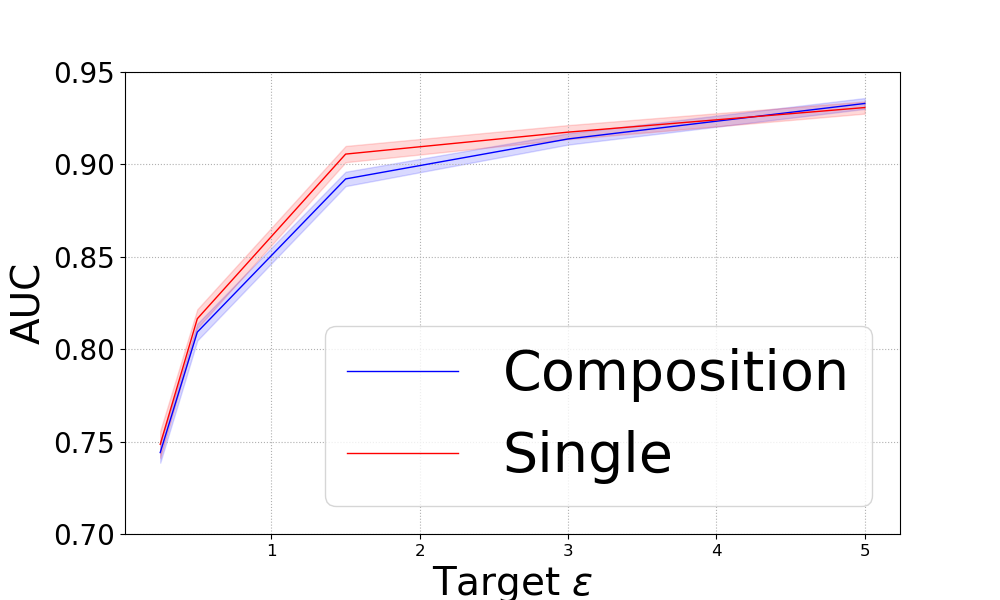}
        \caption{{\small Independent Perturbation}}
        \label{fig:fig1}
    \end{subfigure}
    \begin{subfigure}[b]{0.23\textwidth}
        \includegraphics[width=\textwidth]{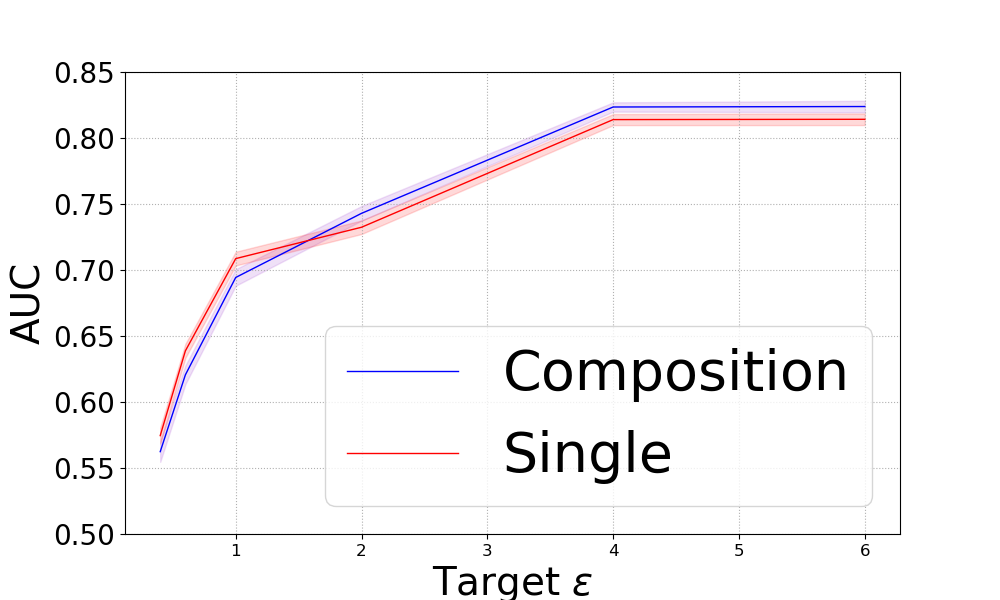}
        \caption{{\small Copula Perturbation}}
        \label{fig:fig2}
    \end{subfigure}
    \caption{\small (a) shows five tests with $\delta_{g} = \delta_{i} = 0.02$ and \(\epsilon_{g} = \{0.25, 0.25, 1.5, 3, 5\}\), corresponding to \(\epsilon_{i} = \{0.05, 0.1, 0.3, 0.6, 1\}\) for \(i \in \{1,2, 3, 4, 5\}\).
(b) shows six tests with $\delta_{g} = \delta_{i} = 0.02$ and \(\epsilon_{g} = \{0.4, 0.6, 1, 2, 4, 6\}\), corresponding to \(\epsilon_{i} = \{0.05, 0.1, 0.18, 0.3, 0.6, 1\}\) for \(i \in \{1, 2, 3, 4, 5\}\).
    }
    \label{fig:six_figures}
\end{figure}

In this section, we validate our IC approach through numerical experiments. 
The constrained optimization problem (inverse composition) is solved using the penalty method. 
The dataset used is from the 2016 iDASH Workshop on Privacy and Security \cite{tang2016idash}, derived from the 1000 Genomes Project \cite{10002015global}, consisting of SNVs from 800 individuals on Chromosome 10. All experiments were conducted on an NVIDIA A40 48G GPU using PyTorch. Detailed experimental settings are provided in Appendix M.

In the first experiment, we demonstrate how IC optimally adds a fifth mechanism, randomized by \(\alpha\) using output perturbation, so that the composition of five DCP mechanisms achieves the target privacy budget \((\epsilon_{g}, \delta_{g})\). 
In the second experiment, we show how IC performs \textit{copula perturbation} (see Appendix \ref{app:gaussian_copula_per}) to achieve the target privacy budget \((\epsilon_{g}, \delta_{g})\) for five DCP mechanisms. Copula perturbation using \(\alpha\): (1) manipulates the dependencies between two chosen mechanisms (e.g., \(\mathcal{M}_{1}(\gamma_{1})\) and \(\mathcal{M}_{2}(\gamma_{2})\)), and (2) transforms \(\textbf{L}_{\alpha_{s_{0}}, \alpha_{s_{1}}}\) in (\ref{eq:PLRV_perturb}) into a copula PLRV with \(\gamma_{1}\) and \(\gamma_{2}\) as marginals.

We refer to these experiments as \textit{independent perturbation} and \textit{copula perturbation}, respectively. The validity of IC is assessed by comparing the privacy loss of the composed mechanisms to that of a single \((\epsilon_{g}, \delta_{g})\)-DCP mechanism under the same membership inference attack. 
Privacy loss is evaluated through the Area Under the Curve (AUC) of membership inference attacks, which audit the extent of data leakage in machine learning models \cite{yeom2018privacy,zanella2023bayesian}. As we can see from Fig.~\ref{fig:six_figures}, the AUC under IC-based composition is comparable to that of the single mechanism, confirming that IC efficiently implements the composition.

\section{Conclusion}

We introduced Differential Confounding Privacy (DCP), extending differential privacy to settings involving complex relationships between secrets and datasets. While DCP mechanisms are composable, they lack the graceful composition properties of DP. To address this, we proposed the Inverse Composition (IC) framework, which enables systematic design of privacy strategies. Future work includes algorithmic development of IC and exploration of its approximations.

\bibliographystyle{IEEEtran}
\bibliography{references}

\newpage
\appendix

\subsection{Table of Notations}\label{app:notation}

\begin{table}[h]
\centering
\caption{Notations for Section II}
\begin{tabular}{ll}
\hline
Symbol & Description \\
\hline
$S$, $s$, $\mathcal{S}$& random variable (RV), sample, and space of secret\\
$X$, $x$, $\mathcal{X}$& RV, sample, and space of dataset\\
$Y$, $y$, $\mathcal{Y}$& RV, sample, and space of output\\
$\theta\in\Delta(\mathcal{S}\times\mathcal{X})$ & joint probability of secret and dataset \\
$\mathcal{G}:\mathcal{X}\mapsto \mathcal{S}$ & a randomized mapping derived from $\theta$\\
$P_{\theta}(s,x)$ &  joint density of $s$ and $x$\\
$\mathtt{D}(s,s'|\theta)\leq \mathtt{d}$ &  a metric to define the adjacency between $s$ and $s'$\\
$\mathcal{Q}$ & set of adjacent states specified by $\mathtt{D}(\cdot|\theta)$ and $\mathtt{d}$\\
$\mathcal{M}(\gamma)$ & randomized mechanism with density function $\gamma$\\
                      & $y=\mathcal{M}(x;\gamma)$ with density $\gamma(y|x)$\\
$\gamma:\mathcal{X}\mapsto \Delta(\mathcal{Y})$& density function of $\mathcal{M}(\gamma)$\\
$y=\mathcal{M}(x;\gamma)$ & $\mathcal{M}(x:\gamma)$ outputs $y$ when $x$ is the input dataset \\
$\textbf{Pr}^{\theta, s}_{\gamma}[\cdot] = \textbf{Pr}^{ s}_{\gamma}[\cdot] $ & probability induced by $\theta$, $s$, and $\gamma$\\
\hline
\end{tabular}
\end{table}

\begin{table}[h]
\centering
\caption{Notations for Section III and Appendix \ref{app:section_III_details}}
\begin{tabular}{ll}
\hline
Symbol & Description \\
\hline
$\mathcal{M}(\mathcal{\vec{\gamma}})$ & composition of $\mathcal{M}_{1}(\gamma_{1}), \dots, \mathcal{M}_{k}(\gamma_{k})$, \\
&with $\vec{\gamma}=(\gamma_{i=1})^{k}_{i=1}$\\
$\vec{Y}, \vec{y}, \vec{\mathcal{Y}}$  & RV, sample, and space of the joint outputs \\
&of $\mathcal{M}(x;\vec{\gamma})$ for any dataset $x$\\
$\psi_{i}:\mathcal{S}\mapsto \Delta(\mathcal{Y}_{i})$& density function of output $y_{i}$ of $\mathcal{M}_{i}(\gamma_{i})$\\
&conditioning on the secret $s$\\
$\mathcal{N}_{i}(\psi_{i}):\mathcal{S}\mapsto\mathcal{Y}_{i}$ & effective mechanism of $\mathcal{M}_{i}(\gamma_{i})$, with density $\psi_{i}$\\
$\mathcal{N}(\vec{\psi}):\mathcal{S}\mapsto\mathcal{Y}$ & composition of $\mathcal{N}_{i}(\psi_{1}), \dots, \mathcal{N}_{k}(\psi_{k})$ \\
&as if they are independent\\
$y_i = \mathcal{N}_{i}(s;\psi_{i})$ & if $S=s$, $X=x$, and $y_i=\mathcal{M}_{i}(x;\gamma_{i})$, \\
&then $y_i = \mathcal{N}_{i}(s;\psi_{i})$\\
$B(\cdot|s)$ & CDF of the joint $\vec{Y}$ when the secret is $s$\\
$b(\cdot|s)$ & density function of $B(\cdot|s)$\\
$C(u_{1}, \dots, u_{k}|s)$ & copula of $B(\cdot|s)$, a multivariate CDF \\
&with uniform univariate marginals over $[0,1]$ \\
$c(\cdot|s)$ & copula density of $C(\cdot|s)$\\
$\textbf{L}_{\mathtt{p}, \mathtt{q}}(Y)$& privacy loss random variable (PLRV) with $Y\sim \mathtt{p}$,\\
& where $\mathtt{p}$ and $\mathtt{q}$ are two densities\\
$\mathbf{L}^{\mathcal{G}}_{c_{s_{0}}, c_{s_{1}}}(\vec{Y})$& PLRV induced by the copula densities when\\
&$(\theta, \mathcal{G})$ captures the relationship between $S$ and $X$\\
$\textbf{L}^{\textup{id}}_{\psi_{s_{0}}, \psi_{s_{1}} }(\vec{Y})$ & PLRV induced by $\mathcal{N}(\vec{\psi})$\\
$\mathcal{C}^{j} \subseteq \{\mathcal{M}_i\}_{i=1}^k$ & $j$-th collection of mechanisms that are \\
&mutually dependent\\
$\mathcal{C} = \{\mathcal{C}^{j}\}_{j=1}^{m}$ & captures the dependencies among mechanisms\\
$\textbf{L}^{\mathcal{C}}_{\hat{c}_{s_{0}}, \hat{c}_{s_{1}}}(\vec{Y})$ & PLRV induced by $\mathcal{C}$\\
 $\underline{\textbf{opt}}_{\delta_g}$ &  optimal epsilon value for a given $\delta_{g}$ under \\
 &composition $\mathcal{N}(\vec{\psi})$ when dependencies \\
&induced by $(\theta, \mathcal{G})$ are ignored\\
 $\underline{\textbf{dt}}_{\epsilon_g}$ & optimal delta value for a given $\epsilon_{g}$ under\\
 &composition $\mathcal{N}(\vec{\psi})$ when dependencies \\
 &induced by $(\theta, \mathcal{G})$ are ignored\\
$\textbf{opt}_{\delta_g}$ & true optimal epsilon value given $\delta_{g}$ under \\
&composition $\mathcal{M}(\vec{\gamma})$\\
$\textbf{dt}_{\epsilon_g}$ & true optimal delta value given $\epsilon_{g}$ under \\
&composition $\mathcal{M}(\vec{\gamma})$\\
$\overline{\textbf{opt}}_{\delta_g}$ & optimal epsilon value given $\delta_{g}$ under\\
&composition $\mathcal{M}(\vec{\gamma})$ when the PLRV induced by\\
&copula is considered as an independent mechanism\\
$\overline{\textbf{dt}}_{\epsilon_g}$ & optimal delta value given $\epsilon_{g}$ under\\
&composition $\mathcal{M}(\vec{\gamma})$ when the PLRV induced by\\
&copula is considered as an independent mechanism\\
\hline
\end{tabular}
\end{table}

\begin{table}[h]
\centering
\caption{Notations for Section IV}
\begin{tabular}{ll}
\hline
Symbol & Description \\
\hline
$\alpha:\mathcal{S}\mapsto \mathcal{Y}_{\alpha}$& leader's privacy strategy\\
$Y_{\alpha}$, $y_{\alpha}$, $\mathcal{Y}_{\alpha}$& RV, sample, and space of output of $\alpha$\\
$\mathcal{M}_{k+1}(\gamma_{\alpha})$& the effective mechanism determined by $\alpha$\\
$\vec{Y}_{\alpha}$, $\vec{y}_{\alpha}$, $\vec{\mathcal{Y}}_{\alpha}$& RV, sample, and space of the joint output of \\
&$\mathcal{M}_{k+1}(\gamma_{\alpha})$ and $\mathcal{M}(\vec{\gamma})$\\
$\pi:\vec{\mathcal{Y}}_{\alpha}\mapsto \Delta(\mathcal{S})$& follower's response strategy\\
$\mathcal{L}(\pi, s, \vec{y}_{\alpha})$& any strictly proper scoring rule function\\
\hline
\end{tabular}
\end{table}

\subsection{Pufferfish Privacy}\label{app:Puffish Privacy}

Pufferfish privacy (PP) \cite{kifer2014pufferfish} generalizes the concept of adjacent datasets in DP to \textit{adjacent distributions} of datasets \cite{desfontaines2019sok}, defined by discriminative pairs of secrets representing sensitive information to be protected. This framework enables PP to incorporate domain-specific knowledge and handle data correlations more effectively than DP.

A typical PP framework consists of:
\begin{enumerate}
    \item a set of \textit{secrets} $\mathcal{S}$, containing measurable subsets of $\mathcal{X}$;
    \item a \textit{set of secret pairs} $\mathcal{Q} \subseteq \mathcal{S} \times \mathcal{S}$;
    \item a class of data distributions $\mathbb{D} \subseteq \Delta(\mathcal{X})$, representing domain knowledge as prior beliefs. 
\end{enumerate}

PP ensures privacy by maintaining that the posterior distinguishability of any secret pairs in $\mathcal{Q}$, given the outputs of a mechanism, remains close to their prior distinguishability. A randomized mechanism $\mathcal{M}: \mathcal{X} \to \mathcal{Y}$ is \textit{$(\epsilon, \delta)$-private in the pufferfish framework} ($(\epsilon, \delta)$-PP) with $\epsilon \geq 0$ and $\delta \in [0,1]$ if, for all $P_X \in \mathbb{D} $, $(s, s') \in \mathcal{Q}$ with $P_X(s) > 0$, $P_X(s') > 0$, and measurable $\mathcal{W} \subseteq \mathcal{Y}$:
\begin{equation}\label{eq:def_pufferfish}
\sup_{\mathcal{W} \subset \mathcal{Y}} \left( \textbf{Pr}^{s}\left[\mathcal{M}(X) \in \mathcal{W}\right] - e^{\epsilon} \textbf{Pr}^{s'}\left[\mathcal{M}(X) \in \mathcal{W}\right] \right) \leq \delta,
\end{equation}
where $\textbf{Pr}^{s}[\cdot]$ denotes the probability conditioned on the secret being $s$.

In Pufferfish Privacy (PP), each probability distribution $P_X \in \mathbb{D}$ corresponds to an attacker characterized by their prior knowledge $P_X$, which the system aims to protect against. This belief $P_X$ represents the attacker’s understanding of how the dataset was generated \cite{kifer2014pufferfish}. The set $\mathbb{D}$ is also referred to as \textit{data evolution scenarios}. Importantly, the PP framework does not explicitly account for any intrinsic probabilistic relationships between the secret $s$ and the dataset $x$.




Differential privacy (DP) is a special case of PP when the secret coincides with the dataset ($\mathcal{S} = \mathcal{X}$), $\mathcal{Q} \subseteq \mathcal{X} \times \mathcal{X}$ contains all pairs of adjacent datasets, and $\mathbb{D}  = \Delta(\mathcal{X})$.


\subsection{Detailed Characterization of Section III}\label{app:section_III_details}

In this section, we present the detailed version of Section III.
Part (i) of Theorem \ref{thm:order_composition_pre} is shown by Proposition \ref{prop:existence_composition_DCP}, and parts (ii) and (iii) of Theorem \ref{thm:order_composition_pre} is captured by Theorem \ref{thm:order_composition}.

\textbf{Composition }
In the DCP framework, as more computations are performed on the dataset $x$, it is important to understand how the privacy of the secret $s = \mathcal{G}(x)$ degrades under composition. 
Consider $k$ independent mechanisms $\mathcal{M}_{1}(\gamma_{1}), \mathcal{M}_{2}(\gamma_{2}), \dots, \mathcal{M}_{k}(\gamma_{k})$, where each $\mathcal{M}_{i}(\cdot;\gamma_{i}): \mathcal{X} \to \mathcal{Y}_{i}$ operates with density function $\gamma_{i}$ and output space $\mathcal{Y}_{i}$. Let $\vec{\gamma} \equiv (\gamma_{i})_{i=1}^k$, $\vec{Y} \equiv (Y_{i})_{i=1}^k$ denote the random variables corresponding to the outputs of the mechanisms, and $\vec{y} \equiv (y_{i})_{i=1}^k \in \vec{\mathcal{Y}}$, where $\vec{\mathcal{Y}} \equiv \prod_{i=1}^k \mathcal{Y}_{i}$.

The composition $\mathcal{M}(\cdot;\vec{\gamma}): \mathcal{X} \to \vec{\mathcal{Y}}$ is defined as:
\[
\mathcal{M}(x; \vec{\gamma}) = \left(\mathcal{M}(x; \gamma_{1}), \dots, \mathcal{M}(x; \gamma_{k})\right).
\]
For any $\vec{\mathcal{W}} \subseteq \vec{\mathcal{Y}}$, we define:
\begin{equation}\label{eq:likelihood_composition}
    \textbf{Pr}^{s}_{\vec{\gamma}}\left[\vec{Y} \in \vec{\mathcal{W}}\right] \equiv \int\limits_{x \in \mathcal{X}} \int\limits_{\vec{y} \in \vec{\mathcal{W}}} \prod_{i=1}^{k} \gamma_{i}(y_{i} \mid x) P_{\theta}(x|s) \, d\vec{y} \, dx.
\end{equation}
The ($k$-fold) composition $\mathcal{M}(\vec{\gamma})$ satisfies $(\epsilon_{g}, \delta_{g})$-DCP for some $\epsilon_{g} \geq 0$ and $\delta_{g} \in [0, 1]$ if,
\[
\sup_{(s,s')\in\mathcal{Q}} \sup_{\vec{\mathcal{W}} \subset \vec{\mathcal{Y}}} 
\left( \textbf{Pr}^{s}_{\vec{\gamma}}\left[\vec{Y} \in \vec{\mathcal{W}}\right] - e^{\epsilon_{g}} \textbf{Pr}^{s'}_{\vec{\gamma}}\left[\vec{Y} \in \vec{\mathcal{W}}\right] \right) \leq \delta_{g}.
\]

\paragraph{Effective Mechanisms}
For each $\gamma_{i}: \mathcal{X} \to \Delta(\mathcal{Y}_{i})$, let $\psi_{i}: \mathcal{S} \to \Delta(\mathcal{Y}_{i})$ denote the underlying density function of $\textbf{Pr}^{s}_{\gamma}[\cdot]$ given by (\ref{eq:prior_likelihood}). Specifically,
\[
\psi_{i}(y_{i} \mid s) = \int_{x \in \mathcal{X}} \gamma_{i}(y_{i} \mid x) P_{\theta}(x|s)\, dx.
\]
Additionally, let 
\[
\mathcal{N}_{i}(\cdot;\psi_{i}): \mathcal{S} \to \mathcal{Y}_{i}
\]
denote the corresponding \textit{effective mechanism} of $\mathcal{M}_{i}(\gamma_{i})$ for all $i \in [k]$ such that, for a given realized pair $s,x$, 
\[
\textbf{Pr}^{s}_{\psi_{i}}\left[\mathcal{N}_{i}(s;\psi_{i})\in\mathcal{W}_{i} \right] = \textbf{Pr}^{s}_{\psi_{i}}\left[\mathcal{M}_{i}(x;\gamma_{i})\in\mathcal{W}_{i} \right]. 
\]
Since $\textbf{Pr}^{s}_{\psi_{i}}[\cdot] = \textbf{Pr}^{s}_{\gamma_{i}}[\cdot]$ for all $i$, the mechanism $\mathcal{N}_{i}(\psi_{i})$ is $(\epsilon_{i}, \delta_{i})\textup{-}\mathtt{ind}$ if and only if $\mathcal{M}_{i}(s; \gamma_{i})$ is $(\epsilon_{i}, \delta_{i})$-DCP.
Let $\Psi_{i}$ denote the cumulative distribution function (CDF) associated with the density $\psi_{i}$ of each effective mechanism $\mathcal{N}_{i}$.

\subsubsection{Copula}

For any $s \in \mathcal{S}$, let $B(\cdot | s)$ denote the cumulative distribution function (CDF) and $b(\cdot | s)$ the density function of $\textbf{Pr}^{s}_{\vec{\gamma}}$ as defined in (\ref{eq:likelihood_composition}). For ease of exposition, we focus on where the output $y_{i} \in \mathcal{Y}_{i}$ is univariate, for all $i\in[k]$.
It is straightforward to verify that the $i$-th marginal density of $B$ ($b$) equals $\Psi_{i}$ ($\psi_{i}$) for all $i \in [k]$. 

By Sklar's theorem \cite{sklar1959fonctions}, there exists a \textit{copula} $C$ for every $s \in \mathcal{S}$ such that:
\[
B\left(Y_{1}, \dots, Y_{k} \mid s\right) = C\left(\Psi_{1}\left(Y_{1} \mid s\right), \dots, \Psi_{k}\left(Y_{k} \mid s\right) \mid s\right),
\]
where $C: [0,1]^{k} \to [0,1]$ captures the dependency among mechanisms. The copula $C$ is a multivariate CDF with uniform univariate marginals:
\[
C(u_{1}, u_{2}, \dots, u_{k}) = \textbf{Pr}\left[U_{1} \leq u_{1}, U_{2} \leq u_{2}, \dots, U_{k} \leq u_{k}\right],
\]
where each $U_{i} \in [0,1]$ is a standard uniform random variable for all $i \in [k]$. 

Let $c$ denote the \textit{copula density function} of $C$. Then, the density function $b$ of $B$ can be expressed as:
\[
b_{s}\left(\vec{Y}\right) = c\left(\Psi_{1}\left(Y_{1} \mid s\right), \dots, \Psi_{k}\left(Y_{k} \mid s\right) \mid s\right) \prod_{i=1}^{k} \psi_{i}\left(Y_{i} \mid s\right),
\]
For simplicity, we let 
\begin{itemize}
    \item $c_{s}(\cdot) = c(\Psi_{1}(\cdot | s), \dots, \Psi_{k}(\cdot | s) | s)$, and
    \item $\psi_{s}(\cdot) = (\psi^{s}_{1}(\cdot), \dots, \psi^{s}_{k}(\cdot))$, with $\psi^{s}_{i}(\cdot) = \psi_{i}(\cdot | s)$.
\end{itemize}

\subsubsection{Privacy Loss Random Variable}

Quantifying privacy risk using the $(\epsilon, \delta)$ scheme can be characterized through the \textit{privacy loss random variable} (PLRV). Let $p$ and $q$ be two probability density functions on the output space $\mathcal{Y}$ corresponding to two randomized mechanisms. Define $\textbf{L}_{\textup{p} \| \textup{q}}: \mathcal{Y} \to \mathbb{R}$ by:
\[
\textbf{L}_{\textup{p}, \textup{q}}(y) \equiv \log\left(\frac{\textup{p}(y)}{\textup{q}(y)}\right).
\]
The PLRV is then given by $\textbf{L}_{\textup{p}, \textup{q}}(Y)$ for $Y \sim p$, which is not symmetric in $\textup{p}$ and $\textup{q}$. The corresponding \textit{privacy loss distribution} (PLD) is denoted by $\textup{PLD}(p \| q)$.

Now suppose the mechanisms $\{\mathcal{M}_{i}\}_{i=1}^{k}$ are \textit{independent}. Given $\{\theta, \mathcal{G}\}$, the PLRV of the composition $\mathcal{M}(\vec{\gamma})$ is given by:
\begin{equation}\label{eq:privacy_loss_RV}
\textbf{L}_{b_{s_{0}}, b_{s_{1}}}(\vec{Y}) = \textbf{L}^{\mathcal{G}}_{c_{s_{0}}, c_{s_{1}}}(\vec{Y}) + \textbf{L}^{\textup{id}}_{\psi_{s_{0}}, \psi_{s_{1}} }(\vec{Y}),
\end{equation}
where $\textbf{L}^{\mathcal{G}}_{c_{s_{0}}, c_{s_{1}}}(\vec{Y})$ is determined by the copula densities, and 
\[
\textbf{L}^{\textup{id}}_{\psi_{s_{0}}, \psi_{s_{1}} }(\vec{Y})=\sum^{k}\nolimits_{i=1}\textbf{L}_{\psi_{i}^{s_{0}},\psi_{i}^{s_{1}} }(Y_{i}),
\]
with each $\textbf{L}_{\psi_{i}^{s_{0}}, \psi_{i}^{s_{1}}}(Y_{i})$ determined by the individual marginal density $\psi_{i}$.

\begin{definition}[Invertable $\mathcal{G}$]
    The mapping $\mathcal{G}: \mathcal{X} \to \mathcal{S}$ is \textit{invertible} if for any secret sample $s$, there exists a dataset $x$ such that $P_{\theta}(x | s) = 1$. 
    $\mathcal{G}$ is \textit{non-invertible} if it is not invertible.
\end{definition}

\begin{proposition}\label{prop:invertible_G}
Suppose that the mechanisms $\{\mathcal{M}_{i}\}_{i=1}^{k}$ are independent. Then, the following holds.
\begin{itemize}
    \item If $\mathcal{G}$ is invertible, then $\textbf{L}^{\mathcal{G}}_{c_{s_{0}}, c_{s_{1}}}(\vec{Y}) = 0$ in (\ref{eq:privacy_loss_RV}) for all $s_{0}, s_{1}$.

    \item If $\mathcal{G}$ is non-invertible, then $\textbf{L}^{\mathcal{G}}_{c_{s_{0}}, c_{s_{1}}}(\vec{Y}) \neq 0$ in (\ref{eq:privacy_loss_RV}) for all $s_{0}, s_{1}$.
\end{itemize}

\end{proposition}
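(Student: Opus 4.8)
The plan is to reduce both statements to a single structural question --- whether the conditional law $P_{\theta}(\cdot\mid s)$ is a point mass --- since this is precisely what invertibility and non-invertibility of $\mathcal{G}$ encode. I will work from the composition likelihood $b_{s}(\vec{y})=\int_{\mathcal{X}}\prod_{i=1}^{k}\gamma_{i}(y_{i}\mid x)\,P_{\theta}(x\mid s)\,dx$, which holds because the $\mathcal{M}_{i}$ are conditionally independent given $x$, together with the copula identity $b_{s}(\vec{y})=c_{s}(\vec{y})\prod_{i=1}^{k}\psi_{i}(y_{i}\mid s)$. These give $c_{s}(\vec{y})=b_{s}(\vec{y})/\prod_{i}\psi_{i}(y_{i}\mid s)$ and hence $\textbf{L}^{\mathcal{G}}_{c_{s_{0}},c_{s_{1}}}(\vec{y})=\log\big(c_{s_{0}}(\vec{y})/c_{s_{1}}(\vec{y})\big)$, so the whole proposition is about when the ratio of copula densities is trivial.

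First, the invertible case. If $\mathcal{G}$ is invertible, then for each $s$ there is $x^{\star}(s)$ with $P_{\theta}(x^{\star}(s)\mid s)=1$, i.e. $P_{\theta}(\cdot\mid s)=\delta_{x^{\star}(s)}$. Substituting this point mass collapses every integral: $\psi_{i}(y_{i}\mid s)=\gamma_{i}(y_{i}\mid x^{\star}(s))$ and $b_{s}(\vec{y})=\prod_{i=1}^{k}\gamma_{i}(y_{i}\mid x^{\star}(s))=\prod_{i=1}^{k}\psi_{i}(y_{i}\mid s)$. Thus the joint factorizes into its marginals, so by the uniqueness clause of Sklar's theorem the copula is the independence copula with density $c_{s}\equiv 1$ on the support, for \emph{every} $s$. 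Therefore $c_{s_{0}}=c_{s_{1}}\equiv 1$ and $\textbf{L}^{\mathcal{G}}_{c_{s_{0}},c_{s_{1}}}(\vec{Y})=\log 1=0$ for all $s_{0},s_{1}$.

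Second, the non-invertible case. Now some secret $s$ has $P_{\theta}(\cdot\mid s)$ charging at least two datasets, so $b_{s}$ is a genuine mixture of product densities rather than a single product. The tool I would use is the covariance identity obtained by integrating out the mechanism noise: for bounded test functions $f,g$, writing $F(x)=\mathbb{E}_{\gamma_{1}(\cdot\mid x)}[f]$ and $G(x)=\mathbb{E}_{\gamma_{2}(\cdot\mid x)}[g]$, one gets $\mathrm{Cov}_{b_{s}}\!\big(f(Y_{1}),g(Y_{2})\big)=\mathrm{Cov}_{P_{\theta}(\cdot\mid s)}\!\big(F(X),G(X)\big)$. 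When $P_{\theta}(\cdot\mid s)$ is non-degenerate and the mechanisms genuinely respond to the data across its support --- a non-triviality guaranteed by the support coupling $P_{\theta}(s,x)>0\iff\textbf{Pr}[\mathcal{G}(x)=s]>0$ --- this covariance is nonzero for a suitable choice of $f,g$, so the outputs are dependent and $c_{s}\not\equiv 1$. Hence the copula contributes a nonvanishing term to the privacy loss.

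The main obstacle is exactly this non-invertible direction, and more precisely upgrading ``$c_{s}\not\equiv 1$ at some $s$'' to ``$\textbf{L}^{\mathcal{G}}_{c_{s_{0}},c_{s_{1}}}\neq 0$''. Two points need care. First, non-degeneracy: I must exclude the degenerate situation in which the mechanisms ignore the data on the support of $P_{\theta}(\cdot\mid s)$ (which would make the mixture factorize anyway), and for this I would invoke the support-coupling between $\mathcal{G}$ and $\theta$. Second, a possible coincidence $c_{s_{0}}\equiv c_{s_{1}}$, in which two secrets induce identical dependence structures and the copula terms cancel even though each is non-trivial. Since $c_{s}$ is the $s$-indexed copula density evaluated at the $s$-dependent coordinates $\Psi_{i}(\cdot\mid s)$, such a coincidence is non-generic; I would establish the strict non-vanishing either by comparing a non-degenerate $s_{0}$ against a secret with trivial copula (where $c_{s_{1}}\equiv 1\neq c_{s_{0}}$), or directly via the covariance identity on the adjacent pairs in $\mathcal{Q}$ that drive the composition bound of Theorem~\ref{thm:order_composition_pre}.
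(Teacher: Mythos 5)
Your first bullet (the invertible case) is correct and is essentially identical to the paper's own argument: invertibility collapses $P_{\theta}(\cdot\mid s)$ to a point mass $\delta_{x^{\star}(s)}$, so $b_{s}(\vec{y})=\prod_{i}\gamma_{i}(y_{i}\mid x^{\star}(s))=\prod_{i}\psi_{i}(y_{i}\mid s)$, the joint factorizes, and the copula term vanishes. It is worth knowing that this is \emph{all} the paper proves: its appendix proof of Proposition~\ref{prop:invertible_G} handles only the invertible direction and is silent on the non-invertible bullet, so your attempt to prove the second half goes beyond the paper's own argument.

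That attempt, however, has a genuine gap. Your covariance identity $\mathrm{Cov}_{b_{s}}\bigl(f(Y_{1}),g(Y_{2})\bigr)=\mathrm{Cov}_{P_{\theta}(\cdot\mid s)}\bigl(F(X),G(X)\bigr)$ is correct, but the step where you claim the right-hand side is nonzero ``for a suitable choice of $f,g$'' because of the support coupling $P_{\theta}(s,x)>0\iff\textbf{Pr}[\mathcal{G}(x)=s]>0$ does not hold. That coupling constrains the relationship between $\theta$ and $\mathcal{G}$ only; it says nothing about the mechanisms $\gamma_{i}$, which are chosen independently of $\{\theta,\mathcal{G}\}$. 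If each $\gamma_{i}(\cdot\mid x)$ is constant in $x$ (a mechanism that ignores the data and outputs pure noise), then $F$ and $G$ are constant on $\mathcal{X}$, every covariance vanishes, $b_{s}=\prod_{i}\psi_{i}(\cdot\mid s)$ for every $s$, and $\textbf{L}^{\mathcal{G}}_{c_{s_{0}},c_{s_{1}}}(\vec{Y})\equiv 0$ even though $\mathcal{G}$ is non-invertible. In other words, the second bullet of the proposition is false as stated without an additional non-degeneracy hypothesis on the mechanisms, so no choice of test functions can close the gap; the most one can prove is the contrapositive-style statement that a nonzero copula PLRV forces non-invertibility. Your second worry --- that $c_{s_{0}}$ and $c_{s_{1}}$, each nontrivial, might still coincide after composition with their respective marginal CDFs so that the log-ratio cancels --- is also real and is left at the level of a sketch (``non-generic'' is not a proof). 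To salvage the direction you would need to add explicit hypotheses (e.g., that the $F,G$ above are non-constant on the support of $P_{\theta}(\cdot\mid s)$ for some adjacent pair, plus an argument excluding the coincidence $c_{s_{0}}\equiv c_{s_{1}}$), and state the proposition under those hypotheses.
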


Proposition~\ref{prop:invertible_G} implies that a non-invertible mapping $\mathcal{G}$ generally introduces additional dependency among mechanisms, even when the mechanisms themselves are independent.

To account for \textit{arbitrary dependencies} among mechanisms, we group the collection of $k$ mechanisms $\{\mathcal{M}_i\}_{i=1}^k$ into sub-collections, denoted as $\mathcal{C}^{j} \subseteq \{\mathcal{M}_i\}_{i=1}^k$ for $j \in \{1, 2, \dots, m\}$ with $m \geq 1$. Each sub-collection consists of mechanisms that are mutually dependent, and no mechanism outside a sub-collection is mutually dependent with all mechanisms within that sub-collection. Let $\mathcal{C} = \{\mathcal{C}^{j}\}_{j=1}^{m}$.

We assume that each individual mechanism $\mathcal{M}_{i}(\gamma_{i})$ has a well-defined marginal distribution $\gamma_{i}$, meaning that the marginal probability distribution of the output of $\mathcal{M}_{i}$ is $\gamma_{i}$, even if $\mathcal{M}_{i}$ exhibits dependencies with other mechanisms.
Appendix \ref{app:well_defined_dependence} explains the notion of \textit{well-defined dependency structure.}

Sklar's theorem implies that each sub-collection $\mathcal{C}^{j}$ has a well-defined copula, denoted by $C^{j}(\cdot | s)$, with copula density $\hat{c}^{j}(\cdot | s)$, which may depend on the secret $s$. For simplicity, let $\hat{c}^{j}_{s} = \hat{c}^{j}(\cdot | s)$.
The PLRV induced by the copula density $\hat{c}^{j}$ is denoted as $\textbf{L}^{j}_{\hat{c}^{j}_{s_{0}}, \hat{c}^{j}_{s_{1}}}(\tilde{Y}^{j})$, where $\tilde{Y}^{j}$ represents the aggregated output of the mechanisms in $\mathcal{C}^{j}$.

\begin{proposition}\label{prop:existence_composition_DCP}
The composition $\mathcal{M}(\vec{\gamma})$ of $\{\mathcal{M}_{i}\}_{i=1}^{k}$ with $\mathcal{C}$ is $(\epsilon_{g}, \delta_{g})$-DCP for some $\epsilon_{g} \geq 0$ and $\delta_{g} \in [0,1]$.
\end{proposition}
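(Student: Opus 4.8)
The plan is to read the claim directly off the definition of DCP, which already expresses privacy as a supremum of a hockey-stick-type functional; it suffices to check that this supremum is a well-defined number in $[0,1]$, and that number then serves as a witnessing $\delta_g$ for any chosen $\epsilon_g \ge 0$. Concretely, for a fixed $\epsilon_g \ge 0$ I would set
\[
\delta_g(\epsilon_g) \equiv \sup_{(s,s')\in\mathcal{Q}}\ \sup_{\vec{\mathcal{W}}\subset\vec{\mathcal{Y}}} \left( \textbf{Pr}^{s}_{\vec{\gamma}}\left[\vec{Y}\in\vec{\mathcal{W}}\right] - e^{\epsilon_g}\,\textbf{Pr}^{s'}_{\vec{\gamma}}\left[\vec{Y}\in\vec{\mathcal{W}}\right] \right),
\]
so that the pair $(\epsilon_g,\delta_g(\epsilon_g))$ satisfies the composition inequality by construction; the entire task is to show $\delta_g(\epsilon_g)\in[0,1]$.

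First I would verify that each composed mechanism genuinely induces a probability measure on $\vec{\mathcal{Y}}$. By Sklar's theorem and the copula decomposition, for every secret $s$ the joint output has density $b_s(\vec y)=\hat c_s(\vec y)\prod_{i=1}^{k}\psi_i(y_i\mid s)$, where $\hat c_s$ collects the sub-collection copula densities $\hat c^{j}_s$ encoding the dependency structure $\mathcal{C}$; hence $\textbf{Pr}^{s}_{\vec{\gamma}}[\cdot]$ is a bona fide probability measure and the inner supremum over $\vec{\mathcal{W}}$ equals the hockey-stick divergence $\int_{\vec{\mathcal{Y}}}\max\bigl(0,\,b_s(\vec y)-e^{\epsilon_g}b_{s'}(\vec y)\bigr)\,d\vec y$. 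Choosing $\vec{\mathcal{W}}=\emptyset$ shows the inner supremum is nonnegative, and bounding the integrand above by $b_s$ shows it is at most $\int b_s = 1$; taking the outer supremum over $\mathcal{Q}$ preserves membership in $[0,1]$. This already proves the proposition, e.g.\ with $\epsilon_g=0$.

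To make the witnessing pair nonvacuous (i.e.\ $\delta_g<1$), I would pass to the privacy-loss representation. Writing $\textbf{L}_{b_s,b_{s'}}=\textbf{L}^{\mathcal{C}}_{\hat c_{s_0},\hat c_{s_1}}+\textbf{L}^{\textup{id}}_{\psi_{s_0},\psi_{s_1}}$ (the independent special case being (\ref{eq:privacy_loss_RV})), the inner functional becomes $\delta_g(\epsilon_g)=\mathbb{E}_{\vec Y\sim b_s}\bigl[\max(0,\,1-e^{\,\epsilon_g-\textbf{L}_{b_s,b_{s'}}(\vec Y)})\bigr]$. Since this PLRV is almost surely finite, dominated convergence (the integrand is dominated by $1$) gives $\delta_g(\epsilon_g)\to\textbf{Pr}^{s}_{\vec{\gamma}}[\,b_{s'}(\vec Y)=0\,]$ as $\epsilon_g\to\infty$. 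The support condition $P_\theta(s,x)>0\Leftrightarrow\textbf{Pr}[\mathcal{G}(x)=s]>0$, together with the overlap of supports guaranteed for adjacent secrets, forces this limiting mass strictly below $1$, so a finite $\epsilon_g$ yields $\delta_g<1$ for each pair.

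The step I expect to be the main obstacle is the uniformity over $\mathcal{Q}$: a single finite $\epsilon_g$ must serve all adjacent pairs simultaneously, so the per-pair thresholds must admit a finite supremum. This is precisely where I would invoke the bounded adjacency metric $\mathtt{D}(s,s'\mid\theta)\le\mathtt{d}$, translating closeness of secrets into uniform control of the likelihood ratios $b_s/b_{s'}$ and hence of the tails of $\textbf{L}_{b_s,b_{s'}}$; absent compactness of $\mathcal{S}$ or finiteness of $\mathcal{Q}$, this boundedness is what keeps the supremum finite, with care needed at pairs whose ratio is unbounded (where mass is absorbed into a strictly positive but sub-unit $\delta_g$). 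For the bare existence claim, however, the $\epsilon_g=0$, $\delta_g\le 1$ route above already suffices and bypasses this difficulty.
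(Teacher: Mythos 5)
Your proof is correct, but it takes a genuinely different route from the paper's. You bound the hockey-stick functional directly: the inner supremum over $\vec{\mathcal{W}}$ equals $\int \bigl(b_s - e^{\epsilon_g} b_{s'}\bigr)^{+}\,d\vec{y}$, which lies in $[0,1]$ because it is nonnegative (take $\vec{\mathcal{W}}=\emptyset$) and dominated by $\int b_s = 1$; since the definition of DCP permits $\delta_g \in [0,1]$ including $\delta_g = 1$, existence follows essentially for free once $\textbf{Pr}^{s}_{\vec{\gamma}}$ is a genuine probability measure (which your Sklar-theorem step confirms under the dependency structure $\mathcal{C}$). The paper instead argues via hypothesis testing: it casts the adjacent-secret distinguishability as a simple binary test $H_0$ versus $H_1$, invokes the Neyman--Pearson lemma to get a uniformly most powerful (likelihood-ratio) test and hence a well-defined optimal symmetric trade-off function $f_{[k]}$ for the composed likelihoods, and then converts to $(\epsilon, \delta(\epsilon))$-DCP with $\delta(\epsilon) = 1 + f^{*}_{[k]}(-e^{\epsilon})$ via the $f$-DP duality (Propositions 2.4 and 2.12 of the Dong et al.\ framework). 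Your route is more elementary and makes transparent that the bare existence claim is nearly vacuous; the paper's route is heavier but produces the trade-off-function object $f_{[k]}$ that it reuses downstream (the Blackwell-ordering arguments behind parts (ii) and (iii) of the composition theorems), and it ties the witnessing pair to the optimal $(\epsilon,\delta)$ curve rather than the trivial corner $(0,1)$. One caveat on your supplementary discussion: the assertion that adjacent secrets have overlapping supports (so that the limiting mass $\textbf{Pr}^{s}_{\vec{\gamma}}[\,b_{s'}(\vec{Y})=0\,]$ is strictly below $1$) is not guaranteed by the paper's definitions --- the adjacency metric $\mathtt{D}$ is abstract and imposes no such overlap --- but since you explicitly flag that part as unnecessary for the existence claim, it does not affect the validity of your proof.
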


Proposition~\ref{prop:existence_composition_DCP} confirms that the composition of DCP mechanisms satisfies DCP but does not specify the privacy parameters or a method for their computation.

Similar to (\ref{eq:privacy_loss_RV}), the PLRV of the composition $\mathcal{M}(\vec{\gamma})$ with dependencies $\mathcal{C}$ can be expressed as:
\begin{equation}\label{eq:PLRV_dependent}
    \textbf{L}_{\hat{b}_{s_{0}}, \hat{b}_{s_{1}}}(\vec{Y}) = \textbf{L}^{\mathcal{G}}_{c_{s_{0}}, c_{s_{1}}}(\vec{Y}) + \textbf{L}^{\mathcal{C}}_{\hat{c}_{s_{0}}, \hat{c}_{s_{1}}}(\tilde{Y}) + \textbf{L}^{\textup{id}}_{\psi_{s_{0}}, \psi_{s_{1}}}(\vec{Y}),
\end{equation}
where $\textbf{L}^{\mathcal{C}}_{\hat{c}_{s_{0}}, \hat{c}_{s_{1}}}(\vec{Y})= \sum^{m}\nolimits_{j=1} \textbf{L}^{j}_{\hat{c}^{j}_{s_{0}},\hat{c}^{j}_{s_{1}}}(\tilde{Y}^{j})$.

Thus, the PLRV of the composition is the additive combination of contributions from the marginals, the copulas of dependent mechanisms, and the copula induced by the relationship between $S$ and $X$. This decomposition enables separate analysis of the effects of the relationship between $S$ and $X$ and the dependencies among mechanisms on privacy loss.
When all PLRVs are independent, the composition corresponds to the convolution of the associated privacy loss distributions (PLDs) \cite{sommer2019privacy}. 

However, $\textbf{L}_{\hat{b}_{s_{0}}, \hat{b}_{s_{1}}}(\vec{Y})$ in (\ref{eq:PLRV_dependent}) generally does not correspond to the convolution of the PLDs of each PLRV, even when the mechanisms are independent (i.e., $\textbf{L}^{\mathcal{C}}_{\hat{c}_{s_{0}}, \hat{c}_{s_{1}}} = 0$) due to a non-zero $\textbf{L}^{\mathcal{G}}_{c_{s_{0}}, c_{s_{1}}}(\vec{Y})$.

\subsubsection{Composition of PP}

When the relationship between $S$ and $X$ follows a chain rule, DCP reduces to Pufferfish privacy for a specific metric $(\mathtt{D}, \mathtt{d})$. An invertible mapping $\mathcal{G}$ (where $L^{\mathcal{G}}_{c_{s_{0}} \| c_{s_{1}}}(\vec{Y}) = 0$) is both necessary and sufficient for the \textit{universally composable evolution scenario} of Pufferfish privacy \cite{kifer2014pufferfish}, ensuring the linear self-composition property \cite{kifer2014pufferfish}, analogous to DP's basic composition.

\subsubsection{Composition of DP}

If the mapping $\mathcal{G}$ is invertible and, for any pair $s_0 \neq s_1$, there exist $x \neq x'$ such that 
\[
P_{\theta}(s_0, x) = 1 \textup{ and } P_{\theta}(s_1, x') = 1,
\]
then DCP reduces to DP.

In this case, the composition $\mathcal{M}(\vec{\gamma})$ is equivalent to the composition of the effective mechanisms $\mathcal{N}(\cdot;\vec{\psi}): \mathcal{S} \to \vec{\mathcal{Y}}$, defined as 
\[
\mathcal{N}(s;\vec{\psi}) = \left(\mathcal{N}_{1}(s; \psi_{1}), \dots, \mathcal{N}_{k}(s; \psi_{k})\right).
\]
DP composes gracefully under the assumption that each $\mathcal{N}_{i}(\psi_{i})$ is $(\epsilon_{i}, \delta_{i})\textup{-}\mathtt{ind}$.
\begin{itemize}
    \item The $k$-fold composition is $(\epsilon_{\mathtt{dp}}, \delta_{\mathtt{dp}})\textup{-}\mathtt{ind}$.

    \item The $k$-fold composition satisfies the \textbf{basic composition theory}: $\epsilon_{\mathtt{dp}} = \sum_{i=1}^{k} \epsilon_{i}, \quad \delta_{\mathtt{dp}} = \sum_{i=1}^{k} \delta_{i}.$
    \item For any $\delta_{\mathtt{dp}} \in [0,1]$, the optimal $\epsilon_{\mathtt{dp}}= \textup{OPT}(\mathcal{N}(\vec{\psi}), \delta_{\mathtt{dp}})$ where
\begin{equation*}
    \textup{OPT}\left(\mathcal{N}(\vec{\psi}), \delta_{\mathtt{dp}}\right) \equiv \inf\left\{\epsilon' \geq 0 \mid \mathcal{N}(\vec{\psi}) \textup{ is $(\epsilon', \delta_{\mathtt{dp}})\textup{-}\mathtt{ind}$}\right\}.
\end{equation*}
The \textbf{optimal composition theory} of DP (Theorem 1.5 of \cite{murtagh2015complexity}) provides a method to compute the tightest $\epsilon_{\mathtt{dp}}$ from $\{\epsilon_{i}, \delta_{i}\}_{i=1}^{k}$ for a given $\delta_{\mathtt{dp}} \in [0,1]$. However, this computation is \#P-complete \cite{murtagh2015complexity}.

\end{itemize}



We have a natural question: \textit{does the composition of DCP exhibit similarly graceful behavior when all mechanisms are independent?} The answer, however, is negative.

For a given composition $\mathcal{M}(\vec{\gamma})$, define the following notations for \textit{any $(s_{0}, s_{1})\in \mathcal{Q}$}, where we omit explicit dependence on $(s_{0}, s_{1})$ for simplicity.

\paragraph{}
Let $\underline{\textbf{opt}}_{\delta_g} \equiv \textup{OPT}(\mathcal{N}(\vec{\psi}), \delta_g)$ denote the optimal $\epsilon_g$ when ignoring dependencies among mechanisms for a given $\delta_g$, which can be computed by DP's optimal composition theory \cite{murtagh2015complexity}.

\paragraph{} Let $\textbf{opt}_{\delta_g} \equiv \inf\{\epsilon_g \mid \mathcal{M}(\vec{\gamma}) \textup{ is } (\epsilon_g, \delta_g)\textup{-DCP}\}$ represent the true optimal $\epsilon_g$ of $\mathcal{M}(\vec{\gamma})$ for a given $\delta_g$.

\paragraph{} Define $\widehat{\mathcal{N}}(c): \mathcal{S} \to \vec{\mathcal{Y}}$ as a additional mechanism that induces a PLRV coinciding with $\textbf{L}^{\mathcal{G}}_{c_{s_0}, c_{s_1}}(\vec{Y})$ in (\ref{eq:privacy_loss_RV}), where $c$ is the density of the CDF $C(u_1,\dots, u_k)$ independent of $\{\Psi_{i}\}^{k}_{i=1}$.

\paragraph{} Let $\overline{\textbf{opt}}_{\delta_g} \equiv \inf\{\epsilon_g \mid \mathcal{M}(\vec{\gamma}, c) \textup{ is } (\epsilon_g, \delta_g)\textup{-DCP}\}$, where $\mathcal{M}(\vec{\gamma}, c): \mathcal{S} \to \vec{\mathcal{Y}} \times \vec{\mathcal{Y}}$ is the composition of $k+1$ independent mechanisms $\mathcal{M}_1(\gamma_1), \dots, \mathcal{M}_k(\gamma_k), \widehat{\mathcal{N}}(c)$.

\paragraph{} The privacy profile (delta value) is given by 
\[
\delta_\epsilon \equiv \mathbb{E}_{\textbf{L} \sim \textup{PLD}} \left[\left(1 - e^{\epsilon - \textbf{L}}\right)^+\right],
\]
for a PLD \cite{steinke2022composition}. Let $\underline{\textbf{dt}}_{\epsilon_g}$, $\textbf{dt}_{\epsilon_g}$, and $\overline{\textbf{dt}}_{\epsilon_g}$ denote the privacy profiles corresponding to the PLDs of the same composition settings as $\underline{\textbf{opt}}_{\delta_g}$, $\textbf{opt}_{\delta_g}$, and $\overline{\textbf{opt}}_{\delta_g}$, respectively.

\begin{theorem}\label{thm:order_composition}
Suppose that $k$ mechanisms $\{\mathcal{M}_{i}(\gamma_{i})\}^{k}_{i=1}$ are \textit{independent}.
For any $\epsilon_{g}\geq 0$ and $\delta_{g}\in[0,1)$, the following holds for all $(s_{0}, s_{1})\in\mathcal{Q}$.
Suppose $\textbf{L}^{\mathcal{G}}_{c_{s_{0}}, c_{s_{1}}}(\vec{Y})\neq 0$. Then, 
\begin{itemize}
    \item[(i)] $\mathcal{M}(\vec{\gamma})$ is generally \textit{not} $(\sum^{k}_{i=1}\epsilon_{i}, \sum^{k}_{i}\delta_{i})$-DCP;
    \item[(ii)] $\underline{\textbf{opt}}_{\delta_{g}}< \textbf{opt}_{\delta_{g}}\leq \overline{\textbf{opt}}_{\delta_{g}} \text{ and } \underline{\textbf{dt}}_{\epsilon_{g}}< \textbf{dt}_{\epsilon_{g}}\leq \overline{\textbf{dt}}_{\epsilon_{g}}.$
\end{itemize}
\end{theorem}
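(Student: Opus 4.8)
The plan is to reduce both the $\textbf{opt}$-ordering and the $\textbf{dt}$-ordering to a single \emph{master ordering} of privacy profiles and then prove its two inequalities separately. By construction, $\underline{\textbf{dt}}_{\epsilon_g}$, $\textbf{dt}_{\epsilon_g}$ and $\overline{\textbf{dt}}_{\epsilon_g}$ are the profiles $\delta_{\epsilon}=\mathbb{E}_{\textbf{L}\sim\textup{PLD}}[(1-e^{\epsilon-\textbf{L}})^{+}]$ evaluated at $\epsilon_g$ for, respectively, the law of $\textbf{L}^{\textup{id}}_{\psi_{s_0},\psi_{s_1}}(\vec{Y})$, the true loss $\textbf{L}_{b_{s_0},b_{s_1}}(\vec{Y})=\textbf{L}^{\mathcal{G}}_{c_{s_0},c_{s_1}}(\vec{Y})+\textbf{L}^{\textup{id}}_{\psi_{s_0},\psi_{s_1}}(\vec{Y})$, and the surrogate in which the copula loss is carried by the independent extra mechanism $\widehat{\mathcal{N}}(c)$. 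Writing $\delta^{\mathrm{id}}_{\epsilon}$, $\delta^{\mathrm{tr}}_{\epsilon}$ and $\delta^{\mathrm{in}}_{\epsilon}$ for these three profiles, I would prove the chain $\delta^{\mathrm{id}}_{\epsilon}\le\delta^{\mathrm{tr}}_{\epsilon}\le\delta^{\mathrm{in}}_{\epsilon}$ for every $\epsilon\ge 0$, with the left inequality strict under the hypothesis $\textbf{L}^{\mathcal{G}}_{c_{s_0},c_{s_1}}(\vec{Y})\neq 0$ and $\delta_g<1$. The $\textbf{dt}$-ordering is then immediate by setting $\epsilon=\epsilon_g$; the $\textbf{opt}$-ordering follows because $\textup{opt}_{\delta}=\inf\{\epsilon:\delta_{\epsilon}\le\delta\}$ inverts the continuous nonincreasing map $\epsilon\mapsto\delta_{\epsilon}$, and Proposition~\ref{prop:existence_composition_DCP} guarantees each value is finite and attained.

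For the left inequality I would condition on $\textbf{L}^{\textup{id}}=\ell$ and integrate the convex nondecreasing map $h_{\ell}(x)=(1-e^{\epsilon-\ell-x})^{+}$ against the conditional law of $\textbf{L}^{\mathcal{G}}$. Jensen's inequality gives $\mathbb{E}[h_{\ell}(\textbf{L}^{\mathcal{G}})\mid\ell]\ge h_{\ell}(\mathbb{E}[\textbf{L}^{\mathcal{G}}\mid\ell])$, so the claim reduces to the nonnegativity of the conditional drift $\mathbb{E}[\textbf{L}^{\mathcal{G}}\mid\ell]$; since $\textbf{L}^{\mathcal{G}}$ is a genuine PLRV it satisfies $\mathbb{E}[e^{-\textbf{L}^{\mathcal{G}}}]\le 1$ and hence $\mathbb{E}[\textbf{L}^{\mathcal{G}}]>0$ whenever it is not a.s.\ zero. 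Strictness then comes from Proposition~\ref{prop:invertible_G}: the hypothesis $\textbf{L}^{\mathcal{G}}\neq 0$ corresponds to a non-invertible $\mathcal{G}$, which places strictly positive mass on $\{\textbf{L}^{\mathcal{G}}>0\}$, so exhibiting the threshold set where both $\textbf{L}^{\textup{id}}$ and $\textbf{L}^{\mathcal{G}}$ are positive yields a measurable $\mathcal{W}$ on which the true hockey-stick term strictly exceeds the id-only bound. A delicate point I would isolate as a lemma is ruling out \emph{destructive} cancellation (the degenerate case $b_{s_0}=b_{s_1}$, where the copula exactly undoes the marginal loss); the copula structure $c_{s}(\cdot)=c(\Psi_1(\cdot\mid s),\dots\mid s)$ ties $\textbf{L}^{\mathcal{G}}$ to the same $\psi_i$ appearing in $\textbf{L}^{\textup{id}}$ and is what prevents this generically.

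The right inequality $\delta^{\mathrm{tr}}_{\epsilon}\le\delta^{\mathrm{in}}_{\epsilon}$ is the main obstacle, because comparing a \emph{correlated} sum $\textbf{L}^{\textup{id}}+\textbf{L}^{\mathcal{G}}$ with its \emph{independent} counterpart $\textbf{L}^{\textup{id}}+\textbf{L}^{\mathcal{G}}_{z}$ (same marginals) is not monotone in general. My plan is to route around the direct PLD comparison by treating the copula loss as an adaptively composed $(k+1)$-th mechanism: viewed conditionally on $\vec{Y}$, the factor $\textbf{L}^{\mathcal{G}}_{c_{s_0},c_{s_1}}(\vec{Y})$ is itself a valid privacy loss, so the usual covering/union-bound proof of DP composition applies to the true mechanism and certifies the same $(\epsilon,\delta)$ budget that DP optimal composition assigns to the $k+1$ independent mechanisms $\mathcal{M}_1(\gamma_1),\dots,\mathcal{M}_k(\gamma_k),\widehat{\mathcal{N}}(c)$. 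Since $\widehat{\mathcal{N}}(c)$ is built to reproduce the marginal law of $\textbf{L}^{\mathcal{G}}$, this bound is exactly $\overline{\textbf{opt}}_{\delta_g}$, giving $\textbf{opt}_{\delta_g}\le\overline{\textbf{opt}}_{\delta_g}$ and, at fixed $\epsilon_g$, $\textbf{dt}_{\epsilon_g}\le\overline{\textbf{dt}}_{\epsilon_g}$. The crux to verify carefully is that the conditional copula loss admits the per-step bound required by the adaptive composition theorem of \cite{murtagh2015complexity}.

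Finally, part~(i) follows from the strict master-left inequality together with a quantitative instance. DP basic composition applied to the effective mechanisms $\mathcal{N}(\vec{\psi})$ gives $\delta^{\mathrm{id}}_{\sum_i\epsilon_i}\le\sum_i\delta_i$; the strict gap $\delta^{\mathrm{tr}}_{\epsilon}>\delta^{\mathrm{id}}_{\epsilon}$ shows the true profile overshoots, but to conclude failure of $(\sum_i\epsilon_i,\sum_i\delta_i)$-DCP I would construct a minimal $k=2$ example with an explicit non-invertible $\mathcal{G}$ (equivalently a nonzero copula, e.g.\ a Gaussian copula with nonzero correlation) for which $\delta^{\mathrm{tr}}_{\sum_i\epsilon_i}>\sum_i\delta_i$. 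This witnesses that the basic-composition budget is violated, matching parts (ii)--(iii) of Theorem~\ref{thm:order_composition_pre}.
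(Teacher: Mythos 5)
Your overall architecture (a pointwise ordering of three privacy profiles, then inversion to get the $\textbf{opt}$ ordering) is sensible, but both central steps have genuine gaps, and they are precisely the steps the paper's proof is engineered to avoid. For the strict left inequality, your Jensen argument needs the conditional drift $\mathbb{E}[\textbf{L}^{\mathcal{G}}\mid \textbf{L}^{\textup{id}}=\ell]\ge 0$ for (almost) every $\ell$. But $\textbf{L}^{\mathcal{G}}_{c_{s_0},c_{s_1}}(\vec{Y})$ and $\textbf{L}^{\textup{id}}_{\psi_{s_0},\psi_{s_1}}(\vec{Y})$ are deterministic functions of the \emph{same} $\vec{Y}$, so conditioning on $\textbf{L}^{\textup{id}}=\ell$ reweights the law of $\vec{Y}$ and can make the conditional mean of $\textbf{L}^{\mathcal{G}}$ negative on exactly the high-$\ell$ sets that drive the hockey-stick integral; this is the cancellation phenomenon the theorem is about, and deferring it to an unproven lemma (``the copula structure prevents this generically'') leaves the heart of the claim unproved. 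Worse, the premise feeding that step is itself unjustified: under $\vec{Y}\sim b_{s_0}$ the component $\textbf{L}^{\mathcal{G}}$ is \emph{not} the PLRV of a pair of probability measures, since $\mathbb{E}\bigl[e^{-\textbf{L}^{\mathcal{G}}}\bigr]=\int c\bigl(\Psi_1(y_1\mid s_1),\dots,\Psi_k(y_k\mid s_1)\mid s_1\bigr)\prod_i\psi_i(y_i\mid s_0)\,d\vec{y}$, the $s_1$-copula density at $s_1$-transformed coordinates integrated against $s_0$-marginals, which need not be $\le 1$; so $\mathbb{E}[\textbf{L}^{\mathcal{G}}]\ge 0$ does not follow. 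For the right inequality, the adaptive-composition reinterpretation also fails at the point you flagged: conditionally on $\vec{Y}$, the copula loss is a point mass at $\textbf{L}^{\mathcal{G}}(\vec{Y})$ rather than fresh randomness from a $(k{+}1)$-th mechanism, so the only per-step bound available is $\sup_{\vec{y}}|\textbf{L}^{\mathcal{G}}(\vec{y})|$, which is not the budget that DP optimal composition assigns to $\widehat{\mathcal{N}}(c)$.

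The paper's proof takes a route that sidesteps PLRV sign control entirely. For the strict left inequality it compares the true joint density $\mathtt{D}_{[2]}(\cdot\mid s)$ and the product of effective marginals $\mathtt{D}_{1,2}(\cdot\mid s)$ as statistical experiments: Lemma~\ref{lemma:KL_divergence_and_CEL} shows the Bayes posterior of the true joint attains strictly smaller expected cross-entropy (a KL identity), Lemma~\ref{lemma:blackwell_informative} converts this into a Blackwell ordering of trade-off functions (Proposition~\ref{prop:blackwell_ording}), and convex conjugation of the trade-off functions then delivers the strict gaps in both $\delta(\epsilon)$ and $\epsilon(\delta)$ simultaneously (Theorem~\ref{thm:standard_compose_fail_general}); no conditional drift of $\textbf{L}^{\mathcal{G}}$ is ever needed. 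Part (i) is proved not by a counterexample (your strategy is legitimate in principle but unexecuted) but by the threshold function $F(D;\epsilon)$ and the intermediate value theorem: basic composition survives if and only if $\textbf{L}^{\mathcal{G}}\le \log\hat{D}$ (resp.\ $\log D^{\dagger}$), a bound a nonzero $\textbf{L}^{\mathcal{G}}$ cannot guarantee. The right inequality in the paper rests on the worst-case observation that the correlated copula term cannot exceed the loss obtained when it and all marginal terms attain their indistinguishability bounds independently. To salvage your plan you would need to replace the conditional Jensen step by a distribution-level comparison of this Blackwell type; the profile ordering you want is equivalent to it, not easier than it.
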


Theorem~\ref{thm:order_composition} shows that the composition of DCP mechanisms is less graceful than that of DP mechanisms, even when the DCP mechanisms are independent. Specifically, if the PLRV induced by the relationship between $S$ and $X$ satisfies $\textbf{L}^{\mathcal{G}}_{c_{s_{0}}, c_{s_{1}}}(\vec{Y}) \neq 0$, the basic composition generally fails. 

Furthermore, the DP optimal composition theory \cite{murtagh2015complexity} underestimates the true aggregated privacy loss for DCP composition in this case.
When $\textbf{L}^{\mathcal{G}}_{c_{s_{0}}, c_{s_{1}}}(\vec{Y}) = 0$, the DCP composition is reduced to DP composition.

\subsection{Examples of Non-Chain $S\leftrightharpoons X$ }\label{app:examples}

In this section, we discuss real-world scenarios where the secret $S$ and the input dataset $X$ do not follow the simple chain-rule relationship $S \rightarrow X$. Our DCP framework addresses these scenarios by protecting the privacy of the secret $S$ even when the dataset $X$ is processed in ways not directly tied to $S$. These settings feature either a feedback loop $S \leftrightarrow X$ or a causal direction $X \rightarrow S$ (i.e., $S \leftarrow X$), rather than $S$ being simply “contained” within $X$.

In these cases, the sensitive secret $S$ is not merely an intrinsic component of the dataset $X$. Instead, $S$ may be:
(i) \textit{Derived from $X$:} An aggregated statistic, diagnosis, or hidden preference estimated by a model.
(ii) \textit{Jointly influencing $X$:} Through feedback loops, creating a bidirectional relationship $S \leftrightarrow X$. These distinctions highlight the need for more flexible privacy frameworks.

In real-world data pipelines, the process from $X$ to $S$ (e.g., via a machine-learning model or decision-making rule) is rarely a purely deterministic function. It often incorporates various sources of randomness or uncertainty, such as measurement noise, algorithmic randomization, or natural variability in the underlying phenomena being measured. This randomness further complicates the relationship between $X$ and $S$.

Classical adjacency-based definitions of differential privacy assume row-level or record-level changes in $X$ to reason about privacy. However, when $S$ is derived from, updated by, or mutually dependent on $X$, these row-level assumptions fail to capture the complexity of the relationships and the potential for information leakage.





\paragraph{Multi-Agent Sensing and State Estimation}

In large-scale sensor networks or multi-agent robotic systems, each agent gathers partial observations, denoted as $X$. The dataset $X$ consists of raw sensor readings, such as environmental measurements. The secret $S$, on the other hand, represents the global state of the environment, which could include factors like the location of a target, the temperature distribution, or the status of a system.

The process involves a causal relationship between the dataset and the secret. Agents first collect sensor readings ($X$), which are then fed into an estimation process that derives the secret state $S$, such as an aggregated or fused estimate. Based on this current estimate, agents may act, for example, moving sensors to new locations, which in turn alters the environment and influences subsequent sensor readings.

This system does not follow a simple chain of causality because the state $S$ is derived from the collective sensor data ($S \leftarrow X$) and simultaneously influences the future dataset due to feedback from agent actions ($S \leftrightarrow X$). Unlike cases where the secret is a direct subset of the dataset, here the secret $S$ emerges from agent readings and has a dynamic, bidirectional relationship with future observations.

\paragraph{Medical Diagnosis and Screening Pipelines}

In a healthcare setting, patients undergo multiple stages of tests, including blood tests, imaging, and genetic screening, which produce complex data. A final diagnosis or sensitive health condition is determined by combining the results of these tests.

The dataset $X$ consists of the patient’s preliminary test results and demographic data. The secret $S$, on the other hand, represents a diagnosis or sensitive health label, such as “high risk of condition $S$,” which is \textit{inferred} through a multi-stage screening protocol.

The process involves a causal flow where the raw dataset $X$, comprising lab values, images, and other data, is analyzed by a classifier or medical expert system. This system then derives the secret $S$, representing the diagnosis or health label. Once a tentative diagnosis is reached, follow-up tests or changes in the patient’s condition may provide new data, feeding it back into the diagnostic pipeline.

This system does not follow a simple chain of causality because the diagnosis $S$ is not a straightforward field within $X$. Instead, $S$ is a function of multiple correlated variables within $X$, potentially with feedback loops where new tests are ordered or the dataset evolves based on an initial diagnosis. As such, $S$ and $X$ are \textit{jointly dependent} rather than having a purely subset-based relationship ($S \not\subseteq X$).

\paragraph{Online Learning or Personalized Recommendation Systems}

An online platform, such as a streaming service or e-commerce site, collects click and consumption data over time to adjust its recommendations based on perceived user preferences. 
The dataset $X$ consists of observed user interactions, including clicks, dwell times, and watch history. The secret $S$ represents the user’s true preference profile, such as underlying tastes or personal attributes, or latent variables within the recommendation engine’s model.

The causal relationship in this system operates as follows. The platform updates the latent user profile ($S$) based on the newly observed dataset ($X$). The user’s future behavior is influenced by the platform’s recommendations, which are shaped by the current estimate of $S$. Over time, both $S$ and $X$ evolve together, creating a bidirectional feedback loop ($S \leftrightarrow X$).

This system does not follow a simple chain of causality because the preference $S$ is a hidden state or parameter, not a direct entry within $X$. Instead, $S$ is updated dynamically by user actions, and feedback loops emerge when preferences shift in response to the platform’s recommendations.

\paragraph{Cyber-Physical Systems With Feedback Control}

A smart grid or industrial control system uses sensor readings to adjust actuators in real-time. The system’s configuration or control policy might represent a proprietary or sensitive ``secret".

The dataset $X$ consists of real-time sensor data, including power flows, temperature readings, and network states. The secret $S$ captures the internal control policy or hidden control variables that the operator wants to keep confidential, such as a proprietary control algorithm or target operating set point.

The causal process operates as follows. Observations $X$ are fed into a controller that determines $S$, which represents the next system set point or policy choice. This set point then influences subsequent sensor readings, creating a feedback loop between the data and the control policy.

This system does not follow a simple chain of causality because standard differential privacy (DP) adjacency treats $S$ as part of the data. However, in reality, $S$ is an emergent or dynamically updated control policy that is not strictly present in the sensor readings. The relationship between $S$ and $x$ is bidirectional and may exhibit randomness due to system dynamics.

\subsection{DCP as Single-Point DP}\label{app:DCP_single_point_DP}

\section{CP As Single-Point DP}

Consider that the prior knowledge $\theta\in\Theta$ is a uniform distribution (or other non-informative prior) over $\mathcal{S}$.
In addition, let $\mathtt{D}(s,s'|\theta)$ represent Hamming distance and $\mathtt{d}=1$.
Thus, any adjacent pair $s$ and $s'$ with $\mathtt{D}(s,s'|\theta)\leq 1$ differ in only one entry (denoted by $s\sim s'$).
Thus, in this case, the (hypothetical) sensitive information is an entry of a secret, where the relationship between the sensitive information and the secret is deterministic.
Then, $(\epsilon, \delta)$-DCP is a notion of $(\epsilon, \delta)$-DP when we treat the dataset $x$ as some intermediate terms in the probabilistic pipeline from the secret to the output.

When we consider adjacent secrets with priors and $\mathcal{G}$ is deterministic and invertible, the DCP framework also mathematically coincides with the \textit{Bayesian differential privacy} (BDP).

\begin{definition}[$(\epsilon_{\omega}, \delta_{\omega})$-Bayesian Differential Privacy \cite{triastcyn2020bayesian}]
A randomized mechanism $\mathcal{M}:\mathcal{X}\mapsto \mathcal{Y}$ is \textup{$(\epsilon_{\omega}, \delta_{\omega})$-Bayesian differentially private ($(\epsilon_{\omega}, \epsilon_{\omega})$-BDP)} with $\epsilon_{\omega}\geq 0$ and $\delta_{\omega}\in[0,1]$, if for any $x\simeq x'$ differing in a single entry $z\sim \omega$, we have
\[
    \textup{Pr}\left[\mathcal{M}(x)\in \widehat{\mathcal{Y}}\right]\leq e^{\epsilon_{\omega}} \textup{Pr}\left[\mathcal{M}(x')\in \widehat{\mathcal{Y}}\right] + \delta_{\omega}, \forall \widehat{\mathcal{Y}}\subseteq \mathcal{Y},
    \]
    where the probability is taken over the randomness of the output response $y$ and the data entry $z$.
\end{definition}

The notion of BDP enjoys many properties of the standard DP \cite{triastcyn2020bayesian}, including post-processing, composition, and group privacy. 
The core idea of BDP lies in defining \textit{typical scenarios}.
A scenario is considered as \textit{typical} when all sensitive data is drawn from the same distribution \cite{triastcyn2020bayesian}.
This is motivated by the fact that many machine learning models are often designed and trained for specific data distributions, and such prior knowledge is usually known by the attackers.
The BDP framework adjusts the noise according to the data distribution and can provide a better expected privacy guarantee.

In the general DCP framework, we treat the entire secret as sensitive information and consider the probability space over all states, while the BDP only considers the space over a single differing entry.
Since the common data entries of the adjacent datasets are assumed to be known by the (worst-case) attacker, these common entries do not impact the distributions of the mechanism. 
In addition, both standard DP and BDP consider independent data.

That is, the privacy guarantees by DP and BDP do not consider the correlation between data entries.
Therefore, we can mathematically treat the DCP framework as the BDP framework when the \textit{virtual dataset} (that contains sensitive entry) is a single-point dataset and the true dataset $x$ is some intermediate term that is not publicly observable.
However, such a single-point (B)DP equivalence is valid only when there is one mechanism. 
When multiple DCP mechanisms are composed, this equivalence fails (See Section III for more detail).

\subsection{Well-Defined Dependence Structure}\label{app:well_defined_dependence}

In this section, we describe what the \textit{well-defined dependence structure} is (with slight abuse of notations for simplicity).

\subsubsection*{\textbf{D.1} Probability Space and Random Mechanisms}

Let \((\mathcal{X}, \mathcal{F}, \mathbb{P})\) be a probability space, where:
\begin{itemize}
    \item \(\mathcal{X}\) is the set of possible input datasets,
    \item \(\mathcal{F}\) is a \(\sigma\)-algebra of measurable subsets of \(\mathcal{X}\),
    \item \(\mathbb{P}\) is the probability measure defined on \((\mathcal{X}, \mathcal{F})\).
\end{itemize}

Let \(M_{1}, M_{2}, \dots, M_{k}\) be mechanisms (random variables) that introduce randomness to ensure privacy. Each mechanism \(M_i\) maps a dataset \(x \in \mathcal{X}\) and a random input \(\omega \in \Omega'\) (from a probability space \((\Omega', \mathcal{F}', \mathbb{P}')\)) to an output in a measurable space \((\mathcal{Y}_{i}, \mathcal{B}_{i})\). Formally:
\[
M_i: \mathcal{X} \times \Omega' \to \mathcal{Y}_i,
\]
where \(\mathcal{B}_{i} \subseteq 2^{\mathcal{Y}_{i}}\) is a \(\sigma\)-algebra on \(\mathcal{Y}_{i}\).

For a fixed dataset \(x \in \mathcal{X}\), each \(M_i(x)\) is a random variable defined as:
\[
M_i(x) = M_i(x; \omega), \quad \omega \sim \mathbb{P}'.
\]

\begin{enumerate}
    \item For a fixed \(x \in \mathcal{X}\), denote by \(F_i^{(x)}\) the (marginal) distribution function (CDF) of \(M_i(x)\):
    \begin{itemize}
        \item If \(M_i(x)\) has a density \(f_i^{(x)}\), then:
        \[
        F_i^{(x)}(m) = \int_{-\infty}^m f_i^{(x)}(y) \, dy, \quad \forall m \in \mathcal{Y}_i.
        \]
    \end{itemize}
    \item Let \(F^{(x)}\) be the joint distribution function of the vector:
    \[
    \mathbf{M}(x) = \bigl(M_1(x), M_2(x), \dots, M_k(x)\bigr).
    \]
    The joint CDF \(F^{(x)}\) is defined over the product space 
    \((\mathcal{Y}_1 \times \mathcal{Y}_2 \times \cdots \times \mathcal{Y}_k, \mathcal{B}_1 \otimes \mathcal{B}_2 \otimes \cdots \otimes \mathcal{B}_k)\) as:
    \[
    \begin{aligned}
        &F^{(x)}(m_1, m_2, \dots, m_k) \\
        &= \mathbb{P}\big(M_1(x) \leq m_1, \dots, M_k(x) \leq m_k\big).
    \end{aligned}
    \]
\end{enumerate}

A fundamental question in dependence in probability theory and statistics is how and in what ways $F^{x}$ deviates from independence.
More generally, how the joint law $F^{x}$ \textit{``couples"} the marginals $\{F^{x}_{i}\}^{k}_{i=1}$.

\subsection*{\textbf{D.2} Dependence Structure via Copulas}

\textbf{Probability-Integral Transforms }
We consider that each marginal $F^{x}_{i}$ is continuous for all $x\in\mathcal{X}$.
Define the probability-integral transform for each mechanism \(M_i(s)\):
\[
U_{i}(s, \omega) \;:=\; F_{i}^{(s)}\bigl(M_{i}(s, \omega)\bigr).
\]
Then each \(U_{i}\) is a random variable taking values in \([0,1]\). Indeed, by the usual probability-integral-transform result, \(U_{i} \sim \text{Uniform}(0,1)\) when \(M_{i}(s, \omega)\) is continuous.

\textbf{The Copula as the Joint Law of \(\mathbf{U}(s)\)}
Consider the random vector
\[
\begin{aligned}
    \mathbf{U}(s) &=\bigl(U_{1}(s),U_{2}(s),\dots,U_{k}(s)\bigr)\\
    &= \bigl(F_{1}^{(s)}(M_{1}(s)),\; F_{2}^{(s)}(M_{2}(s)),\;\dots,\; F_{k}^{(s)}(M_{k}(s))\bigr).
\end{aligned}
\]
This vector \(\mathbf{U}(s)\) lives in the unit cube \([0,1]^{k}\). Define the copula \(C^{(s)}\) for the dataset \(s \in \mathcal{S}\) as: for $\mathbf{u} \in [0,1]^{k}$,
\[
C^{(s)}(\mathbf{u})
\;:=\; \mathbb{P}\!\Bigl(U_{1}(s) \le u_{1},\, U_{2}(s) \le u_{2},\,\dots,\,U_{k}(s) \le u_{k}\Bigr).
\]
Then \(C^{(s)}\) is itself a distribution function on \([0,1]^{k}\), called the \emph{copula} associated with the dataset \(s\).

\textbf{Sklar’s Theorem}
By Sklar’s theorem, the original joint distribution \(F^{(s)}\) of \(\mathbf{M}(s)\) and the marginals \(F_{i}^{(s)}\) satisfy:
\[
F^{(s)}(m_{1},\dots,m_{k})
\;=\;
C^{(s)}\bigl(F_{1}^{(s)}(m_{1}), \dots, F_{k}^{(s)}(m_{k})\bigr).
\]
Moreover, we have the following.
\begin{itemize}
    \item \textbf{Existence:} \(C^{(s)}\) always exists for any valid joint \(F^{(s)}\) and marginals \(F_{i}^{(s)}\).
    \item \textbf{Uniqueness:} If each \(F_{i}^{(s)}\) is continuous, \(C^{(s)}\) is unique. (If some marginals are not continuous, the copula need not be unique but it still exists.)
\end{itemize}

Thus, from a rigorous standpoint, the dependence structure among \(\mathbf{M}(s)\) is precisely the copula \(C^{(s)}\)—that is, the joint distribution of the transformed vector \(\mathbf{U}(s)\).

If some marginal \(F_{i}^{(s)}\) is not continuous, we still define
\[
U_{i}(s) = F_{i}^{(s)}(M_{i}(s)).
\]
Then each \(U_{i}(s)\) is not necessarily \(\text{Uniform}(0,1)\) in the usual sense (it may have jumps/atoms), but \(\mathbf{U}(s)\) still lives in \([0,1]^{k}\). A copula \(C^{(s)}\) satisfying
\[
F^{(s)}(m_{1},\dots,m_{k})
\;=\; C^{(s)}\bigl(F_{1}^{(s)}(m_{1}),\dots,F_{k}^{(s)}(m_{k})\bigr)
\]
still exists (by Sklar’s theorem), although it may not be unique when there are atoms.

Putting it all together, a concise definition of \textit{well-defined dependence structure} is as follows.

\begin{definition}
    Let \(\mathbf{M}(s)\) be a \(k\)-dimensional random vector with continuous marginals \(F_{i}^{(s)}\). Define
\[
U_{i}(s) = F_{i}^{(s)}(M_{i}(s)),
\]
and let \(\mathbf{U}(s) = \bigl(U_{1}(s), \dots, U_{k}(s)\bigr)\). The joint distribution of \(\mathbf{U}(s)\), which is a probability measure on \([0,1]^{k}\), is called the \emph{copula} \(C^{(s)}\). This measure \(C^{(s)}\) completely characterizes the dependence structure among the components of \(\mathbf{M}(s)\).
\end{definition}

Intuitively, the marginals \(F_{i}^{(s)}\) “remove” the specific scale and shape of each mechanism’s distribution (making each \(U_{i}(s)\) uniform on \([0,1]\)).
Then, what remains (i.e., how the \(U_{i}(s)\) jointly vary in the unit cube) is the pure dependence structure among the mechanisms, independent of the marginal properties.

To summarize, the independence of \(\mathbf{M}(s)\) corresponds to the case where the copula \(C^{(s)}\) is the product copula, i.e., \(C^{(s)}(\mathbf{u}) = u_{1} \cdot u_{2} \cdots u_{k}\). Any other copula \(C^{(s)}\) indicates non-trivial dependence.
If \(\mathbf{M}(s)\) has a well-defined joint distribution \(F^{(s)}\) with marginals \(F_{i}^{(s)}\), a corresponding copula \(C^{(s)}\) always exists and uniquely captures the dependence structure (assuming continuous marginals). Specifically, the dependence among \(\mathbf{M}(s)\) is represented by the equivalence class of all joint distributions on \((\mathcal{Y}_{1} \times \cdots \times \mathcal{Y}_{k})\) that share the same marginals \(F_{i}^{(s)}\). Concretely, this dependence structure is identified with the copula \(C^{(s)}\), which is the joint distribution of the transformed vector \(\bigl(F_{1}^{(s)}(M_{1}(s)), \dots, F_{k}^{(s)}(M_{k}(s))\bigr)\).

\subsection{Gaussian Copula Perturbation}\label{app:gaussian_copula_per}

For any $\theta\in\Theta$, we let $\mathcal{Q}\equiv\{(s,s')|\mathtt{D}(s,s'|\theta)\leq\mathtt{d}\}$ for simplicity.
We start by defining \textit{bivariate Gaussian copula}.

\begin{definition}[Bivariate Gaussian Copula]
A bivariate copula \( C^{\mathtt{g}}_{\rho}:[0,1]^{2} \to [0,1] \) is called a \textup{Gaussian copula} if it is defined as  
\[
C^{\mathtt{g}}_{\rho}(u_{1}, u_{2}) = \Phi_{\rho}\big(\Phi^{-1}(u_{1}), \Phi^{-1}(u_{2})\big),
\]
where:
\begin{itemize}
    \item \(\Phi\) is the CDF of the standard normal distribution, mapping \(u \in [0,1]\) to \(\mathbb{R}\) via its inverse \(\Phi^{-1}\),
    \item \(\Phi_{\rho}\) is the joint CDF of a bivariate normal distribution with zero means, unit variances, and correlation coefficient \(\rho\).
\end{itemize}
This construction ensures \(C^{\mathtt{g}}_{\rho}\) satisfies the properties of a copula, mapping \([0,1]^2 \to [0,1]\) while capturing dependence structure via \(\rho\).
\end{definition}

\paragraph{Pseudo-random Sample Generation}
Let \(\xi_{1}\) and \(\xi_{2}\) be the density functions of two independent noise distributions \(V_{1} \in \mathcal{V}_{1}\) and \(V_{2} \in \mathcal{V}_{2}\), with corresponding CDFs \(\Xi_{1}\) and \(\Xi_{2}\). Additionally, let \(Z_{1} \sim N(\mu_{1}, \mathtt{var}_{1})\) and \(Z_{2} \sim N(0, 1)\), with \(\textup{F}_{1}(\cdot)\) as the CDF of \(N(\mu_{1}, \mathtt{var}_{1})\). For any \(\rho \neq 0\), the pseudo-random sample generation process is as follows:
\begin{enumerate}
    \item \textbf{Sample Generation}: Generate \((z_{1}, z_{2})\) where \(z_{1} \sim N(\mu_{1}, \mathtt{var}_{1})\) and \(z_{2} \sim N(0, 1)\).
    
    \item \textbf{Transform to Uniform Samples}: Compute \(u_{1} = \textup{F}_{1}(z_{1})\) and 
    \[
    u_{2} = \textup{F}_{\rho}\big(\rho z_{1} + \sqrt{1 - \rho^{2}} z_{2}\big),
    \]
    where \(\textup{F}_{\rho}\) is the CDF of \(N\big(\rho \mu_{1}, \rho^{2} \mathtt{var}_{1} + (1 - \rho^{2})\big)\).
    
    \item \textbf{Transform to Noise Samples}: Obtain \(v_{1} = \Xi^{-1}_{1}(u_{1})\) and \(v_{2} = \Xi^{-1}_{2}(u_{2})\), where \(\Xi^{-1}_{1}\) and \(\Xi^{-1}_{2}\) are the inverses of the CDFs corresponding to \(\xi_{1}\) and \(\xi_{2}\), respectively.
\end{enumerate}

Let \(\textup{PsedR}_{\rho}(\cdot|\Xi_{1}, \Xi_{2}):\mathbb{R} \times \mathbb{R} \to \mathcal{V}_{1} \times \mathcal{V}_{2}\) denote the deterministic mapping from \((z_{1}, z_{2})\) to \((v_{1}, v_{2})\) via the steps outlined above, such that 
\[
(v_{1}, v_{2}) = \textup{PsedR}_{\rho}(z_{1}, z_{2}|\Xi_{1}, \Xi_{2}).
\]
Additionally, let \(\overline{\textup{PR}}_{\rho}(Z_{1}, Z_{2}|\Xi_{1}, \Xi_{2})\) represent the joint distribution of \(V_{1}\) and \(V_{2}\) induced by \(N(\mu_{1}, \mathtt{var}_{1})\) and \(N(0, 1)\) through this process.

\paragraph{Copula Perturbation}
Following the pseudo-random sample generation, the distribution \(\overline{\textup{PR}}_{\rho}(Z_{1}, Z_{2}|\Xi_{1}, \Xi_{2})\) has a CDF given by the copula:
\[
C^{\mathtt{g}}_{\rho}(V_{1}, V_{2}|\Xi_{1}, \Xi_{2}) = \Phi_{\rho} \left(F_{1}^{-1}(\Xi_{1}(V_{1})), F_{\rho}^{-1}(\Xi_{2}(V_{2}))\right),
\]
where \(\Phi_{\rho}\) is the CDF of a bivariate normal distribution with correlation \(\rho\).

We say that a pair of mechanisms \(\mathcal{M}_{1}(\gamma_{1})\) and \(\mathcal{M}_{2}(\gamma_{2})\) are \textit{perturbed by the copula} \(C^{\mathtt{g}}_{\rho}(\Xi_{1}, \Xi_{2})\) if the noises \(v_{1}\) and \(v_{2}\) used to perturb \(\mathcal{M}_{1}(\gamma_{1})\) and \(\mathcal{M}_{2}(\gamma_{2})\), respectively, are sampled via
\[
(v_{1}, v_{2}) = \textup{PsedR}_{\rho}(z_{1}, z_{2}|\Xi_{1}, \Xi_{2}).
\]

Consider an arbitrary function \(\tilde{\eta}_{S}:\mathcal{S} \to \mathcal{B}\) that produces a univariate output \(\tilde{\eta}_{S}(s)\) for a given state \(s\). Let the sensitivity of \(\tilde{\eta}_{S}\) be defined as \(\mathtt{C}_{\textup{sen}} = \sup_{(s_{1}, s_{0}) \in \mathcal{Q}} |\tilde{\eta}_{S}(s_{0}) - \tilde{\eta}_{S}(s_{1})|\).

Suppose that for any \(s \in \mathcal{S}\),
\[
z_{1}(s) = \tilde{\eta}_{S}(s) + \tilde{z}_{1},
\]
where \(\tilde{z}_{1}\) is drawn from \(N\left(0, \mathtt{w}^2 \left(\frac{\mathtt{C}_{\textup{sen}}}{\epsilon_{\mathtt{c}}}\right)^2\right)\) with \(\mathtt{w} \geq 2\log(2/\delta_{\mathtt{c}})\), \(\epsilon_{\mathtt{c}} \geq 0\), and \(\delta_{\mathtt{c}} \in [0, 1)\), and \(\tilde{z}_{2}\) is drawn from \(N(0, 1)\).

Let \(\Xi_{1}\) and \(\Xi_{2}\) be the CDFs of two independent noise distributions inducing \(\gamma_{1}\) and \(\gamma_{2}\), respectively, such that the mechanisms \(\mathcal{M}_{1}(\gamma_{1})\) and \(\mathcal{M}_{2}(\gamma_{2})\) are \((\epsilon_{1}, \delta_{1})\)-CP and \((\epsilon_{2}, \delta_{2})\)-CP, respectively. Finally, let \(C^{\mathtt{g}}_{\rho}(\Xi_{1}, \Xi_{2}; \epsilon_{\mathtt{c}}, \delta_{\mathtt{c}})\) denote the induced copula. 

We then obtain the following theorem.

\begin{theorem}\label{thm:copula_1}
Suppose \(\mathcal{M}_{1}(\gamma_{1})\) and \(\mathcal{M}_{2}(\gamma_{2})\) are perturbed by the copula \(C^{\mathtt{g}}_{\rho}(\Xi_{1}, \Xi_{2}; \epsilon_{\mathtt{c}}, \delta_{\mathtt{c}})\) for some \(\epsilon_{\mathtt{c}} \geq 0\) and \(\delta_{\mathtt{c}} \in [0, 1)\). Then, the following holds:
\begin{itemize}
    \item[(i)] \(\mathcal{M}_{1}(\gamma_{1})\) and \(\mathcal{M}_{2}(\gamma_{2})\) remain \((\epsilon_{1}, \delta_{1})\)-CP and \((\epsilon_{2}, \delta_{2})\)-CP, respectively.
    \item[(ii)] For the composition of \(\mathcal{M}_{1}(\gamma_{1})\) and \(\mathcal{M}_{2}(\gamma_{2})\),
    \[
    \overline{\textbf{opt}}_{\delta_{g}} = \textup{OptComp}((\epsilon_{\mathtt{c}}, \delta_{\mathtt{c}}), \epsilon_{1}, \delta_{1}, \epsilon_{2}, \delta_{2}, \delta_{g}),
    \]
    for any \((s_{1}, s_{0}) \in \mathcal{Q}\) and \(\delta_{g} \in [0, 1)\), where \(\textup{OptComp}\) is given by Theorem 1.5 of \cite{murtagh2015complexity}.
\end{itemize}
\end{theorem}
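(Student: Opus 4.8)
The plan is to split the argument along the privacy-loss decomposition of Appendix~\ref{app:section_III_details}, treating the marginal contributions and the copula-induced contribution separately, since the defining feature of a copula perturbation is that it alters only the joint dependence structure and leaves the one-dimensional marginals intact.

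For part (i), I would apply the probability-integral transform directly to the pseudo-random sample generation. Because $F_{1}$ is the CDF matching the law of $z_{1}$ and $F_{\rho}$ is the CDF matching the law of $\rho z_{1}+\sqrt{1-\rho^{2}}z_{2}$, both $u_{1}$ and $u_{2}$ are uniform on $[0,1]$, so the inverse transforms $v_{1}=\Xi_{1}^{-1}(u_{1})$ and $v_{2}=\Xi_{2}^{-1}(u_{2})$ have exact marginal laws $\Xi_{1}$ and $\Xi_{2}$. Consequently the effective marginal densities $\psi_{1}^{s},\psi_{2}^{s}$ coincide with those of the unperturbed mechanisms, the individual PLRVs $\textbf{L}_{\psi_{i}^{s_{0}},\psi_{i}^{s_{1}}}$ are unchanged, and each $\mathcal{M}_{i}(\gamma_{i})$ remains $(\epsilon_{i},\delta_{i})$-CP. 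This settles (i).

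For part (ii), the core is to identify the copula term $\textbf{L}^{\mathcal{G}}_{c_{s_{0}},c_{s_{1}}}(\vec{Y})$ with the PLRV of a Gaussian mechanism. Since the marginals are secret-independent after the transform, all secret dependence of the joint law must reside in the copula, and by construction this dependence is induced entirely through the shared latent draw $z_{1}(s)=\tilde{\eta}_{S}(s)+\tilde{z}_{1}$. The observed pair $(Y_{1},Y_{2})$ couples only through $z_{1}$, so the joint $s_{0}$-versus-$s_{1}$ likelihood ratio should reduce to the likelihood ratio of the latent variable, distributed as $N(\tilde{\eta}_{S}(s),\sigma^{2})$ with $\sigma=\mathtt{w}\,\mathtt{C}_{\textup{sen}}/\epsilon_{\mathtt{c}}$. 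I would make this precise by writing the Gaussian-copula density ratio $\log(c_{s_{0}}/c_{s_{1}})$ in terms of $\Phi^{-1}(\Psi_{i}(\cdot))$, substituting back the generating normals, and verifying that the $z_{2}$-component cancels so the expression collapses to the Gaussian log-likelihood ratio. The standard Gaussian-mechanism guarantee, together with the choice $\mathtt{w}\geq 2\log(2/\delta_{\mathtt{c}})$ and the adjacency bound $|\tilde{\eta}_{S}(s_{0})-\tilde{\eta}_{S}(s_{1})|\leq \mathtt{C}_{\textup{sen}}$, then certifies that this term is $(\epsilon_{\mathtt{c}},\delta_{\mathtt{c}})$-DP. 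Finally, recalling that $\overline{\textbf{opt}}_{\delta_{g}}$ is defined as the optimal $\epsilon_{g}$ for the composition in which the copula PLRV is treated as an additional independent mechanism $\widehat{\mathcal{N}}(c)$, I would note that the three PLRVs of $\mathcal{M}_{1}(\gamma_{1})$, $\mathcal{M}_{2}(\gamma_{2})$, and $\widehat{\mathcal{N}}(c)$ are mutually independent, so their privacy-loss distributions convolve; applying the optimal composition theorem (Theorem 1.5 of \cite{murtagh2015complexity}) to the three parameter pairs $(\epsilon_{\mathtt{c}},\delta_{\mathtt{c}})$, $(\epsilon_{1},\delta_{1})$, $(\epsilon_{2},\delta_{2})$ yields exactly $\textup{OptComp}((\epsilon_{\mathtt{c}},\delta_{\mathtt{c}}),\epsilon_{1},\delta_{1},\epsilon_{2},\delta_{2},\delta_{g})$.

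The hardest step is the identification in part (ii): showing that the copula-induced PLRV is \emph{exactly}, not merely bounded by, the Gaussian-mechanism PLRV. The delicate point is to confirm that the secret survives only as a shift of the shared latent $z_{1}$ while the independent component $z_{2}$ drops out of the $s_{0}$-versus-$s_{1}$ ratio, so that introducing the copula neither wastes nor understates privacy relative to the underlying Gaussian mechanism; securing this exact cancellation, rather than a one-sided inequality, is what makes the equality in (ii) nontrivial.
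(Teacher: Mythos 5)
Your proposal is correct and follows essentially the same route as the paper's own proof: part (i) is argued exactly as you do, via the probability-integral transform showing that the copula perturbation leaves the marginals $\Xi_{1},\Xi_{2}$ (and hence each mechanism's $(\epsilon_{i},\delta_{i})$-CP guarantee) untouched, and part (ii) likewise identifies the copula channel with the Gaussian mechanism on $z_{1}(s)$ and then invokes $\textup{OptComp}$ of Theorem 1.5 of \cite{murtagh2015complexity} on the three independent parameter pairs $(\epsilon_{\mathtt{c}},\delta_{\mathtt{c}}),(\epsilon_{1},\delta_{1}),(\epsilon_{2},\delta_{2})$. The only difference is how the identification in part (ii) is certified: where you propose an explicit density-ratio computation showing the $z_{2}$-component cancels so the copula PLRV collapses to the Gaussian log-likelihood ratio, the paper reaches the same exactness more abstractly --- Theorem 4.3 of \cite{dwork2022differential} gives $(\epsilon_{\mathtt{c}},\delta_{\mathtt{c}})$-DP for $z_{1}(s)$, post-processing transfers it to the copula output, and tightness follows because the map from $(z_{1}(s),z_{2})$ to that output is deterministic and one-to-one with $z_{2}$ secret-independent --- which is the same mathematical content as your cancellation argument.
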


Consider a collection of independent mechanisms \(\{\mathcal{M}_{i}(\gamma_{i})\}_{i}\), where each \(\mathcal{M}_{i}(\gamma_{i})\) is \((\epsilon_{i}, \delta_{i})\)-CP, guaranteed by noise perturbation with distribution \(\Xi_{i}\).

\begin{definition}[Invertible Noise Perturbation]
A noise perturbation scheme is \textup{invertible} if there exists a bijective mapping \(\mathtt{A}: \mathcal{V}_{i} \to \mathcal{Y}_{i}\) such that \(y_{i} = \mathtt{A}(v_{i})\) if and only if \(v_{i} = \mathtt{A}^{-1}(y_{i})\), for \(i \in \{1, 2\}\).
\end{definition}

Without loss of generality, select \(\mathcal{M}_{1}(\gamma_{1})\) and \(\mathcal{M}_{2}(\gamma_{2})\) such that they are perturbed by the copula \(C^{\mathtt{g}}_{\rho}(\Xi_{1}, \Xi_{2}; \epsilon_{\mathtt{c}}, \delta_{\mathtt{c}})\). Additionally, define
\[
\textbf{L}_{s_{0}, s_{1}}(Y_{1}, Y_{2}; c^{\mathtt{g}}_{\rho}) \equiv \log \frac{c^{\mathtt{g}}_{\rho}(\Psi_{1}(Y_1 \mid s_{1}), \Psi_{2}(Y_2 \mid s_{1}) \mid \epsilon_{\mathtt{c}}, \delta_{\mathtt{c}})}{c^{\mathtt{g}}_{\rho}(\Psi_{1}(Y_1 \mid s_{0}), \Psi_{2}(Y_2 \mid s_{0}) \mid \epsilon_{\mathtt{c}}, \delta_{\mathtt{c}})},
\]
where \(c^{\mathtt{g}}_{\rho}(\cdot \mid \epsilon_{\mathtt{c}}, \delta_{\mathtt{c}})\) is the Gaussian copula density of \(C^{\mathtt{g}}_{\rho}\), and \((Y_{1}, Y_{2}) \sim c^{\mathtt{g}}_{\rho}(\Psi_{1}(Y_1 \mid s_{1}), \Psi_{2}(Y_2 \mid s_{1}))\).

\begin{proposition}\label{prop:copula_2}
Suppose the noise perturbation is invertible. Under the composition of \(k\) mechanisms \(\{\mathcal{M}_{j}(\gamma_{j})\}_{j=1}^{k}\), where \(\{\mathcal{M}_{1}(\gamma_{1}), \mathcal{M}_{2}(\gamma_{2})\}\) are perturbed by the copula \(C^{\mathtt{g}}_{\rho}(\Xi_{1}, \Xi_{2}; \epsilon_{\mathtt{c}}, \delta_{\mathtt{c}})\), the privacy loss random variable (PLRV) is
\[
\widetilde{\mathbf{L}}_{b_{s_{0}}, b_{s_{1}}}(\vec{Y}; C^{\mathtt{g}}_{\rho}) = \textbf{L}_{b_{s_{0}}, b_{s_{1}}}(\vec{Y}) + \textbf{L}_{s_{0} \| s_{1}}(Y_{1}, Y_{2}; c^{\mathtt{g}}_{\rho}),
\]
where \(\textbf{L}_{b_{s_{0}}, b_{s_{1}}}(\vec{Y})\) is given by \((\ref{eq:privacy_loss_RV})\).
\end{proposition}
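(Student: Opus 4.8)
The plan is to derive the decomposition by comparing the joint output density of the copula-perturbed composition with that of the unperturbed composition, conditioned on a fixed secret $s$, and then forming the log-ratio that defines the PLRV. I would fix $s$ and write down the joint density $\tilde{b}_s(\vec{Y})$ of $\vec{Y}=(Y_1,\dots,Y_k)$ under the composition in which $\mathcal{M}_1(\gamma_1)$ and $\mathcal{M}_2(\gamma_2)$ are perturbed by the Gaussian copula $C^{\mathtt{g}}_{\rho}(\Xi_1,\Xi_2;\epsilon_{\mathtt{c}},\delta_{\mathtt{c}})$, and compare it against the unperturbed density $b_s$ whose PLRV is (\ref{eq:privacy_loss_RV}).

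The central structural claim I would establish is the multiplicative factorization
\[
\tilde{b}_s(\vec{Y}) = c^{\mathtt{g}}_{\rho}(\Psi_1(Y_1\mid s),\Psi_2(Y_2\mid s))\,b_s(\vec{Y}).
\]
To justify it, I would invoke the invertibility of the noise perturbation: since $\mathtt{A}:\mathcal{V}_i\to\mathcal{Y}_i$ is a bijection for $i\in\{1,2\}$, the change of variables $y_i=\mathtt{A}(v_i)$ preserves the marginal laws — consistent with Theorem~\ref{thm:copula_1}(i), so the $\psi_i$, and hence the $\Psi_i(\cdot\mid s)$, are unchanged — and copulas are invariant under such strictly monotone marginal transforms. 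Consequently, the Gaussian copula that $\textup{PsedR}_{\rho}$ imposes on the noise pair $(V_1,V_2)$ transfers verbatim to the output pair $(Y_1,Y_2)$ on the CDF scale $(\Psi_1(\cdot\mid s),\Psi_2(\cdot\mid s))$, while the dependence among the remaining coordinates and the $S$-$X$ copula $c_s$ inherited from $\{\theta,\mathcal{G}\}$ are left intact.

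Granting this factorization, the result follows by a one-line computation. By the PLRV definition,
\[
\widetilde{\mathbf{L}}_{b_{s_0},b_{s_1}}(\vec{Y};C^{\mathtt{g}}_{\rho})
= \log\frac{\tilde{b}_{s_0}(\vec{Y})}{\tilde{b}_{s_1}(\vec{Y})}.
\]
Substituting the factorization and splitting the logarithm of the product, the right-hand side becomes the sum of $\log\bigl(b_{s_0}(\vec{Y})/b_{s_1}(\vec{Y})\bigr)=\textbf{L}_{b_{s_0},b_{s_1}}(\vec{Y})$, given by (\ref{eq:privacy_loss_RV}), and the log-ratio of the Gaussian copula densities evaluated at the $s_0$- and $s_1$-conditioned CDFs, which is exactly $\textbf{L}_{s_0\|s_1}(Y_1,Y_2;c^{\mathtt{g}}_{\rho})$ as defined just above the statement. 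This yields the asserted additive decomposition.

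The hard part will be rigorously establishing the multiplicative factorization of $\tilde{b}_s$, i.e.\ that the deliberately injected Gaussian copula enters the joint density as a single multiplicative factor without perturbing either the marginals $\psi_i$ or the ambient copula $c_s$ induced by $\{\theta,\mathcal{G}\}$. The delicate point is that the output CDFs $\Psi_i(\cdot\mid s)$ blend the secret-dependent query law $P_{\theta}(x\mid s)$ with the injected noise, so one must verify that the secret-dependent construction $z_1(s)=\tilde{\eta}_S(s)+\tilde{z}_1$ inside $\textup{PsedR}_{\rho}$ induces \emph{precisely} the copula $c^{\mathtt{g}}_{\rho}$ on the $(\Psi_1(\cdot\mid s),\Psi_2(\cdot\mid s))$ scale. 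This is exactly where invertibility of the perturbation and the copula-invariance principle carry the argument, and where I would concentrate the technical care.
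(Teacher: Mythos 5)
Your proposal follows essentially the same route as the paper's proof: the paper likewise uses invertibility of $\mathtt{A}$ and a change of variables on the CDFs (its Lemma~\ref{lemma:proof_copula_for_outputs}) to show the Gaussian copula imposed on the noise pair transfers to the output pair on the $(\Psi_{1}(\cdot\mid s),\Psi_{2}(\cdot\mid s))$ scale, then factorizes the perturbed joint density into the ambient copula, the factor $c^{\mathtt{g}}_{\rho}$, and the marginals, and splits the logarithm. The one point you elide is that copula invariance holds verbatim only for strictly increasing $\mathtt{A}$; for strictly decreasing $\mathtt{A}$ the paper must work with the reflected copula (arguments $1-\Psi_{i}$), obtaining the analogous identity with $\widetilde{c}^{\mathtt{g}}_{\rho}$ in place of $\widehat{c}^{\mathtt{g}}_{\rho}$.
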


Proposition \ref{prop:copula_2} demonstrates that the final PLRV can be additively decomposed into the PLRV induced by the copula perturbation and the unperturbed PLRV of the composition.

In the copula perturbation, we use \( N\left(0, \mathtt{w}^2\left(\frac{\mathtt{C}_{\textup{sen}}}{\epsilon_{\mathtt{c}}}\right)^2\right) \) with \( \mathtt{w} \geq 2\log(2/\delta_{\mathtt{c}}) \) and \( N(0,1) \) to generate the latent variables \( z_{1} \) and \( z_{2} \), respectively. Additionally, the mechanism \( \tilde{\eta}_{S} \) introduces state dependence, where the latent variable \( z_{1} \) acts as Gaussian noise to perturb the output of \( \tilde{\eta}_{S} \). As a result, this Gaussian copula achieves \((\epsilon_{\mathtt{s}}, \delta_{\mathtt{c}})\)-CP and contributes to the total cumulative privacy loss of the mechanisms under composition.

The copula perturbation is straightforward to implement, with state dependence captured entirely by the deterministic mechanism \( \tilde{\eta}_{S} \), independent of the underlying Gaussian copula distributions. Alternative approaches to copula perturbation are also possible. For example, the parameters of the bivariate Gaussian distribution (e.g., correlation coefficient \( \rho \), mean, or variance) used in the copula framework could be selected based on the states, introducing state dependence directly into the copula.

During the pseudo-random sample generation process, noises \( v_{1} \) and \( v_{2} \) are jointly sampled to perturb the selected mechanisms \(\mathcal{M}_{1}(\gamma_{1})\) and \(\mathcal{M}_{2}(\gamma_{2})\). However, the copula perturbation does not require these two mechanisms to operate simultaneously. The key requirement is that the noise used to perturb each mechanism is generated via the pseudo-random sample generation process.

\subsection{Proof of Theorem \ref{thm:copula_1} in Appendix \ref{app:gaussian_copula_per}}

Part (i) of Theorem \ref{thm:copula_1} demonstrates that our modified Gaussian copula \( C^{\mathtt{g}}_{\rho}(\Xi_{1}, \Xi_{2}; \epsilon_{\mathtt{c}}, \delta_{\mathtt{c}}) \) is a valid copula that retains the marginals as \(\gamma_{1}\) and \(\gamma_{2}\).

Part (ii) of Theorem \ref{thm:copula_1} shows that the conservative upper bound provided by Theorem \ref{thm:order_composition} can be computed using Theorem 1.5 of \cite{murtagh2015complexity}, given \((\epsilon_{\mathtt{c}}, \delta_{\mathtt{c}})\), which parameterize the Gaussian distribution of \(Z_{1}\) in the process \(\textup{PsedR}_{\rho}(\cdot \mid \Xi_{1}, \Xi_{2})\).

For simplicity, we omit \(\epsilon_{\mathtt{c}}\) and \(\delta_{\mathtt{c}}\), and write \(C^{\mathtt{g}}_{\rho}(\Xi_{1}, \Xi_{2})\).

\subsubsection{\textbf{Proof of Part (i)}}

Part (i) of Theorem \ref{thm:copula_1} can be proved by showing that 
\[
C^{\mathtt{g}}_{\rho}(V_{1}, V_{2} \mid \Xi_{1}, \Xi_{2}) = \Phi_{\rho'} \left(F_{1}^{-1}(\Xi_{1}(V_{1})), F_{\rho}^{-1}(\Xi_{2}(V_{2}))\right)
\]
is a valid copula for any \(\rho' \in (-1, 1)\) and \(\rho \in (-1, 1) \setminus \{0\}\), ensuring that the marginals of the joint distribution \(\overline{\textup{PR}}(Z_{1}, Z_{2} \mid \Xi_{1}, \Xi_{2})\) are \(\Xi_{1}\) and \(\Xi_{2}\).

Let
\[
C^{g}_{\rho}(U_{1}, U_{2}) = \Phi_{\rho'}\left(F_{1}^{-1}(U_{1}), F_{\rho}^{-1}(U_{2})\right).
\]
The function \(C^{\mathtt{g}}_{\rho}(V_{1}, V_{2} \mid \Xi_{1}, \Xi_{2})\) is a valid copula if the following conditions hold:
\begin{itemize}
    \item \(C^{g}_{\rho}(U_{1}, U_{2})\) is a valid joint CDF.
    \item \(C^{g}_{\rho}(U_{1}, U_{2})\) has uniform marginals.
\end{itemize}
Condition (a) follows directly from the definition of \(\Phi_{\rho'}\), which is the joint CDF of a bivariate Gaussian distribution with zero mean and correlation coefficient \(\rho'\).

\textbf{Uniform Marginal for \(U_{1}\) }
Given any \(s\), the CDF of \(z_{1}(s)\) is given by
\[
F_{1}(z_{1}(s)) = \Phi\left(\frac{z_{1} - \tilde{\eta}_{S}(s)}{\sqrt{\mathtt{var}_{1}}}\right),
\]
where \(\Phi\) is the CDF of the standard Gaussian distribution, and \(\mathtt{var}_{1} = \mathtt{w}^2\left(\frac{\mathtt{C}_{\textup{sen}}}{\epsilon_{\mathtt{c}}}\right)^2\).
By the probability integral transform, \(U_{1} = F_{1}(Z_{1}(s))\) is uniformly distributed on \([0, 1]\) because \(F_{1}\) maps the real line to \([0, 1]\) and is a continuous, strictly increasing function.

\textbf{Uniform Marginal for \(U_{2}\) }
Given any \(s\), \(F_{\rho}\) is the CDF of the random variable 
\[
\hat{Z}_{2}(s) = \rho Z_{1}(s) + \sqrt{1 - \rho^{2}} Z_{2},
\]
where \(Z_{2} \sim N(0, 1)\). Therefore, 
\[
F_{\rho}(\hat{z}_{2}(s)) = \Phi\left(\frac{\hat{z}_{2}(s) - \rho\tilde{\eta}_{S}(s)}{\sqrt{\rho^{2} \mathtt{var}_{1} + (1 - \rho^{2})}}\right).
\]
By the probability integral transform, \(U_{2} = F_{\rho}(\hat{Z}_{2}(s))\) is uniformly distributed on \([0, 1]\) because \(F_{\rho}\) is a valid CDF that maps \(\hat{Z}_{2}(s)\) to a uniformly distributed random variable on \([0, 1]\).

Therefore, \(C^{\mathtt{g}}_{\rho}(V_{1}, V_{2} \mid \Xi_{1}, \Xi_{2})\) is a valid copula with marginal distributions \(\Xi_{1}\) and \(\Xi_{2}\). Here, each \(\Xi_{i}\) is independently chosen (without considering composition) such that the mechanism \(\mathcal{M}_{i}(\gamma_{i})\) is \((\epsilon_{i}, \delta_{i})\)-CP for \(i \in \{1, 2\}\). 

Furthermore, if an attacker only has access to one of the two mechanisms, each \(\mathcal{M}_{i}(\gamma_{i})\) of the \(C^{\mathtt{g}}_{\rho}(V_{1}, V_{2} \mid \Xi_{1}, \Xi_{2})\)-perturbed \(\{\mathcal{M}_{1}(\gamma_{1}), \mathcal{M}_{2}(\gamma_{2})\}\) maintains \((\epsilon_{i}, \delta_{i})\)-CP.

\subsubsection{\textbf{Proof of Part (ii)}}

Recall that \(\mathtt{C}_{\textup{sen}} = \sup_{(s_{1}, s_{0}) \in \mathcal{Q}} |\tilde{\eta}_{S}(s_{0}) - \tilde{\eta}_{S}(s_{1})|\) is the sensitivity of a function \(\tilde{\eta}_{S} : \mathcal{S} \to \mathcal{B}\), which outputs a univariate value \(\tilde{\eta}_{S}(s)\) given the state \(s\). 

For any state \(s \in \mathcal{S}\), let 
\[
z_{1}(s) = \tilde{\eta}_{S}(s) + \tilde{z}_{1},
\]
where \(\tilde{z}_{1}\) is drawn from \(N\left(0, \mathtt{w}^2\left(\frac{\mathtt{C}_{\textup{sen}}}{\epsilon_{\mathtt{c}}}\right)^2\right)\), with \(\mathtt{w} \geq 2\log(2/\delta_{\mathtt{c}})\), \(\epsilon_{\mathtt{c}} \geq 0\), and \(\delta_{\mathtt{c}} \in [0, 1)\).

It is straightforward to verify that \(z_{1}(s_{0})\) and \(z_{1}(s_{1})\) are normally distributed with means \(\tilde{\eta}_{S}(s_{0})\) and \(\tilde{\eta}_{S}(s_{1})\), respectively, and a common variance 
\[
\mathtt{var}_{1} = \mathtt{w}^2\left(\frac{\mathtt{C}_{\textup{sen}}}{\epsilon_{\mathtt{c}}}\right)^2.
\]

For any \(s\), let
\[
C^{g}_{\rho}(U_{1}(s), U_{2}(s)) = \Phi_{\rho'}\left(F^{-1}_{1}(U_{1}), F^{-1}_{\rho}(U_{2})\right),
\]
where \(U_{1}(s) = F_{1}(Z_{1}(s))\) and \(U_{2}(s) = F_{\rho}(\hat{Z}_{2}(s))\). 

By Theorem 4.3 in \cite{dwork2022differential}, \(z_{1}(s)\) satisfies \((\epsilon_{\mathtt{c}}, \delta_{\mathtt{c}})\)-SDP. By the post-processing property of differential privacy, \(C^{g}_{\rho}(U_{1}(s), U_{2}(s))\) is also \((\epsilon_{\mathtt{c}}, \delta_{\mathtt{c}})\)-SDP.

Here, the formulation \(\hat{Z}_{2}(s) = \rho Z_{1}(s) + \sqrt{1 - \rho^{2}} Z_{2}\) is deterministic and represents a linear combination of \(Z_{1}(s)\) and \(Z_{2}\), where \(\rho \neq -1\) or \(1\) to avoid degenerate cases. The CDFs are strictly monotonic functions. In addition, all inverse CDFs are assumed to be continuous and well-defined.

Therefore, the process \(C^{g}_{\rho}(U_{1}(s), U_{2}(s))\) is a deterministic one-to-one mapping from \((z_{1}(s), z_{2})\). Thus, the parameters \((\epsilon_{\mathtt{c}}, \delta_{\mathtt{c}})\) of \(C^{g}_{\rho}(U_{1}(s), U_{2}(s))\)'s CP are tight given \((z_{1}(s), z_{2})\).

Recall that the privacy loss random variable can be defined as
\[
\begin{aligned}
    \widehat{\textbf{L}}_{s_{0}, s_{1}}(U_{1}(s), U_{2}(s), Y_{1}, Y_{2}) &\equiv \textbf{L}_{\hat{c}_{s_{0}}, \hat{c}_{s_{1}}}(U_{1}(s), U_{2}(s)) \\
    &\quad + \sum_{i=1}^{2} \textbf{L}_{\psi_{i}^{s_{0}}, \psi_{i}^{s_{1}}}(Y_{i}),
\end{aligned}
\]
where \(\hat{c}_{s} = c^{\mathtt{g}}_{\rho}(U_{1}(s), U_{2}(s))\).

If the copula is treated as an independent mechanism accessing any dataset related to \(s\), we can derive the conservative bound 
\[
\overline{\textbf{opt}}_{\delta_{g}}= \textup{OptComp}((\epsilon_{\mathtt{c}}, \delta_{\mathtt{c}}), \epsilon_{1}, \delta_{1}, \epsilon_{2}, \delta_{2}, \delta_{g}),
\]
for any \((s_{1}, s_{0}) \in \mathcal{Q}\) and \(\delta_{g} \in [0, 1)\), where \(\textup{OptComp}\) is provided by Theorem 1.5 of \cite{murtagh2015complexity}.

\hfill $\square$

\subsection{Proof of Proposition \ref{prop:copula_2} in Appendix \ref{app:gaussian_copula_per}}

For simplicity, we omit \((\epsilon_{\mathtt{c}}, \delta_{\mathtt{c}})\) and write \(C^{\mathtt{g}}_{\rho}(\Xi_{1}, \Xi_{2}; \epsilon_{\mathtt{c}}, \delta_{\mathtt{c}})\).

Recall that \(\psi_{i} : \mathcal{S} \to \Delta(\mathcal{Y}_{i})\) is the density function induced by \(\gamma_{i}\) and \(\theta\), and \(\Psi_{i}\) is the corresponding CDF.

Suppose the noise perturbation is given by \(y_{i} = \mathtt{A}(v_{i})\), where \(\mathtt{A}\) is strictly increasing. Then, \(\mathtt{A}^{-1}(Y_{i}) \leq v_{i}\) is equivalent to \(Y_{i} \leq \mathtt{A}(v_{i})\). Thus, we have
\[
\Xi_{i}(v_{i}) = \textup{Pr}\left(\mathtt{A}^{-1}(Y_{i}) \leq v_{i}\right) = \Psi_{i}(\mathtt{A}(v_{i}) \mid s).
\]

Similarly, suppose the noise perturbation is given by \(y_{i} = \mathtt{A}(v_{i})\), where \(\mathtt{A}\) is strictly decreasing. Then, \(\mathtt{A}^{-1}(Y_{i}) \leq v_{i}\) is equivalent to \(Y_{i} \geq \mathtt{A}(v_{i})\). Thus, we have
\[
\Xi_{i}(v_{i}) = \textup{Pr}\left(\mathtt{A}^{-1}(Y_{i}) \geq v_{i}\right) = 1 - \Psi_{i}(\mathtt{A}(v_{i}) \mid s).
\]

Therefore, when the invertible \(\mathtt{A}\) is strictly increasing or strictly decreasing, we have:
\[
\begin{aligned}
    &\widehat{C}^{\mathtt{g}}_{\rho'}(V_{1}, V_{2} \mid \Xi_{1}, \Xi_{2}; s) \\
    &= \Phi_{\rho'}\left(F_{1}^{-1}(\Psi_{1}(\mathtt{A}(V_{1}) \mid s)), F_{\rho}^{-1}(\Psi_{2}(\mathtt{A}(V_{2}) \mid s))\right),
\end{aligned}
\]
and
\[
\begin{aligned}
    &\widetilde{C}^{\mathtt{g}}_{\rho'}(V_{1}, V_{2} \mid \Xi_{1}, \Xi_{2}; s) \\
    &= \Phi_{\rho'}\left(F_{1}^{-1}(1 - \Psi_{1}(\mathtt{A}(V_{1}) \mid s)), F_{\rho}^{-1}(1 - \Psi_{2}(\mathtt{A}(V_{2}) \mid s))\right).
\end{aligned}
\]

In addition, let \(\widehat{c}^{\mathtt{g}}_{\rho'}(V_{1}, V_{2} \mid s)\) and \(\widetilde{c}^{\mathtt{g}}_{\rho'}(V_{1}, V_{2} \mid s)\) denote the corresponding copula densities.

For the two mechanisms \(\mathcal{M}_{1}(\gamma_{1})\) and \(\mathcal{M}_{2}(\gamma_{2})\) perturbed by \(C^{\mathtt{g}}_{\rho}(V_{1}, V_{2} \mid \Xi_{1}, \Xi_{2})\), let \(\Psi_{[2]}(Y_{1}, Y_{2} \mid s, C^{\mathtt{g}}_{\rho})\) denote the joint CDF of the outputs \(Y_{1}\) and \(Y_{2}\).

\begin{lemma}\label{lemma:proof_copula_for_outputs}
The following holds:
\begin{itemize}
    \item[(i)] If the invertible \(\mathtt{A}\) is strictly increasing, then
    \[
    \Psi_{[2]}(Y_{1}, Y_{2} \mid s, C^{\mathtt{g}}_{\rho}) = \widehat{C}^{\mathtt{g}}_{\rho}(Y_{1}, Y_{2} \mid \Xi_{1}, \Xi_{2}; s).
    \]
    \item[(ii)] If the invertible \(\mathtt{A}\) is strictly decreasing, then
    \[
    \Psi_{[2]}(Y_{1}, Y_{2} \mid s, C^{\mathtt{g}}_{\rho}) = \widetilde{C}^{\mathtt{g}}_{\rho}(Y_{1}, Y_{2} \mid \Xi_{1}, \Xi_{2}; s).
    \]
\end{itemize}
\end{lemma}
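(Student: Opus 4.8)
The plan is to compute the joint output CDF $\Psi_{[2]}(Y_1,Y_2\mid s,C^{\mathtt{g}}_{\rho})$ directly from the joint law of the noise pair $(V_1,V_2)$, transporting it through the invertible map $Y_i=\mathtt{A}(V_i)$ and then substituting the two probability-integral identities derived immediately before the statement, namely $\Xi_i(v_i)=\Psi_i(\mathtt{A}(v_i)\mid s)$ when $\mathtt{A}$ is increasing and $\Xi_i(v_i)=1-\Psi_i(\mathtt{A}(v_i)\mid s)$ when $\mathtt{A}$ is decreasing. By Theorem~\ref{thm:copula_1}(i), the joint CDF of $(V_1,V_2)$ is exactly the Gaussian copula $C^{\mathtt{g}}_{\rho}(\cdot\mid\Xi_1,\Xi_2)=\Phi_{\rho'}(F_1^{-1}(\Xi_1(\cdot)),F_{\rho}^{-1}(\Xi_2(\cdot)))$ with marginals $\Xi_1,\Xi_2$, so the whole argument reduces to a monotone change of variables followed by rewriting $\Xi_i$ in terms of $\Psi_i$ and $\mathtt{A}$.

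First I would treat part~(i), the strictly increasing case. Since $\mathtt{A}$ is a strictly increasing bijection, the event $\{Y_i\le y_i\}$ coincides with $\{V_i\le\mathtt{A}^{-1}(y_i)\}$, so $\Psi_{[2]}(y_1,y_2\mid s)=\textup{Pr}(V_1\le\mathtt{A}^{-1}(y_1),V_2\le\mathtt{A}^{-1}(y_2))=C^{\mathtt{g}}_{\rho}(\mathtt{A}^{-1}(y_1),\mathtt{A}^{-1}(y_2)\mid\Xi_1,\Xi_2)$. Substituting $\Xi_i(\mathtt{A}^{-1}(y_i))=\Psi_i(y_i\mid s)$ (the increasing identity evaluated at $v_i=\mathtt{A}^{-1}(y_i)$) into the Gaussian copula expression collapses it to the defining form of $\widehat{C}^{\mathtt{g}}_{\rho}(y_1,y_2\mid\Xi_1,\Xi_2;s)$, which gives~(i). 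This direction is essentially bookkeeping once continuity of the marginals is invoked.

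The decreasing case~(ii) is where the main obstacle lies. Now $\{Y_i\le y_i\}=\{V_i\ge\mathtt{A}^{-1}(y_i)\}$, so the output CDF equals the joint \emph{upper-tail} (survival) probability $\textup{Pr}(V_1\ge\mathtt{A}^{-1}(y_1),V_2\ge\mathtt{A}^{-1}(y_2))$ rather than a direct copula evaluation, and this must be converted into a CDF-type expression to match $\widetilde{C}^{\mathtt{g}}_{\rho}$. The plan is to pass to the generating Gaussians: since $V_i\ge a_i$ is equivalent to $Z_1\ge F_1^{-1}(\Xi_1(a_1))$ and $\hat{Z}_2\ge F_{\rho}^{-1}(\Xi_2(a_2))$, the survival probability is a lower-tail exclusion of the bivariate normal $(Z_1,\hat{Z}_2)$ with correlation $\rho'$. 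I would then exploit the radial symmetry of the bivariate normal (invariance under the joint sign flip $(z_1,z_2)\mapsto(-z_1,-z_2)$) to reflect the thresholds, and substitute the decreasing identity $\Xi_i(\mathtt{A}^{-1}(y_i))=1-\Psi_i(y_i\mid s)$, so that the reflected arguments land on the $1-\Psi_i$ structure built into the definition of $\widetilde{C}^{\mathtt{g}}_{\rho}$. The delicate points to verify are that this reflection is exactly what produces the reflected arguments in $\widetilde{C}$ and that the Gaussian copula's self-duality under reflection keeps the correlation $\rho'$ unchanged; I expect the careful alignment of the survival expression with the stated $\widetilde{C}$ form, rather than any single computation, to be the crux.

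Throughout I would keep the relationship between the generating correlation $\rho$ and the induced copula correlation $\rho'$ fixed, since it only parametrizes $\Phi_{\rho'}$, and rely on the standing assumption that all marginals $\Psi_i$ and noise CDFs $\Xi_i,F_1,F_{\rho}$ are continuous and strictly monotone (established in the proof of Theorem~\ref{thm:copula_1}(i)). Continuity guarantees that atoms contribute zero probability, so the passage between strict and weak inequalities in the survival step is harmless and every inverse and composition appearing above is well defined.
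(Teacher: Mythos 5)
Your part (i) is correct and is essentially the paper's own proof: rewrite $\{Y_i\le y_i\}$ as $\{V_i\le \mathtt{A}^{-1}(y_i)\}$, evaluate the joint CDF $C^{\mathtt{g}}_{\rho}(\cdot\mid\Xi_1,\Xi_2)$ of $(V_1,V_2)$ at that point, and substitute $\Xi_i(\mathtt{A}^{-1}(y_i))=\Psi_i(y_i\mid s)$.

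For part (ii) there is a genuine gap --- and note that the paper never closes it either: its entire treatment of (ii) is the sentence ``A similar procedure can be used to prove (ii),'' which breaks at exactly the step you flag, because for decreasing $\mathtt{A}$ the event $\{Y_i\le y_i\}$ is the upper-orthant event $\{V_i\ge\mathtt{A}^{-1}(y_i)\}$, and a joint CDF only returns lower-orthant probabilities. Your survival-function-plus-reflection plan is the right toolkit, but it does not land where you expect, for two reasons. First, the generating Gaussians are not centered: $Z_1\sim N(\tilde{\eta}_{S}(s),\mathtt{var}_1)$ and $\hat{Z}_2=\rho Z_1+\sqrt{1-\rho^{2}}Z_2$ has mean $\rho\,\tilde{\eta}_{S}(s)$, so the sign-flip invariance $(z_1,z_2)\mapsto(-z_1,-z_2)$ you invoke is unavailable; you must reflect about the state-dependent mean vector and use the marginal identities $2\mu-F^{-1}(u)=F^{-1}(1-u)$ for $F_1$ and $F_{\rho}$. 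Second, and more seriously, that reflection converts the $1-\Psi_i$ arguments into $\Psi_i$ arguments: with $t_1=F_1^{-1}(1-\Psi_1(y_1\mid s))$ and $t_2=F_{\rho}^{-1}(1-\Psi_2(y_2\mid s))$, the correct computation gives
\[
\Psi_{[2]}(y_1,y_2\mid s,C^{\mathtt{g}}_{\rho})
=\textup{Pr}\bigl(Z_1\ge t_1,\,\hat{Z}_2\ge t_2\bigr)
=\Phi_{\rho'}\bigl(F_1^{-1}(\Psi_1(y_1\mid s)),\,F_{\rho}^{-1}(\Psi_2(y_2\mid s))\bigr),
\]
i.e.\ the $\widehat{C}^{\mathtt{g}}_{\rho}$ form of part (i) again --- this is the radial symmetry of the Gaussian copula (it equals its own survival copula). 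By contrast, the right-hand side of (ii), $\widetilde{C}^{\mathtt{g}}_{\rho}$ as the paper defines it, is the \emph{CDF} $\Phi_{\rho'}$ evaluated at $(t_1,t_2)$, whereas the true quantity is the \emph{survival} probability at $(t_1,t_2)$; a bivariate normal CDF and survival function agree only on a lower-dimensional set of arguments, not identically. So your plan, executed correctly, proves a statement with $\widehat{C}^{\mathtt{g}}_{\rho}$ on the right-hand side, and it can match the lemma's (ii) only if $\widetilde{C}^{\mathtt{g}}_{\rho}$ is reinterpreted as a joint survival function. The ``careful alignment'' you defer as the crux is therefore not a technicality to be checked later --- it is precisely the point at which the identity, as written, fails.
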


\begin{proof}
Suppose the invertible \(\mathtt{A}\) is strictly increasing. Then,
\[
\begin{aligned}
    \Psi_{[2]}(y_{1}, y_{2} \mid s, C^{\mathtt{g}}_{\rho}) &= \textup{Pr}\left(Y_{1} \leq y_{1}, Y_{2} \leq y_{2}\right) \\
    &= \textup{Pr}\left(\mathtt{A}(V_{1}) \leq y_{1}, \mathtt{A}(V_{2}) \leq y_{2}\right) \\
    &= \textup{Pr}\left(V_{1} \leq \mathtt{A}^{-1}(y_{1}), V_{2} \leq \mathtt{A}^{-1}(y_{2})\right).
\end{aligned}
\]

Since \(V_{i} \leq \mathtt{A}^{-1}(y_{i})\) is equivalent to \(Y_{i} = \mathtt{A}(V_{i}) \leq y_{i}\), we have
\[
\Xi_{i}(\mathtt{A}^{-1}(y_{i})) = \textup{Pr}\left(V_{i} \leq \mathtt{A}^{-1}(y_{i})\right) = \Psi_{i}(y_{i}),
\]
for all \(i \in \{1, 2\}\).

Since \(V_{1}\) and \(V_{2}\) have the joint CDF \(C^{\mathtt{g}}_{\rho}(\cdot \mid \Xi_{1}, \Xi_{2})\), it follows that
\[
\begin{aligned}
    &\Psi_{[2]}(y_{1}, y_{2} \mid s, C^{\mathtt{g}}_{\rho}) = C^{\mathtt{g}}_{\rho}(\mathtt{A}^{-1}(y_{1}), \mathtt{A}^{-1}(y_{2}) \mid \Xi_{1}, \Xi_{2}) \\
    =& \Phi_{\rho'}\left(F_{1}^{-1}(\Psi_{1}(\mathtt{A}(\mathtt{A}^{-1}(y_{1})) \mid s)), F_{\rho}^{-1}(\Psi_{2}(\mathtt{A}(\mathtt{A}^{-1}(y_{2})) \mid s))\right) \\
    =& \Phi_{\rho'}\left(F_{1}^{-1}(\Psi_{1}(y_{1} \mid s)), F_{\rho}^{-1}(\Psi_{2}(y_{2} \mid s))\right).
\end{aligned}
\]

A similar procedure can be used to prove (ii).
\end{proof}

Lemma \ref{lemma:proof_copula_for_outputs} shows that the dependency between \(Y_{1}\) and \(Y_{2}\) due to the copula perturbation can be fully characterized by the same structure \(C^{\mathtt{g}}_{\rho'}\) as the noises \(V_{1}\) and \(V_{2}\).

Under the copula perturbation using \(C^{\mathtt{g}}_{\rho'}\), let \(b(\cdot \mid s; c^{\mathtt{g}}_{\rho'})\) denote the joint density. Then,
\begin{equation*}
    \begin{aligned}
        &b\left(\vec{Y} \mid s\right) \\
        &= c\left(\Psi_{[2]}(Y_{1}, Y_{2} \mid s, C^{\mathtt{g}}_{\rho}), \Psi_{3}(Y_{3} \mid s), \dots, \Psi_{k}\left(Y_{k} \mid s\right) \mid s\right) \\
        &\quad \times \psi_{[2]}(Y_{1}, Y_{2} \mid s, C^{\mathtt{g}}_{\rho}) \prod_{i=3}^{k} \psi_{i}\left(Y_{i} \mid s\right) \\
        &= c\left(\Psi_{[2]}(Y_{1}, Y_{2} \mid s, C^{\mathtt{g}}_{\rho}), \Psi_{3}(Y_{3} \mid s), \dots, \Psi_{k}\left(Y_{k} \mid s\right) \mid s\right) \\
        &\quad \times c^{\mathtt{g}}_{\rho}(y_{1}, y_{2} \mid s) \prod_{i=1}^{k} \psi_{i}\left(Y_{i} \mid s\right),
    \end{aligned}
\end{equation*}
where \(\psi_{[2]}\) is the joint density of the joint CDF \(\Psi_{[2]}\).

For any \((s_{0}, s_{1}) \in \mathcal{Q}\), define
\[
\begin{aligned}
    \widehat{\textbf{L}}_{s_{0}, s_{1}}(Y_{1}, Y_{2}; \widehat{c}^{\mathtt{g}}_{\rho}) \equiv \log \frac{\widehat{c}^{\mathtt{g}}_{\rho}(\Psi_{1}(Y_{1} \mid s_{1}), \Psi_{2}(Y_{2} \mid s_{1}) \mid \epsilon_{\mathtt{c}}, \delta_{\mathtt{c}})}{\widehat{c}^{\mathtt{g}}_{\rho}(\Psi_{1}(Y_{1} \mid s_{0}), \Psi_{2}(Y_{2} \mid s_{0}) \mid \epsilon_{\mathtt{c}}, \delta_{\mathtt{c}})},
\end{aligned}
\]
and
\[
\begin{aligned}
    \widetilde{\textbf{L}}_{s_{0}, s_{1}}(Y_{1}, Y_{2}; \widetilde{c}^{\mathtt{g}}_{\rho}) \equiv \log \frac{\widetilde{c}^{\mathtt{g}}_{\rho}(\Psi_{1}(Y_{1} \mid s_{1}), \Psi_{2}(Y_{2} \mid s_{1}) \mid \epsilon_{\mathtt{c}}, \delta_{\mathtt{c}})}{\widetilde{c}^{\mathtt{g}}_{\rho}(\Psi_{1}(Y_{1} \mid s_{0}), \Psi_{2}(Y_{2} \mid s_{0}) \mid \epsilon_{\mathtt{c}}, \delta_{\mathtt{c}})}.
\end{aligned}
\]

By the product rule for logarithms, we can separate the copula densities \(\widehat{c}^{\mathtt{g}}_{\rho}\) and \(\widetilde{c}^{\mathtt{g}}_{\rho}\), obtaining
\[
\begin{aligned}
    \widetilde{\textbf{L}}_{b_{s_{0}}, b_{s_{1}}}(\vec{Y}; C^{\mathtt{g}}_{\rho}) = \textbf{L}_{b_{s_{0}}, b_{s_{1}}}(\vec{Y}) + \textbf{L}_{s_{0}, s_{1}}(Y_{1}, Y_{2}; \widehat{c}^{\mathtt{g}}_{\rho}),
\end{aligned}
\]
and
\[
\begin{aligned}
    \widetilde{\textbf{L}}_{b_{s_{0}}, b_{s_{1}}}(\vec{Y}; C^{\mathtt{g}}_{\rho}) = \textbf{L}_{b_{s_{0}}, b_{s_{1}}}(\vec{Y}) + \widetilde{\textbf{L}}_{s_{0}, s_{1}}(Y_{1}, Y_{2}; \widetilde{c}^{\mathtt{g}}_{\rho}).
\end{aligned}
\]

\hfill \(\square\)

\subsection{Proof of Proposition \ref{prop:invertible_G}}

When $\mathcal{G}$ is an invertible mapping, for any secret $s\in\mathcal{S}$, let $x_{s}\in\mathcal{X}$ be the dataset such that $P_{\theta}(x_{s}|s)=\frac{P_{\theta}(s,x_{s})}{P_{\theta}(s)}=1$.
Then, the density function of the effective mechanism $\mathcal{N}(\psi)$ becomes
\[
\begin{aligned}
    \psi_{i}(y_{i} \mid s) &= \int_{x_{s} \in \mathcal{X}} \gamma_{i}(y_{i} \mid x_{s}) \frac{P_{\theta}(s, x_{s})}{P_{\theta}(s)} \, dx_{s}\\
    & =\gamma_{i}(y_{i} \mid x_{s}).
\end{aligned}
\]
The joint probability under composition given by (\ref{eq:likelihood_composition}) becomes
\[
\begin{aligned}
    \textbf{Pr}^{s}_{\vec{\gamma}}\left[\vec{y} \in \vec{\mathcal{W}}\right] &= \int_{x \in \mathcal{X}} \int_{\vec{y} \in \vec{\mathcal{W}}} \prod_{i=1}^{k} \gamma_{i}(y_{i} \mid x) \frac{P_{\theta}(s, x)}{P_{\theta}(s)} \, d\vec{y} \, dx\\
    &= \int_{\vec{y} \in \vec{\mathcal{W}}} \prod_{i=1}^{k} \gamma_{i}(y_{i} \mid x_{s})  d\vec{y}.
\end{aligned}
\]
Thus, the joint density under composition becomes the product of individual densities $\vec{\gamma}$.
Then, the PLRV of the composition becomes the sum of individual PLRVs. That is,
\[
\begin{aligned}
    \textbf{L}_{b_{s_{0}}, b_{s_{1}}}(\vec{Y}) =  \textbf{L}^{\textup{id}}_{\psi_{s_{0}}, \psi_{s_{1}} }(\vec{Y}), 
\end{aligned}
\]
which implies $\textbf{L}^{\mathcal{G}}_{c_{s_{0}}, c_{s_{1}}}(\vec{Y})=0$.

\hfill $\square$


\subsection{Proof of Part (i) of Theorem \ref{thm:order_composition_pre} and Proposition \ref{prop:existence_composition_DCP}}

For any $\theta\in\Theta$, let $\mathcal{Q}\equiv\{(s,s')|\mathtt{D}(s,s'|\theta)\leq\mathtt{d}\}$ for simplicity.
Our proof depends on a hypothesis test result.

\subsubsection{Hypothesis Testing}\label{app_sec:hypothesis_testing}

The notion of $f$-differential privacy \cite{dong2022gaussian} generalizes the differential privacy by interpreting the worst-case scenario differential privacy as a Neyman-Pearson optimal hypothesis test.
It is not hard to extend $f$-differential privacy ($f$-DP) to our confounding privacy settings.
For $(s_{0}, s_{1})\in\mathcal{Q}$, consider the following binary hypothesis testing: 
\begin{center}
    $H_{0}$: the state is $s_{0}$. versus $H_{1}$: the state is $s_{1}$.
\end{center}

We rewrite $\textbf{Pr}^{\theta,s}_{\gamma}\left[y \in \mathcal{W}\right] $ in (\ref{eq:prior_likelihood}) by $\textbf{Pr}\left[\cdot\middle|H_{t},\gamma\right]$, for $t\in\{0,1\}$.
For any rejection rule $\phi$, define the Type-I error $\alpha_{\phi}$ and the Type-II error $\beta_{\phi}$ by 
\[
\alpha_{\phi}=\mathbb{E}[\phi|H_{0},\gamma], \beta_{\phi}=1-\mathbb{E}[\phi|H_{1},\gamma].
\]

\begin{definition}[Optimal Symmetric Trade-off Function \cite{dong2022gaussian}]
    A function $f:[0,1]\mapsto [0,1]$ is a \textup{symmetric trade-off function} if it is convex, continuous, non-increasing, $f(\alpha)\leq 1- \alpha$, and $f(\alpha)=\inf\{t\in[0,1]:f(t)\leq \alpha\}$ for $\alpha\in[0,1]$.
\end{definition}

\begin{lemma}\label{lemma:existence_fDP}
Fix any $\alpha\in[0,1]$. For any $\theta\in\Theta$, there exists a symmetric trade-off function $f$ and $\epsilon\geq 0$ for the mechanism $\mathcal{M}(\gamma)$ such that $\mathcal{M}$ is $(\epsilon, \delta(\epsilon))$-DCP, where 
    \[
    \begin{aligned}
        &\delta(\epsilon) = 1+f^{*}(-e^{\epsilon}) \textup{ with } f^{*}(y)=\sup\nolimits_{0\leq \alpha' \leq 1} y\alpha' - f(\alpha').
    \end{aligned}
    \]
\end{lemma}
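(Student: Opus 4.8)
The plan is to port the lossless conversion between $f$-differential privacy and $(\epsilon,\delta)$-privacy (Dong et al.~\cite{dong2022gaussian}) into the DCP setting, reusing the hypothesis-testing reformulation set up just above. First I would fix an adjacent pair $(s_{0},s_{1})\in\mathcal{Q}$ and regard the effective output distributions $P_{0}=\textbf{Pr}^{s_{0}}_{\gamma}$ and $P_{1}=\textbf{Pr}^{s_{1}}_{\gamma}$ (with densities $\psi^{s_{0}},\psi^{s_{1}}$) as the laws under $H_{0}$ and $H_{1}$. A rejection region $\mathcal{W}$ corresponds to the test $\phi=\mathbf{1}_{\mathcal{W}}$ with $\alpha_{\phi}=P_{0}(\mathcal{W})$ and $\beta_{\phi}=1-P_{1}(\mathcal{W})$, so each term $\textbf{Pr}^{s}_{\gamma}[\mathcal{W}]-e^{\epsilon}\textbf{Pr}^{s'}_{\gamma}[\mathcal{W}]$ in~(\ref{eq:confounding_privacy_def}) is an affine functional of $(\alpha_{\phi},\beta_{\phi})$. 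Hence the $(\epsilon,\delta,\theta,\mathcal{G})\text{-}\mathtt{ind}$ inequality for this pair is exactly the requirement that no test $\phi$ violates a linear constraint in the error plane.

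Next I would construct the trade-off function. For each ordered pair let $T(P_{s'},P_{s})(\alpha)=\inf\{\beta_{\phi}:\alpha_{\phi}\le\alpha\}$ be the optimal Type-II error at level $\alpha$; by the Neyman--Pearson lemma this infimum is attained by likelihood-ratio tests and is convex, continuous, non-increasing, and bounded above by $1-\alpha$. I would then set $f=\inf_{(s,s')\in\mathcal{Q}}T(P_{s'},P_{s})$ and pass to its greatest convex minorant. Because the adjacency metric $\mathtt{D}(\cdot\mid\theta)$ is symmetric, $\mathcal{Q}$ contains $(s',s)$ whenever it contains $(s,s')$, so this $f$ satisfies the involution condition $f(\alpha)=\inf\{t:f(t)\le\alpha\}$ and is therefore a \emph{symmetric} trade-off function in the sense of the stated definition. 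The fixed level $\alpha$ then serves to pin down $\epsilon$: choosing $\epsilon\ge0$ so that $-e^{\epsilon}$ is a subgradient of $f$ at $\alpha$ makes the supporting line of $f$ at $\alpha$ have slope $-e^{\epsilon}$.

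With $f$ in hand, the remaining step is the conjugate computation. Rewritten in the error plane and combined over all $\mathcal{W}$ and all pairs, the $(\epsilon,\delta)\text{-}\mathtt{ind}$ constraint is equivalent to $f(\alpha')\ge 1-\delta-e^{\epsilon}\alpha'$ for every $\alpha'\in[0,1]$, the mirror inequality being supplied automatically by the symmetry of $f$. Rearranging gives $\delta\ge 1-e^{\epsilon}\alpha'-f(\alpha')$ for all $\alpha'$, hence
\[
\delta\ \ge\ \sup_{0\le\alpha'\le1}\bigl(1-e^{\epsilon}\alpha'-f(\alpha')\bigr)\ =\ 1+\sup_{0\le\alpha'\le1}\bigl(-e^{\epsilon}\alpha'-f(\alpha')\bigr)\ =\ 1+f^{*}(-e^{\epsilon}).
\]
The smallest admissible $\delta$ is therefore $\delta(\epsilon)=1+f^{*}(-e^{\epsilon})$, and with $\epsilon$ chosen as above the supremum defining $f^{*}(-e^{\epsilon})$ is attained at the fixed $\alpha$; this shows $\mathcal{M}(\gamma)$ is $(\epsilon,\delta(\epsilon))$-DCP.

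I expect the main obstacle to be verifying that the candidate $f$ is a legitimate symmetric trade-off function, rather than completing the routine conjugate algebra. The per-pair trade-off functions are convex, but their infimum over $\mathcal{Q}$ need not be, so I must argue that replacing $f$ by its greatest convex minorant neither breaks symmetry nor invalidates the privacy guarantee: a mechanism whose true trade-off curve lies above the minorant is still governed by it, at worst yielding a slightly conservative (larger) $\delta(\epsilon)$. I also need the Neyman--Pearson optimum and the probability-integral structure to be well defined, which requires the densities $\psi^{s}$ to exist and be absolutely continuous — guaranteed by the continuous-case assumption and the positivity condition tying $\mathcal{G}$ to $P_{\theta}(x\mid s)>0$ — and I must ensure a subgradient of $f$ exists at the fixed $\alpha$ so that a matching $\epsilon$ can be selected, which follows from convexity of $f$ supplying a supporting line at every interior point.
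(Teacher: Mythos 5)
Your proposal is correct and takes essentially the same route as the paper's proof: cast the comparison of adjacent secrets' effective output distributions as a binary hypothesis test, invoke the Neyman--Pearson lemma to obtain the trade-off function, and convert to an $(\epsilon,\delta)$ guarantee via the conjugate identity $\delta(\epsilon)=1+f^{*}(-e^{\epsilon})$. The only difference is one of packaging: you derive inline (the infimum over pairs in $\mathcal{Q}$ with its convex minorant, the symmetry argument, and the conjugate computation) what the paper obtains by citing Propositions 2.4 and 2.12 of \cite{dong2022gaussian} as black boxes.
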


\begin{proof}

Lemma \ref{lemma:existence_fDP} is a corollary of Proposition 2.4 and Proposition 2.12  of \cite{dong2022gaussian}. 
To prove Lemma \ref{lemma:existence_fDP}, we first need to show that there exists uniformly most powerful test for the problem:
\begin{center}
    $H_{0}$: the state is $s_{0}$. versus $H_{1}$: the state is $s_{1}$.
\end{center}
It is straightforward to see that this is a simple binary hypothesis testing problem. 
Hence, Neyman-Pearson lemma implies that the likelihood-ratio test is the UMP test.
Therefore, for any significance level $\alpha\in[0,1]$, there is 
\[
\beta^{\alpha} = \inf\nolimits_{\phi}\left\{\beta_{\phi}: \alpha_{\phi}\leq \alpha\right\},
\]
such that $\alpha = \mathbb{E}[\phi|H_{0}, \gamma]$ and $\beta^{\alpha}=1-\mathbb{E}[\phi|H_{1}, \gamma]$.
Therefore, the mechanism $\mathcal{M}(\gamma)$ is $f$-differentially private (Definition 2.3 of \cite{dong2022gaussian}).
Then, by Proposition 2.4 of \cite{dong2022gaussian}, $\mathcal{M}(\gamma)$ is also $f^{*}$-differentially private, where $f^{*}(y)=\sup\nolimits_{0\leq \alpha' \leq 1} y\alpha' - f(\alpha')$. 
Then, we obtain $\delta(\epsilon) = 1+f^{*}(-e^{\epsilon})$ from Proposition 2.12 of \cite{dong2022gaussian}.

\end{proof}

Lemma \ref{lemma:existence_fDP} shows that any randomized mechanism with well-defined probability distributions satisfies $(\epsilon, \delta)$-DCP for some $\epsilon\geq 0$ and $\delta\in[0,1]$.
It is a corollary of Proposition 2.4 and Proposition 2.12  of \cite{dong2022gaussian}.

\begin{lemma}\label{lemma:composition_of_cp_is_cp}
    Fix any $\theta$. Suppose that each $\mathcal{M}_{i}(\gamma_{i})$ is $(\epsilon_{i}, \delta_{i})$-DCP with $\epsilon_{i}\geq 0$ and $\delta_{i}\in[0,1)$ for all $i\in[k]$, and $\mathcal{M}(\vec{\gamma})$ is the $k$-fold composition with $f_{[k]}$ as the optimal symmetric trade-off function. Then, the mechanism $\mathcal{M}(\vec{\gamma})$ is $(\epsilon'_{g}, \delta(\epsilon'_{g}))$-DCP for some $\epsilon'_{g}\geq 0$.
\end{lemma}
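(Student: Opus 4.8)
The plan is to reduce the $k$-fold composition to the single-mechanism situation already settled by Lemma~\ref{lemma:existence_fDP}. The crucial observation is that $\mathcal{M}(\vec{\gamma}):\mathcal{X}\to\vec{\mathcal{Y}}$ is itself one randomized mechanism: for each secret $s$ it induces a well-defined output distribution $\textbf{Pr}^{s}_{\vec{\gamma}}[\cdot]$ on $\vec{\mathcal{Y}}$, with density $b(\cdot\mid s)$ obtained from~(\ref{eq:likelihood_composition}). Consequently the entire hypothesis-testing apparatus used to prove Lemma~\ref{lemma:existence_fDP} transfers verbatim, now with the vector output $\vec{Y}$ in place of $Y$ and $b(\cdot\mid s)$ in place of the single-mechanism density.

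Concretely, first I would fix an arbitrary adjacent pair $(s_{0},s_{1})\in\mathcal{Q}$ and set up the binary test $H_{0}:\textup{state}=s_{0}$ versus $H_{1}:\textup{state}=s_{1}$, with $\textbf{Pr}[\,\cdot\mid H_{t},\vec{\gamma}\,]$ taken under $b(\cdot\mid s_{t})$. Since this is again a simple-versus-simple test, the Neyman--Pearson lemma guarantees that the likelihood-ratio test is uniformly most powerful, so for every significance level $\alpha\in[0,1]$ the optimal Type-II error $\beta^{\alpha}(s_{0},s_{1})=\inf_{\phi}\{\beta_{\phi}:\alpha_{\phi}\le\alpha\}$ is well defined and traces a trade-off curve. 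Taking the worst case over $\mathcal{Q}$ produces the optimal symmetric trade-off function $f_{[k]}$ supplied in the hypothesis, and one checks, exactly as in Lemma~\ref{lemma:existence_fDP}, that $f_{[k]}$ is convex, continuous, non-increasing, and satisfies $f_{[k]}(\alpha)\le 1-\alpha$.

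Given $f_{[k]}$, the conclusion follows from the same two facts of~\cite{dong2022gaussian} invoked in Lemma~\ref{lemma:existence_fDP}: by Proposition~2.4 the composed mechanism is also $f_{[k]}^{*}$-differentially private in the DCP sense, and by Proposition~2.12 this converts into $(\epsilon'_{g},\delta(\epsilon'_{g}))$-DCP with $\delta(\epsilon'_{g})=1+f_{[k]}^{*}(-e^{\epsilon'_{g}})$, valid for any chosen $\epsilon'_{g}\ge 0$. This yields the claim, and I would finally note that the argument uses only that $b(\cdot\mid s)$ is a legitimate density for each $s$; it never invokes independence or any particular dependency structure $\mathcal{C}$, which is precisely why this single lemma underpins both part~(i) of Theorem~\ref{thm:order_composition_pre} and Proposition~\ref{prop:existence_composition_DCP}.

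The main obstacle is not the algebra but justifying that the composition genuinely admits an optimal symmetric trade-off function: one must verify that the pointwise worst case over the (possibly infinite) adjacency set $\mathcal{Q}$ of the individual Type-II error curves preserves convexity, continuity, and monotonicity, so that $f_{[k]}$ inherits the structure required to apply Propositions~2.4 and~2.12. Once this worst-case curve is shown to remain a bona fide symmetric trade-off function, everything else is a direct application of the single-mechanism machinery to the composed mechanism.
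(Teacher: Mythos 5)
Your proposal is correct and follows essentially the same route as the paper: the paper likewise treats the composition as a single mechanism with the well-defined joint likelihood $\ell(\vec{y}\mid H_{z})=\int_{x}\prod_{i=1}^{k}\gamma_{i}(y_{i}\mid x)\frac{P_{\theta}(s_{z},x)}{P_{\theta}(s_{z})}\,dx$, invokes the Neyman--Pearson lemma to get the UMP likelihood-ratio test for the simple-versus-simple test between adjacent secrets, and then defers to the machinery of Lemma~\ref{lemma:existence_fDP} (Propositions 2.4 and 2.12 of \cite{dong2022gaussian}). Your closing caveat about verifying that the worst case over $\mathcal{Q}$ remains a bona fide symmetric trade-off function is a point the paper's own proof glosses over, but it does not alter the argument's structure.
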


\begin{proof}

Consider the following hypothesis testing: $H_{0}$: the state is $s_{0}$. versus $H_{1}$: the state is $s_{1}$.
Under the composition, given any $\vec{y}$, the likelihood ratio is given by
\[
\begin{aligned}
    \textup{LR}(\vec{y}) = \frac{ \ell(\vec{y}|H_{0})}{\ell(\vec{y}|H_{1})}, 
\end{aligned}
\]
where
\[
\ell(\vec{y}|H_{z}) = \int\limits_{x}\prod_{i=1}^{k}\gamma_{i}(y_{i}|x)\frac{P_{\theta}(s_{z},x)}{P_{\theta}(s_{z})}dx,
\]
for $z\in\{0,1\}$, which is well-defined.
Then, by Neyman-Pearson lemma, the likelihood-ratio test is the UMP test and the proof follows Lemma \ref{lemma:existence_fDP}.

\end{proof}

\subsection{Proof of Parts (ii) and (iii) of Theorem \ref{thm:order_composition_pre} and Theorem \ref{thm:order_composition}}

For any \(\theta \in \Theta\), we define \(\mathcal{Q} \equiv \{(s, s') \mid \mathtt{D}(s, s' \mid \theta) \leq \mathtt{d}\}\) for simplicity. The proof of this theorem involves notions and definitions from hypothesis testing, as outlined in Appendix \ref{app_sec:hypothesis_testing}.
We start by proving part (i) of Theorem \ref{thm:order_composition}.

\subsubsection{\textbf{Proof of Part (i) of Theorem \ref{thm:order_composition}}}

When $\textbf{L}^{\mathcal{G}}_{c_{s_{0}}, c_{s_{1}}}(\vec{Y})=0$ for all $(s_{0}, s_{1})\in\mathcal{Q}$, the PLRV under composition becomes $\textbf{L}^{\textup{id}}_{\psi_{s_{0}}, \psi_{s_{1}} }(\vec{Y})$, the sum of independent PLRVs. 
In addition, the basic composition of DP implies that the composition $\mathcal{M}(\vec{\gamma})$ is $(\hat{\epsilon}_{g}, \delta^{\dagger}_{g})$-DCP, where $\hat{\epsilon}_{g} = \sum^{k}_{i=1}\epsilon_{i}$ and $\delta^{\dagger}_{g}=\sum^{k}_{i=1}\delta_{i}$.

Let
\[
 \hat{\delta}_{g} \equiv\mathbb{E}_{\textbf{L}\sim \textup{PLD}^{\mathtt{in}}}\left[\left(1-e^{\hat{\epsilon}_{g}-\textbf{L}}\right)^{+}\right],
\]
where $\textup{PLD}^{\mathtt{in}}$ is the PLD of $\textbf{L}^{\textup{id}}_{\psi_{s_{0}}, \psi_{s_{1}} }(\vec{Y})$.
In addition, let $\epsilon^{\dagger}_{g}$ such that
\[
\mathbb{E}_{\textbf{L}\sim \textup{PLD}^{\mathtt{in}}}\left[\left(1-e^{\epsilon^{\dagger}_{g}-\textbf{L}}\right)^{+}\right] = \delta^{\dagger}_{g}\equiv\sum^{k}_{i=1}\delta_{i}.
\]
By Proposition 7 of \cite{steinke2022composition}, (when $\textbf{L}^{\mathcal{G}}_{c_{s_{0}}, c_{s_{1}}}(\vec{Y})=0$), the following holds:
\begin{itemize}
    \item If we fix the ratio at $e^{\hat{\epsilon}_{g}}$, the additive ‘excess’ that the event’s probability under $s$ can have—beyond that scaled probability under $s'$—is at most $\hat{\delta}_{g}$, where $\hat{\delta}_{g}\leq \delta^{\dagger}_{g}$.

    \item If we fix the additive $\delta^{\dagger}_{g}$, there is no attacker that can violate privacy in terms of epsilon-delta indistinguishability by more than $e^{\epsilon^{\dagger}_{g}}\leq e^{\hat{\epsilon}_{g}}$.
\end{itemize}

For any $D>0$, define
\[
F(D;\epsilon)\equiv \mathbb{E}_{\textbf{L}\sim \textup{PLD}^{\mathtt{in}}}\left[\left(1-\frac{e^{\epsilon-\textbf{L}}}{D}\right)^{+}\right].
\]
For any fixed $\epsilon\geq 0$, it is easy to verify that $F(D;\epsilon)$ is a continuous and strictly increasing function of $D$ for all $D>0$.
In addition, the Intermediate Value Theorem ensures that for any $W\in(0,1]$, there is a $D$ such that $F(D;\epsilon)=W$ for any fixed $\epsilon\geq 0$.
Furthermore, for any $\epsilon\geq 0$, there is a $D$ such that $F(D;\epsilon)=W$ for any fixed $W\in(0,1]$.

Consider:
\begin{itemize}
    \item Let $\hat{D}>1$ such that when $\hat{\epsilon}_{g}$ is fixed, $F(1;\hat{\epsilon}_{g})$ is increased to $F(\hat{D};\hat{\epsilon}_{g})=\delta^{\dagger}_{g}$.

    \item Let $D^{\dagger}>0$ such that when $\delta^{\dagger}_{g}$ is fixed, $\epsilon^{\dagger}$ is changed to $\hat{\epsilon}_{g}$ such that $\delta^{\dagger}_{g} = F(D^{\dagger};\hat{\epsilon}_{g})$.

\end{itemize}

Now, suppose that $\textbf{L}^{\mathcal{G}}_{c_{s_{0}}, c_{s_{1}}}(\vec{Y})\neq 0$ in general.
Then, we have, for any $\epsilon\geq 0$,
\[
\begin{aligned}
    \mathbb{E}_{\textbf{L}\sim \textup{PLD}}\left[\left(1-\frac{e^{\epsilon-\textbf{L}^{\mathtt{in}}}}{e^{\textbf{L}^{\mathcal{G}}}}\right)^{+}\right] = F(e^{\textbf{L}^{\mathcal{G}}};\epsilon).
\end{aligned}
\]
The basic composition theory holds for DCP when $\textbf{L}^{\mathcal{G}}_{c_{s_{0}}, c_{s_{1}}}(\vec{Y})\neq 0$ if and only if
\begin{itemize}
    \item For given $\hat{\epsilon}_{g} = \sum^{k}_{i=1}\epsilon_{i}$,   the basic composition theory holds for DCP if and only if $e^{\textbf{L}^{\mathcal{G}}}\leq \hat{D}$ or $\textbf{L}^{\mathcal{G}}\leq \log \hat{D}$.

    \item For given $\delta^{\dagger}_{g}=\sum^{k}_{i=1}\delta_{i}$, the basic composition theory holds for DCP if and only if $e^{\textbf{L}^{\mathcal{G}}}\leq D^{\dagger}$ or $\textbf{L}^{\mathcal{G}}\leq \log D^{\dagger}$.
\end{itemize}
However, $\textbf{L}^{\mathcal{G}}_{c_{s_{0}}, c_{s_{1}}}(\vec{Y})\neq 0$, in general, cannot guarantee $\textbf{L}^{\mathcal{G}}\leq \log \hat{D}$ or $\textbf{L}^{\mathcal{G}}\leq \log D^{\dagger}$.

\subsubsection{\textbf{Proof of Part (ii) of Theorem \ref{thm:order_composition}}}

The following theorem aligns with part (ii) of Theorem \ref{thm:order_composition}.

\begin{theorem}\label{thm:standard_compose_fail_general}
    Fix any \(\theta\). Suppose that each \(\mathcal{M}_{i}(\gamma_{i})\) is \((\epsilon_{i}, \delta_{i})\)-CP with \(\epsilon_{i} \geq 0\) and \(\delta_{i} \in [0,1)\) for all \(i \in [k]\), and let \(\mathcal{M}(\vec{\gamma})\) be the composition mechanism with \(f_{[k]}\) as the optimal symmetric trade-off function. 
    Let 
    \[
    \epsilon^{*}(\delta_{g}) = \textup{OPT}\left(\mathcal{N}(\psi_{[k]}), \delta_{g}\right),
    \]
    for any \(\delta_{g} \in [0,1)\), where \(\vec{\psi}\) is induced by \(\vec{\gamma}\) and \(\textup{OPT}\) is given by (\ref{eq:def_opt_comp}). Additionally, let 
    \[
    \delta(\epsilon_{g}) = 1 + f^{*}_{[k]}(-e^{\epsilon_{g}}),
    \]
    for any \(\epsilon_{g} \geq 0\). Then, the following holds:
    \begin{itemize}
        \item[\textup{(i)}] For any \(\epsilon_{g} \geq 0\) such that \(\delta(\epsilon_{g}) \in [0,1)\), we have \(\epsilon_{g} \geq \epsilon^{*}\left(\delta(\epsilon_{g})\right)\).
        \item[\textup{(ii)}] For any \(\delta_{g} \in [0,1)\), \(\delta_{g} \geq \delta(\epsilon^{*}(\delta_{g}))\).
    \end{itemize}

    Equality holds in \textup{(i)} and \textup{(ii)} if and only if 
    \[
    \vec{\psi}(\vec{y} \mid s) = \prod_{i=1}^{k} \psi_{i}(y_{i} \mid s),
    \]
    for all \(s \in \mathcal{S}\), \(\vec{y} \in \prod_{i=1}^{k} \mathcal{Y}_{i}\), and \(i \in [k]\).
\end{theorem}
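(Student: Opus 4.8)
The plan is to translate both inequalities into a comparison of trade-off functions and then isolate the copula's contribution through the privacy loss decomposition. By Lemma~\ref{lemma:existence_fDP} and Lemma~\ref{lemma:composition_of_cp_is_cp}, the test $H_0\!:\!s=s_0$ versus $H_1\!:\!s=s_1$ is a simple binary test, so Neyman--Pearson yields a likelihood-ratio UMP test and hence an optimal symmetric trade-off function for each of the two relevant mechanisms: $f_{[k]}$ for the full composition $\mathcal{M}(\vec{\gamma})$ and, say, $g_{[k]}$ for the independent composition $\mathcal{N}(\psi_{[k]})$. Since the conversion $\delta=1+h^{*}(-e^{\epsilon})$ between a trade-off function $h$ and its $(\epsilon,\delta)$ description is tight (Proposition~2.12 of \cite{dong2022gaussian}), the quantity $\delta(\epsilon_g)=1+f^{*}_{[k]}(-e^{\epsilon_g})$ is exactly the tight profile of $\mathcal{M}(\vec{\gamma})$, while $\epsilon^{*}(\delta_g)=\textup{OPT}(\mathcal{N}(\psi_{[k]}),\delta_g)$ from (\ref{eq:def_opt_comp}) is the inverse conversion of $g_{[k]}$. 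Thus (i) and (ii) both reduce to a single fact: the $(\epsilon,\delta)$ curve of $f_{[k]}$ lies on the higher-leakage side of that of $g_{[k]}$, i.e.\ $\underline{\textbf{opt}}_{\delta_g}\le\textbf{opt}_{\delta_g}$ together with the dual profile inequality, with equality precisely when the two curves coincide.

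Next I would pass to privacy loss distributions and invoke the decomposition (\ref{eq:privacy_loss_RV}), $\textbf{L}_{b_{s_0},b_{s_1}}(\vec{Y})=\textbf{L}^{\mathcal{G}}_{c_{s_0},c_{s_1}}(\vec{Y})+\textbf{L}^{\textup{id}}_{\psi_{s_0},\psi_{s_1}}(\vec{Y})$. The independent curve is governed by the PLD of $\textbf{L}^{\textup{id}}$ sampled under the product law $\prod_i\psi_i^{s_0}$, whereas the full curve is governed by the PLD of the sum sampled under the joint law $b_{s_0}$. Writing each profile as $\delta_{\epsilon}=\mathbb{E}_{\textbf{L}\sim\textup{PLD}}[(1-e^{\epsilon-\textbf{L}})^{+}]$ and reusing the function $F(D;\epsilon)=\mathbb{E}_{\textbf{L}\sim\textup{PLD}^{\mathtt{in}}}[(1-e^{\epsilon-\textbf{L}}/D)^{+}]$ from the proof of part~(i) of Theorem~\ref{thm:order_composition}, the copula enters as the multiplicative reweighting $D=e^{\textbf{L}^{\mathcal{G}}}$ of the likelihood ratio. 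Because $F(\cdot;\epsilon)$ is continuous and strictly increasing in $D$, a reweighting that is not identically trivial shifts the profile strictly; pushing this through the $(\epsilon,\delta)$-conversion of the previous paragraph yields the inequalities of (i) and (ii) at every point of the curve.

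For the equivalence I would argue both directions. If $\vec{\psi}(\vec{y}\mid s)=\prod_{i}\psi_i(y_i\mid s)$ for all $s,\vec{y}$, then $b_s=\prod_i\psi_i^{s}$, so the outputs are independent under $b_{s_0}$ and $\textbf{L}^{\mathcal{G}}_{c_{s_0},c_{s_1}}(\vec{Y})=0$ almost surely (cf.\ Proposition~\ref{prop:invertible_G}); the two PLDs then coincide and $f_{[k]}=g_{[k]}$, giving equality in (i) and (ii). Conversely, if the product form fails on a set of positive measure, then either the copula term $\textbf{L}^{\mathcal{G}}$ is nondegenerate or the shared outputs $\vec{Y}$ are genuinely dependent under $b_{s_0}$; in either case the full PLD differs from the independent one, and strict monotonicity of $F(\cdot;\epsilon)$ (equivalently the strict convexity feeding the conjugate $h^{*}$) forces a strict gap. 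This is exactly the $\textbf{L}^{\mathcal{G}}\neq0$ regime in which Theorem~\ref{thm:order_composition} asserts strict inequalities.

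The main obstacle is the monotonicity step in the second paragraph. The factor $\textbf{L}^{\mathcal{G}}_{c_{s_0},c_{s_1}}(\vec{Y})$ is a \emph{signed} random variable, so privacy loss can cancel, and it is correlated with $\textbf{L}^{\textup{id}}$ through the common outputs $\vec{Y}$; treating $D=e^{\textbf{L}^{\mathcal{G}}}$ as a single scalar inside $F(D;\epsilon)$ is therefore only heuristic. Making it rigorous requires conditioning on $\vec{Y}$ (or on $\textbf{L}^{\textup{id}}$), applying the pointwise monotonicity of $F(\cdot;\epsilon)$, and then integrating while tracking the secret-dependence of the copula $c_s$, with the extra care that the common term $\textbf{L}^{\textup{id}}$ is itself sampled under different joint laws in the two curves. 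The strictness in the equality statement, and hence the clean equivalence, rests entirely on this strict monotonicity surviving the integration, which is where I expect the real work to lie.
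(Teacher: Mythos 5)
Your opening reduction of (i) and (ii) to a pointwise comparison of the $(\epsilon,\delta)$ curves of $\mathcal{M}(\vec{\gamma})$ and $\mathcal{N}(\psi_{[k]})$, and your closing conversion through the convex conjugates $f^{*}$, both match the paper's framing. But the mathematical core---\emph{why} the full composition's curve lies on the higher-leakage side of the independent one---is never established, and the route you sketch for it cannot be completed. You propose to treat the copula term as a multiplicative reweighting $D=e^{\textbf{L}^{\mathcal{G}}}$ inside $F(D;\epsilon)$ and to invoke monotonicity of $F$ in $D$. As you yourself concede, $\textbf{L}^{\mathcal{G}}_{c_{s_0},c_{s_1}}(\vec{Y})$ is a signed random variable correlated with $\textbf{L}^{\textup{id}}$; pointwise monotonicity therefore gives inequalities in \emph{opposite} directions on the events where $\textbf{L}^{\mathcal{G}}>0$ and where $\textbf{L}^{\mathcal{G}}<0$, and no conditioning-then-integrating scheme recovers a uniform direction from it. Moreover the two profiles are expectations under different laws ($b_{s_0}$ versus $\prod_i \psi_i^{s_0}$), so even the common term $\textbf{L}^{\textup{id}}$ is not sampled from the same distribution in the two curves. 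The dominance you need is not a pointwise monotonicity fact at all; cancellation of signed privacy loss is precisely what such an argument cannot rule out.

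The paper closes this gap with a Blackwell comparison-of-experiments argument that is absent from your proposal. Restricting to $k=2$ without loss of generality, it compares the true joint conditional density $\mathtt{D}_{[2]}(y_1,y_2\mid s)$ with the product of marginals $\mathtt{D}_{1,2}(y_1,y_2\mid s)=\psi_1(y_1\mid s)\,\psi_2(y_2\mid s)$ and proves (Proposition~\ref{prop:blackwell_ording}) that
\[
T\left(\mathtt{D}_{[2]}(\cdot\mid s_0),\mathtt{D}_{[2]}(\cdot\mid s_1)\right)\leq T\left(\mathtt{D}_{1,2}(\cdot\mid s_0),\mathtt{D}_{1,2}(\cdot\mid s_1)\right),
\]
with equality if and only if $\mathtt{D}_{[2]}=\mathtt{D}_{1,2}$. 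This is done in two steps: Lemma~\ref{lemma:KL_divergence_and_CEL} shows that the exact posterior $\textup{J}_{[2]}$ attains smaller expected cross-entropy loss than the product-induced posterior $\textup{J}_{1,2}$, the difference being exactly $\textup{KL}\left(\textup{J}_{[2]}\,\|\,\textup{J}_{1,2}\right)\geq 0$ (zero iff the posteriors agree); Lemma~\ref{lemma:blackwell_informative} then converts this Bayes-risk ordering into trade-off-function dominance via Blackwell's theorem. The strictness---and hence the equality characterization, which in your sketch also rests on the unproven monotonicity step---comes from strict positivity of this KL divergence when $\mathtt{D}_{[2]}\neq\mathtt{D}_{1,2}$, not from any property of $F(\cdot;\epsilon)$. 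In short, you have correctly located where the real work lies, but the tool you reach for there cannot do that work; an information-theoretic comparison of experiments (or an equivalent data-processing argument) is required.
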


Theorem \ref{thm:standard_compose_fail_general} demonstrates that the optimal bound on privacy loss, as captured by the optimal composition of standard differential privacy \cite{murtagh2015complexity}, underestimates the privacy degradation in the composition of CP mechanisms, even when the individual mechanisms are independent of each other.

Both Theorems \ref{thm:standard_compose_fail_general} and \ref{thm:order_composition} illustrate that the tightest privacy bound under the optimal composition of DP underestimates the actual cumulative privacy loss of CP. We prove these two theorems together in this section.

For simplicity, we restrict our attention to the composition of two individual mechanisms, without loss of generality. Consider two mechanisms \(\mathcal{M}_{1}(\gamma_{1})\) and \(\mathcal{M}_{2}(\gamma_{2})\) that are \((\epsilon_{1}, \delta_{1})\)-CP and \((\epsilon_{2}, \delta_{2})\)-CP, respectively. Let \(\mathcal{N}_{i}(\psi_{i})\) be induced by \(\mathcal{M}_{i}(\gamma_{i})\) for \(i \in \{1, 2\}\). 

Recall that the density function \(\psi_{i}\) is given by
\[
\psi_{i}(y_{i} \mid s) = \int_{x} \gamma_{i}(y_{i} \mid x) \frac{P_{\theta}(s, x)}{P_{\theta_{S}}(s)} dx,
\]
which is induced by \(\gamma_{i}\) and \(\theta\).

Define
\[
\mathtt{D}_{1,2}(y_{1}, y_{2} \mid s) = \psi_{1}(y_{1} \mid s) \psi_{2}(y_{2} \mid s).
\]
Additionally, the density function associated with \(\mathcal{M}(\vec{\gamma})\) is given by
\[
\mathtt{D}_{[2]}(y_{1}, y_{2} \mid s) = \int_{x} \prod_{i=1}^{2} \gamma_{i}(y_{i} \mid x) \frac{P_{\theta}(s, x)}{P_{\theta_{S}}(s)} dx.
\]

We adopt the general notation of the trade-off function as given in Definition 2.1 of \cite{dong2022gaussian}.

\begin{definition}[Trade-off Function]
Let \(\alpha_{\phi}\) and \(\beta_{\phi}\) be the Type-I and Type-II errors associated with any rejection rule \(\phi\), respectively. For any two probability distributions \(P\) and \(Q\) on the same space, the trade-off function \(T(P, Q) : [0,1] \to [0,1]\) is defined as
\[
T(P, Q)(\alpha) = \inf \{\beta_\phi : \alpha_\phi \leq \alpha\},
\]
where the infimum is taken over all (measurable) rejection rules.
\end{definition}

The following Proposition \ref{prop:blackwell_ording} shows that for any \(\vec{\gamma}\), \(\mathtt{D}_{[2]}\) is more informative than \(\mathtt{D}_{1,2}\) about the state in the sense of Blackwell \cite{blackwell1951comparison,de2018blackwell}.

\begin{proposition}\label{prop:blackwell_ording}
For any \((s_{1}, s_{0}) \in \mathcal{Q}\) and \(\alpha \in [0,1]\), we have
\[
T(\mathtt{D}_{[2]}(\cdot \mid s_{0}), \mathtt{D}_{[2]}(\cdot \mid s_{1})) \leq T(\mathtt{D}_{1,2}(\cdot \mid s_{0}), \mathtt{D}_{1,2}(\cdot \mid s_{1})),
\]
where equality holds if and only if \(\mathtt{D}_{[2]} = \mathtt{D}_{1,2}\).
\end{proposition}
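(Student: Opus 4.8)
The plan is to read Proposition \ref{prop:blackwell_ording} as a statement about the comparison of binary statistical experiments and to attack it through Blackwell's theorem. For a fixed adjacent pair $(s_0,s_1)$, the joint density $\mathtt{D}_{[2]}(\cdot\mid s)$ and the product density $\mathtt{D}_{1,2}(\cdot\mid s)$ each define a binary experiment indexed by $s\in\{s_0,s_1\}$. Since the pointwise ordering $T(P_0,P_1)\le T(Q_0,Q_1)$ of trade-off functions is, for dichotomies, equivalent to the experiment $(P_0,P_1)$ being Blackwell-more-informative than $(Q_0,Q_1)$, it suffices to show that $(\mathtt{D}_{[2]}(\cdot\mid s_0),\mathtt{D}_{[2]}(\cdot\mid s_1))$ dominates $(\mathtt{D}_{1,2}(\cdot\mid s_0),\mathtt{D}_{1,2}(\cdot\mid s_1))$ in the Blackwell order. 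I would establish this by exhibiting a single Markov kernel $K(\vec{y}\,'\mid\vec{y})$, crucially \emph{independent of $s$}, that transports $\mathtt{D}_{[2]}(\cdot\mid s)$ to $\mathtt{D}_{1,2}(\cdot\mid s)$ simultaneously for $s=s_0$ and $s=s_1$.

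The construction of this garbling is where the conditional-independence structure $Y_1\perp Y_2\mid X$ must enter. The guiding idea is that the joint output already carries the shared source $X$, so one tries to decorrelate $(Y_1,Y_2)$ by resampling a coordinate while preserving its marginal law $\psi_i(\cdot\mid s)$. Concretely, I would first rewrite both densities through the copula decomposition of Appendix \ref{app:section_III_details}, so that $\mathtt{D}_{[2]}=c_s\prod_i\psi_i$ whereas $\mathtt{D}_{1,2}=\prod_i\psi_i$, isolating all dependence in the copula density $c_s$. Then I would try to realize the map $c_s\prod_i\psi_i\mapsto\prod_i\psi_i$ as post-processing and verify that it is a genuine (non-negative, normalized) stochastic kernel that does not depend on $s$.

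An equivalent and perhaps cleaner route is the likelihood-ratio / convex-order characterization: $T(\mathtt{D}_{[2]})\le T(\mathtt{D}_{1,2})$ holds iff the law of the likelihood ratio $\mathtt{D}_{[2]}(\cdot\mid s_1)/\mathtt{D}_{[2]}(\cdot\mid s_0)$ under $\mathtt{D}_{[2]}(\cdot\mid s_0)$ is a mean-preserving spread of the law of $\mathtt{D}_{1,2}(\cdot\mid s_1)/\mathtt{D}_{1,2}(\cdot\mid s_0)$ under $\mathtt{D}_{1,2}(\cdot\mid s_0)$. Because the log-likelihood ratio of the joint decomposes additively as $\textbf{L}^{\mathcal{G}}_{c_{s_0},c_{s_1}}+\textbf{L}^{\textup{id}}_{\psi_{s_0},\psi_{s_1}}$ by (\ref{eq:privacy_loss_RV}), the task reduces to showing that appending the copula term $\textbf{L}^{\mathcal{G}}$ \emph{spreads} the likelihood ratio in the convex order, with the spread strict exactly when $\textbf{L}^{\mathcal{G}}\neq 0$, i.e. when $\mathtt{D}_{[2]}\neq\mathtt{D}_{1,2}$; this last observation would also furnish the stated equality condition.

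The main obstacle is precisely this domination step: establishing that the copula term enlarges rather than contracts the experiment in the Blackwell/convex order, uniformly over $(s_0,s_1)\in\mathcal{Q}$. The delicate point is that the natural decorrelating kernel built from the marginals depends on $s$ through $\psi_i(\cdot\mid s)$, so one must either argue that this $s$-dependence cancels—exploiting that $c_s$ is tied to the very marginals defining $\psi_i$—or control the sign of the convex-order comparison directly from the structure of $c_s$. I expect this to require a careful structural argument rather than a one-line data-processing appeal, since decorrelation is not automatically a garbling and the direction of the comparison hinges sensitively on how $c_s$ couples the marginals; verifying non-negativity and $s$-invariance of the transporting kernel is therefore the crux on which the whole proposition rests.
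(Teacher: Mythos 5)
Your reduction of Proposition \ref{prop:blackwell_ording} to Blackwell dominance of dichotomies is sound, and it is in fact the same first move the paper makes (its Lemma \ref{lemma:blackwell_informative} invokes exactly this equivalence, via Theorem 1 of \cite{de2018blackwell} together with Blackwell's theorem \cite{blackwell1951comparison,dong2022gaussian}). The genuine gap is that your proposal stops where the proof has to start: the dominance itself is never established. Both routes you sketch are left open by your own account --- the $s$-independent garbling kernel is never constructed (and you correctly observe that the natural decorrelating kernel depends on $s$ through $\psi_{i}(\cdot\mid s)$, so it is not obviously a garbling at all), and the convex-order/mean-preserving-spread claim for the copula term $\textbf{L}^{\mathcal{G}}_{c_{s_{0}},c_{s_{1}}}$ is asserted as the thing to be shown rather than shown. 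Since the whole proposition is equivalent to that domination statement, what you have is a correct reformulation plus an admission that the crux is missing, not a proof; in particular the equality case (equality iff $\mathtt{D}_{[2]}=\mathtt{D}_{1,2}$) also remains unproven.

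What the paper does to fill precisely this hole is neither of your two routes. It argues decision-theoretically under the true data-generating law: $\textup{J}_{[2]}$, the posterior computed from the joint $\mathtt{D}_{[2]}$, is the \emph{true} Bayes posterior of $S$ given $(Y_{1},Y_{2})$, while $\textup{J}_{1,2}$, computed from the product $\mathtt{D}_{1,2}$, is a mismatched predictor evaluated under the same law. Lemma \ref{lemma:KL_divergence_and_CEL} then shows $\textup{CEL}\left(\textup{J}_{1,2}\right)-\textup{CEL}\left(\textup{J}_{[2]}\right)=\textup{KL}\left(\textup{J}_{[2]}\,\middle\|\,\textup{J}_{1,2}\right)\geq 0$, a short Gibbs-inequality computation, with equality iff the two posteriors coincide, i.e.\ iff $\mathtt{D}_{[2]}=\mathtt{D}_{1,2}$; Lemma \ref{lemma:blackwell_informative} converts this risk ordering into the trade-off-function ordering by the cited equivalence results. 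So the engine your proposal lacks is the cross-entropy/KL comparison of induced posteriors: if you want to salvage your plan, replace the kernel construction (which founders exactly on the $s$-dependence you identified) by this Bayes-risk argument, and let the equivalence theorems you already invoke do the conversion to trade-off functions.
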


The proof of Proposition \ref{prop:blackwell_ording} relies on Lemmas \ref{lemma:KL_divergence_and_CEL} and \ref{lemma:blackwell_informative}. Specifically:
\begin{itemize}
    \item Lemma \ref{lemma:KL_divergence_and_CEL} shows that Bayesian inference based on \(\mathtt{D}_{[2]}\) results in a smaller expected cross-entropy loss compared to that based on \(\mathtt{D}_{1,2}\).
    \item Lemma \ref{lemma:blackwell_informative} establishes an equivalence between the ordering of expected cross-entropy losses and Blackwell's ordering of informativeness.
\end{itemize}

Thus, by applying Lemmas \ref{lemma:KL_divergence_and_CEL} and \ref{lemma:blackwell_informative}, we can prove Proposition \ref{prop:blackwell_ording}.

\textbf{Detailed Proofs}

By Bayes' rule, the posterior distributions are constructed as follows:
\[
\begin{aligned}
    &\textup{J}_{1,2}(s \mid y_{1}, y_{2}) = \frac{\mathtt{D}_{1,2}(y_{1}, y_{2} \mid s) p_{\theta_{S}}(s)}{\textup{P}_{1,2}(y_{1}, y_{2})}, \\
    &\textup{J}_{[2]}(s \mid y_{1}, y_{2}) = \frac{\mathtt{D}_{[2]}(y_{1}, y_{2} \mid s) p_{\theta_{S}}(s)}{\textup{P}_{[2]}(y_{1}, y_{2})},
\end{aligned}
\]
where \(\textup{P}_{1,2}(y_{1}, y_{2}) = \int_{s} \mathtt{D}_{1,2}(y_{1}, y_{2} \mid s)p_{\theta_{S}}(s) \, ds\) and \(\textup{P}_{[2]}(y_{1}, y_{2}) = \int_{s} \mathtt{D}_{[2]}(y_{1}, y_{2} \mid s)p_{\theta_{S}}(s) \, ds\) are the marginal likelihoods.

For any randomized inference strategy $\sigma : \mathcal{Y}_{1} \times \mathcal{Y}_{2} \to \Delta(\mathcal{S})$,
the expected cross-entropy loss (CEL) induced by \(\{\gamma_{1}, \gamma_{2}, \theta\}\) is given by
\[
\begin{aligned}
    \textup{CEL}(\sigma) =&-\int_{s, x}p_{\theta}(s, x) \int_{y_{1}, y_{2}} \prod_{i=1}^{2} \gamma_{i}(y_{i} \mid x) \\
    &\times\log \sigma(s \mid y_{1}, y_{2}) \, dx \, ds \, dy_{[2]}.
\end{aligned}
\]

\begin{lemma}\label{lemma:KL_divergence_and_CEL}
The following inequality holds:
\[
\textup{CEL}\left(\textup{J}_{[2]}\right) \leq \textup{CEL}\left(\textup{J}_{1,2}\right),
\]
with equality if and only if \(\textup{J}_{[2]} = \textup{J}_{1,2}\).
\end{lemma}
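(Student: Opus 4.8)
The plan is to recognize that the expected cross-entropy loss $\textup{CEL}(\sigma)$ is taken with respect to the \emph{true} generative law of $(S, Y_1, Y_2)$, and that under this law the Bayes-optimal scoring strategy is precisely $\textup{J}_{[2]}$. First I would marginalize out $x$ in the definition of $\textup{CEL}$: since $\int_x p_\theta(s,x)\prod_{i=1}^2\gamma_i(y_i\mid x)\,dx = p_{\theta_S}(s)\,\mathtt{D}_{[2]}(y_1,y_2\mid s)$ by the very definition of $\mathtt{D}_{[2]}$, the loss rewrites as
\[
\textup{CEL}(\sigma) = -\int_{y_1,y_2}\textup{P}_{[2]}(y_1,y_2)\int_s \textup{J}_{[2]}(s\mid y_1,y_2)\log\sigma(s\mid y_1,y_2)\,ds\,dy_{[2]},
\]
using $p_{\theta_S}(s)\mathtt{D}_{[2]}(y_1,y_2\mid s) = \textup{P}_{[2]}(y_1,y_2)\textup{J}_{[2]}(s\mid y_1,y_2)$. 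Thus, for each fixed output pair $(y_1,y_2)$, the inner term is the cross-entropy between the true posterior $\textup{J}_{[2]}(\cdot\mid y_1,y_2)$ and the candidate $\sigma(\cdot\mid y_1,y_2)$.

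Next I would form the gap between the two losses and identify it as an expected Kullback--Leibler divergence. Subtracting and applying the same identity $p_{\theta_S}(s)\mathtt{D}_{[2]}(y_1,y_2\mid s)=\textup{P}_{[2]}(y_1,y_2)\textup{J}_{[2]}(s\mid y_1,y_2)$ termwise gives
\[
\textup{CEL}(\textup{J}_{1,2})-\textup{CEL}(\textup{J}_{[2]}) = \int_{y_1,y_2}\textup{P}_{[2]}(y_1,y_2)\int_s \textup{J}_{[2]}(s\mid y_1,y_2)\log\frac{\textup{J}_{[2]}(s\mid y_1,y_2)}{\textup{J}_{1,2}(s\mid y_1,y_2)}\,ds\,dy_{[2]}.
\]
The inner integral is exactly $D_{\textup{KL}}\!\bigl(\textup{J}_{[2]}(\cdot\mid y_1,y_2)\,\|\,\textup{J}_{1,2}(\cdot\mid y_1,y_2)\bigr)$, which is nonnegative by Gibbs' inequality; averaging a nonnegative integrand against the probability weight $\textup{P}_{[2]}$ preserves nonnegativity, so the gap is $\geq 0$ and the inequality $\textup{CEL}(\textup{J}_{[2]})\leq\textup{CEL}(\textup{J}_{1,2})$ follows.

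For the equality case I would invoke strict positivity of the divergence: $D_{\textup{KL}}(p\|q)=0$ if and only if $p=q$ almost everywhere. Hence equality in the lemma forces the integrand to vanish for $\textup{P}_{[2]}$-almost every $(y_1,y_2)$, i.e. $\textup{J}_{[2]}(\cdot\mid y_1,y_2)=\textup{J}_{1,2}(\cdot\mid y_1,y_2)$ on the support of $\textup{P}_{[2]}$, which is the claim $\textup{J}_{[2]}=\textup{J}_{1,2}$. The conceptual crux, and the step I expect to require the most care, is the observation that even though we are scoring the mismatched strategy $\textup{J}_{1,2}$, the expectation defining $\textup{CEL}$ is governed by the genuine joint law carried by $\mathtt{D}_{[2]}$; this is what makes $\textup{J}_{[2]}$ the unique Bayes-optimal predictor and drives the whole comparison. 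The only remaining technicality is absolute continuity of $\textup{J}_{[2]}(\cdot\mid y_1,y_2)$ with respect to $\textup{J}_{1,2}(\cdot\mid y_1,y_2)$, so that the logarithm and the divergence are well defined; this is guaranteed by the standing assumption that the densities $P_\theta$ and $\gamma_i$ are strictly positive on their supports.
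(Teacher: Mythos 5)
Your proposal is correct and follows essentially the same route as the paper's own proof: both rewrite the gap $\textup{CEL}(\textup{J}_{1,2})-\textup{CEL}(\textup{J}_{[2]})$ via the identity $\textup{J}_{[2]}(s\mid y_{1},y_{2})\,\textup{P}_{[2]}(y_{1},y_{2})=\int_{x}\prod_{i}\gamma_{i}(y_{i}\mid x)\,p_{\theta}(s,x)\,dx$, identify it as the $\textup{P}_{[2]}$-averaged Kullback--Leibler divergence $\textup{KL}(\textup{J}_{[2]}\,\|\,\textup{J}_{1,2})$, and conclude by nonnegativity of KL with equality iff $\textup{J}_{[2]}=\textup{J}_{1,2}$. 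Your write-up is in fact slightly cleaner than the paper's (which has a stray sign in its intermediate display), and your remarks on Bayes-optimality and absolute continuity are sensible additions but not a different argument.
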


\begin{proof}
We consider the difference between the expected CELs:
\[
\begin{aligned}
    \textup{CEL}\left(\textup{J}_{1,2}\right) - \textup{CEL}\left(\textup{J}_{[2]}\right) &= -\int_{s,x} p_{\theta}(s,x) \int_{y_{1}, y_{2}} \prod_{i=1}^{2} \gamma_{i}(y_{i} \mid x) \\
    &\quad \times \log \frac{\textup{J}_{[2]}(s \mid y_{1}, y_{2})}{\textup{J}_{1,2}(s \mid y_{1}, y_{2})} \, dx \, ds \, dy_{[2]}.
\end{aligned}
\]

By the definitions of \(\mathtt{D}_{[2]}\) and \(\textup{J}_{[2]}\), we have:
\[
\textup{J}_{[2]}(s \mid y_{1}, y_{2}) \textup{P}_{[2]}(y_{1}, y_{2}) = \int_{x} \prod_{i=1}^{2} \gamma_{i}(y_{i} \mid x) p_{\theta}(s, x) \, dx.
\]

Using this, the difference \(\textup{CEL}\left(\textup{J}_{1,2}\right) - \textup{CEL}\left(\textup{J}_{[2]}\right)\) can be rewritten as the Kullback–Leibler (KL) divergence between \(\textup{J}_{[2]}\) and \(\textup{J}_{1,2}\):
\[
\begin{aligned}
    \textup{CEL}\left(\textup{J}_{1,2}\right) - \textup{CEL}\left(\textup{J}_{[2]}\right) &= \textup{KL}\left(\textup{J}_{[2]} \| \textup{J}_{1,2}\right),
\end{aligned}
\]
where
\[
\begin{aligned}
    \textup{KL}\left(\textup{J}_{[2]} \| \textup{J}_{1,2}\right) &= \int_{s} \int_{y_{1}, y_{2}} \textup{J}_{[2]}(s \mid y_{1}, y_{2}) \\
    &\quad \times \log \frac{\textup{J}_{[2]}(s \mid y_{1}, y_{2})}{\textup{J}_{1,2}(s \mid y_{1}, y_{2})} \textup{P}_{[2]}(y_{1}, y_{2}) \, dy_{1} \, dy_{2} \, ds.
\end{aligned}
\]

Since \(\textup{KL}\left(\textup{J}_{[2]} \| \textup{J}_{1,2}\right) \geq 0\), with equality if and only if \(\textup{J}_{[2]} = \textup{J}_{1,2}\), it follows that:
\[
\textup{CEL}\left(\textup{J}_{[2]}\right) \leq \textup{CEL}\left(\textup{J}_{1,2}\right).
\]
\end{proof}

\begin{lemma}\label{lemma:blackwell_informative}
Let \(\mathcal{A}(\sigma)\) with \(\sigma : \mathcal{Y}_{1} \times \mathcal{Y}_{2} \to \Delta(\mathcal{S})\) be any attack model. Let \(\textup{J}_{[2]}\) and \(\textup{J}_{1,2}\) be the posterior distributions induced by \(\mathtt{D}_{[2]}\) and \(\mathtt{D}_{1,2}\), respectively. Then, the following two statements are equivalent:
\begin{itemize}
    \item[(i)] \(\textup{CEL}\left(\textup{J}_{[2]}\right) \leq \textup{CEL}\left(\textup{J}_{1,2}\right)\).
    \item[(ii)] For any \((s_{1}, s_{0}) \in \mathcal{Q}\) and \(\alpha \in [0,1]\),
    \[
    T(\mathtt{D}_{[2]}(\cdot \mid s_{0}), \mathtt{D}_{[2]}(\cdot \mid s_{1})) \leq T(\mathtt{D}_{1,2]}(\cdot \mid s_{0}), \mathtt{D}_{1,2}(\cdot \mid s_{1})).
    \]
\end{itemize}
Equality holds if and only if \(\mathtt{D}_{[2]} = \mathtt{D}_{1,2}\).
\end{lemma}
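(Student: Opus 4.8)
The plan is to read both conditions as two fingerprints of a single comparison-of-experiments statement: that the \emph{coupled} experiment $\bigl(\mathtt{D}_{[2]}(\cdot\mid s)\bigr)_{s}$ is more informative about $S$ than the \emph{decoupled} product experiment $\bigl(\mathtt{D}_{1,2}(\cdot\mid s)\bigr)_{s}$ in the sense of Blackwell \cite{blackwell1951comparison,de2018blackwell}. First I would record that the expected cross-entropy loss evaluated at the Bayes-optimal posterior is exactly the conditional Shannon entropy of the secret: $\textup{CEL}(\textup{J}_{[2]})=H(S\mid Y_{1},Y_{2})$ under $\mathtt{D}_{[2]}$, and likewise $\textup{CEL}(\textup{J}_{1,2})=H(S\mid Y_{1},Y_{2})$ under $\mathtt{D}_{1,2}$; this is the standard identity that $-\log$ is the strictly proper scoring rule whose minimal risk is entropy. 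Because both experiments share the fixed prior $p_{\theta_{S}}$, statement (i) is then equivalent to the mutual-information ordering $I(S;Y_{1},Y_{2})_{[2]}\ge I(S;Y_{1},Y_{2})_{1,2}$, and Lemma~\ref{lemma:KL_divergence_and_CEL} already exhibits the nonnegative gap as the divergence $\textup{KL}(\textup{J}_{[2]}\|\textup{J}_{1,2})$, which vanishes precisely when $\textup{J}_{[2]}=\textup{J}_{1,2}$.

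Next I would treat the trade-off side through binary hypothesis testing. For each fixed pair $(s_{0},s_{1})\in\mathcal{Q}$, restrict attention to the two-point sub-experiment. By Neyman--Pearson, $T(\mathtt{D}_{[2]}(\cdot\mid s_{0}),\mathtt{D}_{[2]}(\cdot\mid s_{1}))$ is the lower boundary of the achievable Type-I/Type-II region, and its pointwise domination by the decoupled trade-off is exactly the statement that the coupled binary experiment is Blackwell-more-informative than the decoupled one. The easy half of the argument is the data-processing direction: whenever the coupled experiment dominates, every Bayes risk -- in particular the log-loss risk restricted to $\{s_{0},s_{1}\}$ -- is at least as small, and the trade-off function is dominated. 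To convert these two observations into an equivalence I would invoke the Blackwell--Sherman--Stein characterization specialized to binary parameter sets, where trade-off domination, domination of Bayes risk for every prior and loss (hence for the restricted cross-entropy), and existence of a state-independent garbling all coincide.

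The main obstacle is reconciling the two quantifier patterns: (i) is a \emph{single} scalar comparison against the \emph{fixed, global} prior $p_{\theta_{S}}$, whereas (ii) is a \emph{family} of pairwise trade-off comparisons indexed by $\mathcal{Q}$, and in general a single mutual-information inequality does not pin down the Blackwell order. I would close this gap by exploiting the special factorized structure $\mathtt{D}_{1,2}(y_{1},y_{2}\mid s)=\psi_{1}(y_{1}\mid s)\psi_{2}(y_{2}\mid s)$ against the shared-$x$ coupling in $\mathtt{D}_{[2]}$, which lets the posterior map be analyzed coordinatewise and makes the KL gap of Lemma~\ref{lemma:KL_divergence_and_CEL} a faithful witness: by strict convexity of $-\log$, the gap is zero iff $\textup{J}_{[2]}=\textup{J}_{1,2}$ $p_{\theta_{S}}$-a.e., and since $p_{\theta_{S}}$ has full support on the relevant states this forces $\mathtt{D}_{[2]}=\mathtt{D}_{1,2}$ as densities, hence equality of all pairwise trade-off functions; conversely, if $\mathtt{D}_{[2]}\neq\mathtt{D}_{1,2}$ the gap is strictly positive, yielding strict trade-off domination for at least one $(s_{0},s_{1})\in\mathcal{Q}$. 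This simultaneously delivers the equivalence and the ``equality holds iff $\mathtt{D}_{[2]}=\mathtt{D}_{1,2}$'' clause.
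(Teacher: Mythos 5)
Your proposal diverges from the paper's proof and contains a genuine gap at exactly the point you yourself flag as ``the main obstacle.'' The paper never passes through mutual information at all: it cites Theorem 1 of \cite{de2018blackwell} to identify the pairwise trade-off domination in (ii) with the existence of a randomized post-processing map $\mathtt{Proc}$ carrying $\mathtt{D}_{[2]}(\cdot\mid s)$ to $\mathtt{D}_{1,2}(\cdot\mid s)$ on every pair in $\mathcal{Q}$, and then invokes Theorem 10 of \cite{blackwell1951comparison} (garbling $\Leftrightarrow$ domination of Bayes risk for every prior and loss, in particular the log-loss risk defining $\textup{CEL}$) to tie this to (i). Your route instead tries to derive (ii) from (i) by a dichotomy: either $\mathtt{D}_{[2]} = \mathtt{D}_{1,2}$, in which case the trade-off functions coincide, or $\mathtt{D}_{[2]} \neq \mathtt{D}_{1,2}$, in which case the strictly positive KL gap from Lemma \ref{lemma:KL_divergence_and_CEL} ``yields strict trade-off domination for at least one pair.'' That last inference is unsupported: strictness of a single scalar functional inequality tells you the two posteriors differ, but it does not tell you that one experiment Blackwell-dominates the other on any binary sub-problem; the two trade-off functions could cross, i.e.\ the binary experiments could be Blackwell-incomparable, which is precisely the standard phenomenon showing that orderings by information measures do not imply the Blackwell order. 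The appeal to the ``factorized structure'' of $\mathtt{D}_{1,2}$ is stated but never actually used to exclude crossing, so the quantifier gap you correctly identified is not closed; the only known bridge between a risk comparison and pointwise trade-off domination is the garbling characterization that the paper uses.

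There is also a secondary but consequential misreading of the statement itself: in the paper, $\textup{CEL}(\sigma)$ is defined for \emph{every} $\sigma$ as an expectation under the true coupled law $p_{\theta}(s,x)\prod_{i}\gamma_{i}(y_{i}\mid x)$, so $\textup{CEL}(\textup{J}_{1,2})$ is the risk of the \emph{mismatched predictor} $\textup{J}_{1,2}$ evaluated under $\mathtt{D}_{[2]}$, not the Bayes risk (conditional entropy) of the product experiment $\mathtt{D}_{1,2}$. Hence your opening identity, that $\textup{CEL}(\textup{J}_{1,2})$ equals $H(S\mid Y_{1},Y_{2})$ computed under $\mathtt{D}_{1,2}$, is false under the paper's definition, and the reduction of (i) to a mutual-information ordering between two experiments is not faithful to what is being proved. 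Indeed, under the paper's definition, (i) holds unconditionally by Lemma \ref{lemma:KL_divergence_and_CEL} (it is nothing but optimality of the Bayes posterior under a strictly proper score), which is exactly why the equivalence with (ii) must be established through the garbling/Blackwell route rather than through any entropy comparison between the coupled and decoupled models.
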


\begin{proof}
By Theorem 1 of \cite{de2018blackwell}, part (ii) of Lemma \ref{lemma:blackwell_informative} is equivalent to the following:
\begin{itemize}
    \item There exists a randomized mechanism \(\mathtt{Proc} : \mathcal{Y}_{1} \times \mathcal{Y}_{2} \to \mathcal{Z}\) such that 
    \[
    \begin{aligned}
        \mathtt{D}_{1,2}(s_{0}) = \mathtt{Proc}(\mathtt{D}_{[2]}(s_{0})),\; \mathtt{D}_{1,2}(s_{1}) = \mathtt{Proc}(\mathtt{D}_{[2]}(s_{1})),
    \end{aligned}
    \]
    for all \((s_{0}, s_{1}) \in \mathcal{Q}\).
\end{itemize}
Then, by Theorem 10 in \cite{blackwell1951comparison} (see also Theorem 2.10 in \cite{dong2022gaussian} and Theorem 2.5 in \cite{kairouz2015composition}), we establish the equivalence of parts (i) and (ii) of Lemma \ref{lemma:blackwell_informative}.
\end{proof}

Recall that a function \(f : [0,1] \to [0,1]\) is a \textit{symmetric trade-off function} if it is convex, continuous, non-increasing, satisfies \(f(\alpha) \leq 1 - \alpha\) and 
\[
f(\alpha) = \inf\{t \in [0,1] : f(t) \leq \alpha\},
\]
for \(\alpha \in [0,1]\) \cite{dong2022gaussian}. 

The convex conjugate of \(f\) is defined as
\[
f^{*}(y) = \sup\nolimits_{0 \leq \alpha' \leq 1} \, (y\alpha' - f(\alpha')).
\]
Let \(\epsilon^{*}(\delta_{g}) = \textup{OPT}\left(\mathcal{N}(\psi_{[2]}), \delta_{g}\right)\) for any \(\delta_{g} \in [0,1)\), where \(\psi_{[2]}\) is induced by \(\gamma_{[2]}\). Additionally, define 
\[
\delta_{[2]}(\epsilon_{g}) = 1 + f^{*}_{[2]}(-e^{\epsilon_{g}}),
\]
\[
\delta_{1,2}(\epsilon_{g}) = 1 + f^{*}_{1,2}(-e^{\epsilon_{g}}),
\]
for any \(\epsilon_{g} \geq 0\).

When \(\mathtt{D}_{[2]} \neq \mathtt{D}_{1,2}\), Proposition \ref{prop:blackwell_ording} implies 
\[
T(\mathtt{D}_{[2]}(\cdot \mid s_{0}), \mathtt{D}_{[2]}(\cdot \mid s_{1})) < T(\mathtt{D}_{1,2}(\cdot \mid s_{0}), \mathtt{D}_{1,2}(\cdot \mid s_{1})).
\]
Consequently, \(f_{[2]}(\alpha) < f_{1,2}(\alpha)\) for all \(\alpha \in [0,1]\), which implies that for any fixed \(\epsilon_{g}\) and \(\epsilon\), we have \(\delta_{[2]}(\epsilon) > \delta_{1,2}(\epsilon)\).

Suppose \(\delta_{[2]}(\epsilon_{[2]}) = \delta_{1,2}(\epsilon_{1,2})\). Then, \(f^{*}_{[2]}(-e^{\epsilon^{[2]}_{g}}) = f^{*}_{1,2}(-e^{\epsilon^{1,2}_{g}})\). Since both \(f_{[2]}\) and \(f_{1,2}\) are convex, continuous, and non-increasing functions, their convex conjugates are concave, continuous, and non-decreasing functions. 

Furthermore, since \(\delta_{[2]}(\epsilon) > \delta_{1,2}(\epsilon)\) for all \(\epsilon \geq 0\), to achieve \(\delta_{[2]}(\epsilon_{[2]}) = \delta_{1,2}(\epsilon_{1,2})\), we must have \(\epsilon_{[2]} > \epsilon_{1,2}\).

Thus, we complete the proof of Theorem \ref{thm:standard_compose_fail_general}.

Theorem \ref{thm:standard_compose_fail_general} confirms the following parts of Theorem \ref{thm:order_composition}:
\begin{itemize}
    \item If \(\textbf{L}_{c_{s_{0}}, c_{s_{1}}}(\vec{Y}) \neq 0\), then
    \[
    \begin{aligned}
        &\underline{\textbf{opt}}_{\delta_{g}}< \textbf{opt}_{\delta_{g}}, \; \underline{\textbf{dt}}_{\epsilon_{g}}< \textbf{dt}_{\epsilon_{g}}.
    \end{aligned}
    \]
\end{itemize}
The second set of inequalities:
\[
\begin{aligned}
    &\textbf{opt}_{\delta_{g}} \leq \overline{\textbf{opt}}_{\delta_{g}},\; \textbf{dt}_{\epsilon_{g}} \leq \overline{\textbf{dt}}_{\epsilon_{g}},
\end{aligned}
\]
arise from the fact that, in the PLRV given by (\ref{eq:privacy_loss_RV}), the copula-dependent PLRV \(\textbf{L}^{\mathcal{G}}_{c_{s_{0}}, c_{s_{1}}}(\vec{Y})\) is correlated with \(\{\textbf{L}_{\psi_{i}^{s_{0}}, \psi_{i}^{s_{1}}}(Y_{i})\}_{i=1}^{k}\), where each element of \(\{\textbf{L}_{\psi_{i}^{s_{0}}, \psi_{i}^{s_{1}}}(Y_{i})\}_{i=1}^{k}\) is independent of the others. 

Consequently, even when each \(\psi_{i}\) achieves the bounds of its probabilistic indistinguishability, the copula may not attain its own bound. The maximum privacy loss (\(\overline{\textbf{opt}}_{\delta_{g}}\) or \(\overline{\textbf{dt}}^{\epsilon_{g}}\)) occurs when all the individual mechanisms and the copula term achieve their respective bounds of probabilistic indistinguishability, maximizing the cumulative privacy loss.

\hfill $\square$






\subsection{Proof of Theorem \ref{thm:inverse_composition}}

In part (i), the condition $\Pi[\epsilon_{g}, \delta_{g}] \neq \emptyset$ is necessary because the target privacy budget $(\epsilon_{g}, \delta_{g})$ may already be violated by the existing $k$ mechanisms $\{\mathcal{M}_{i}(\gamma_{i})\}_{i=1}^{k}$. 

Part (ii) of Theorem~\ref{thm:order_composition_pre} (and part (i) of Theorem~\ref{thm:order_composition}) indicates that while basic composition theory is generally satisfied by DCP, it prevents directly using a conservative summation of privacy parameters to estimate the overall privacy budget. Consequently, \textbf{Task-1} requires a rough and conservative estimation of the total privacy loss to ensure that $\Pi[\epsilon_{g}, \delta_{g}] \neq \emptyset$.

In \textbf{Task-2}, $\epsilon_{g}$ becomes an additional decision variable. Ideally, if the optimization problem is solved perfectly, $\epsilon_{g}$ would correspond to the true, tightest privacy loss value for the composition $\mathcal{M}(\vec{\gamma})$.

The main proof task of Theorem~\ref{thm:inverse_composition} is to demonstrate that the constraints specified by $\Pi[\epsilon_{g}, \delta_{g}]$, for $\epsilon_{g} \geq 0$ and $\delta_{g} \in [0,1]$, are sufficient to guarantee that the privacy strategy $\alpha$—which induces the best response $\pi$ satisfying $\Pi[\epsilon_{g}, \delta_{g}]$—ensures that the composition satisfies $(\epsilon_{g}, \delta_{g})$-DCP.





Given any choice of strictly proper scoring rules \(\mathcal{L}(\pi, s, \vec{y}_{\alpha})\), the posterior distribution \(\pi^{*} = \mu\) is the \textit{unique} minimizer of \(\mathbb{E}_{\alpha, \vec{\gamma}}^{\mu}[\mathcal{L}(\pi, s, \vec{y}_{\alpha})]\) \cite{gneiting2007strictly}. Here, \(\mu\) denotes the posterior distribution induced by the leader's privacy strategy \(\alpha\) and the composition \(\mathcal{M}(\vec{\gamma})\) (or equivalently, \(\mathcal{M}(\vec{\gamma}_{\alpha})\)).

To establish that the composition \(\mathcal{M}(\vec{\gamma}_{\alpha})\) satisfies \((\epsilon_{g}, \delta_{g})\)-DCP, we demonstrate that the induced posterior distribution fulfills the conditions in \(\Pi[\epsilon_{g}, \delta_{g}]\). Consequently, the composition \(\mathcal{M}(\vec{\gamma}_{\alpha})\) is proven to be \((\epsilon_{g}, \delta_{g})\)-DCP.

With abuse of notation, we let $\vec{\gamma}^{c}_{\alpha}$ denote the joint density function of $\mathcal{M}(\vec{\gamma}_{\alpha})$ with the underlying copula density $c$ capturing the dependencies among the mechanisms $\mathcal{M}_{1}(\gamma_{1}), \dots, \mathcal{M}_{k}(\gamma_{k})$.
In addition, let $\vec{\psi}^{c}_{\alpha}(\cdot|s)$ be the corresponding density of the effective mechanism.
Given the density $\vec{\gamma}^{c}_{\alpha}$ and any output profile $\vec{y}_{\alpha}\in \vec{\mathcal{Y}}\times \mathcal{Y}_{\alpha}$, the posterior belief is constructed by
\[
\begin{aligned}
    \text{J}\left(s\middle|\vec{y}_{\alpha}\right)=\frac{P(s, \vec{y}_{\alpha})}{P(s, \vec{y}_{\alpha})},
\end{aligned}
\]
where $P(s, \vec{y}_{\alpha})\equiv\int_{x}\vec{\gamma}^{c}_{\alpha}(\vec{y}_{\alpha}|x)P_{\theta}(s,x)dx$ and $P(s, \vec{y}_{\alpha})\equiv \int_{s} P(s, \vec{y}_{\alpha}) ds$.

By Lemma \ref{lemma:probabilistic_representation1} shown below, we first obtain that 
\[
\begin{aligned}
    \text{J}\left(s\middle|\vec{y}_{\alpha}\right) \geq e^{-\epsilon}P_{\theta}(s),
\end{aligned}
\]
and
\[
\mathbb{E}_{\text{J}}\left[\frac{\text{J}\left(S\middle|\vec{y}_{\alpha}\right)}{P_{\theta}(S)}\right] \leq \delta_{g} e^{\epsilon_{g}},
\]
implies
\[
\begin{aligned}
    \text{Pr}_{S\sim \text{J}\left(\cdot\middle|\vec{y}_{\alpha}\right)}\left[e^{-\epsilon_{g}}\leq \frac{\text{J}\left(S\middle|\vec{y}_{\alpha}\right)}{P_{\theta}(S)} \leq e^{\epsilon_{g}} \right] \geq 1-\delta_{g},
\end{aligned}
\]
for all $\vec{y}_{\alpha}\in \vec{\mathcal{Y}}\times \mathcal{Y}_{\alpha}$.
Then, if $\vec{Y}_{\alpha}\sim \vec{\psi}^{c}_{\alpha}(\cdot|s)$, it holds that 
\begin{equation}\label{eq:probability_posterior_prior_ratio}
    \text{Pr}_{\vec{Y}_{\alpha}\sim \vec{\psi}^{c}_{\alpha}(\cdot|s)}\left[e^{-\epsilon_{g}}\leq \frac{\text{J}\left(S\middle|\vec{Y}_{\alpha}\right)}{P_{\theta}(S)} \leq e^{\epsilon_{g}}  \right]\geq 1-\delta_{g}.
\end{equation}

From Lemma \ref{lemma:semantic_secure_1}, we have that
\[
\begin{aligned}
    e^{-\epsilon_{g}}\leq \frac{\text{J}\left(s\middle|\vec{y}_{\alpha}\right)}{P_{\theta}(s)} \leq e^{\epsilon_{g}}, \forall s\in \mathcal{S},
\end{aligned}
\]
implies
\[
\textbf{Pr}^{s}_{\vec{\gamma}^{c}_{\alpha}}\left[\vec{y}_{\alpha}\in\vec{\mathcal{W}}_{\alpha}\right]\leq e^{\varepsilon(\epsilon_{g}, \theta) }\textbf{Pr}^{s'}_{\vec{\gamma}^{c}_{\alpha}}\left[\vec{y}_{\alpha}\in\vec{\mathcal{W}}_{\alpha}\right],
\]
for all $(s,s')\in\mathcal{Q}$, $\vec{\mathcal{W}}_{\alpha}\subset \vec{\mathcal{Y}}_{\alpha}$, where $\varepsilon(\epsilon_{g}, \theta)\equiv \log\left( 1 + \frac{e^{\epsilon_{g}}-1}{P^{*}_{\theta}}\right)$ with $P^{*}_{\theta}=\min_{s}P_{\theta}(s)$.
Thus, by (\ref{eq:probability_posterior_prior_ratio}), we have
\begin{equation}\label{eq:probability_CP}
    \begin{aligned}
        \text{Pr}_{\vec{Y}_{\alpha}\sim \vec{\psi}^{c}_{\alpha}(\cdot|s)}\left[\frac{\textbf{Pr}^{s}_{\vec{\gamma}^{c}_{\alpha}}\left[\vec{y}_{\alpha}\in\vec{\mathcal{W}}_{\alpha}\right]}{\textbf{Pr}^{s'}_{\vec{\gamma}^{c}_{\alpha}}\left[\vec{y}_{\alpha}\in\vec{\mathcal{W}}_{\alpha}\right]} \leq e^{\epsilon_{g}}  \right]\geq 1-\delta_{g},
    \end{aligned}
\end{equation}
for all $(s,s')\in\mathcal{Q}$, $\vec{\mathcal{W}}_{\alpha}\subset \vec{\mathcal{Y}}_{\alpha}$.

Finally, Lemma \ref{lemma:probability_to_deterministic} shows that (\ref{eq:probability_CP}) implies 
\[
\textbf{Pr}^{s}_{\vec{\gamma}^{c}_{\alpha}}\left[\vec{y}_{\alpha}\in\vec{\mathcal{W}}_{\alpha}\right]\leq e^{\epsilon_{g}} \textbf{Pr}^{s'}_{\vec{\gamma}^{c}_{\alpha}}\left[\vec{y}_{\alpha}\in\vec{\mathcal{W}}_{\alpha}\right]+\delta_{g}.
\]
Therefore, we can conclude that the sufficiency of the conditions given by $\Pi[\epsilon_{g}, \delta_{g}]$ for any $\epsilon_{g}\geq 0$ and $\delta_{g}\in(0,1]$.
In addition, when $\delta_{g}=0$, Lemma \ref{lemma:semantic_secure_1} establishes the sufficiency.

\subsubsection{Related Lemmas}

\begin{lemma}\label{lemma:probabilistic_representation1}
Let $F:\mathcal{Y}\mapsto \Delta(\mathcal{S})$ be any conditional probability mass or density function, and let $P_{S}\in\Delta(\mathcal{S})$ be any prior distribution of the secret $S$. For any $\epsilon\geq 0$ and $\delta\in[0,1)$, if 
\[
F(s|y)\geq e^{-\epsilon}p_{\theta_{S}}(s) \textup{ and } \mathbb{E}\left[\frac{F\left(S\middle|y\right)}{P_{S}(S)}\right]\leq \delta e^{\epsilon},
\]
where the expectation is taken over the randomness of $S\sim F(\cdot|y)$, then
\[
\textup{Pr}\left[e^{-\epsilon}\leq\frac{F(S|y)}{P_{S}(S)}\leq e^{\epsilon}\right]\geq 1-\delta,
\]
for any $s\in\{s_{0}, s_{1}\}\in\mathcal{Q}$, $y\in\mathcal{Y}$.
\end{lemma}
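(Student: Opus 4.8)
The plan is to reduce the two-sided concentration claim to a single one-sided tail bound and then invoke Markov's inequality. First I would introduce the nonnegative random variable $R := F(S\mid y)/P_{S}(S)$ with $S \sim F(\cdot\mid y)$, so that the event in the conclusion is exactly $\{e^{-\epsilon}\le R\le e^{\epsilon}\}$, while the two hypotheses become: (a) $R\ge e^{-\epsilon}$ holds pointwise for every $s$ (equivalently, $R \ge e^{-\epsilon}$ $F(\cdot\mid y)$-almost surely), and (b) $\mathbb{E}[R]\le \delta e^{\epsilon}$, where the expectation is under $S\sim F(\cdot\mid y)$.

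The key observation is that hypothesis (a) makes the lower bound deterministic: since $F(s\mid y)\ge e^{-\epsilon}P_{S}(s)$ for every $s$ in the support, we have $R\ge e^{-\epsilon}$ with probability one, so the lower-tail event $\{R<e^{-\epsilon}\}$ is null. Hence $\textup{Pr}[e^{-\epsilon}\le R\le e^{\epsilon}] = 1-\textup{Pr}[R>e^{\epsilon}]$, and it suffices to prove $\textup{Pr}[R>e^{\epsilon}]\le \delta$. For this I would apply Markov's inequality to the nonnegative variable $R$, giving $\textup{Pr}[R>e^{\epsilon}]\le \mathbb{E}[R]/e^{\epsilon}$, and then substitute hypothesis (b) to obtain $\mathbb{E}[R]/e^{\epsilon}\le \delta e^{\epsilon}/e^{\epsilon}=\delta$. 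Combining the two displays yields the claimed bound $\textup{Pr}[e^{-\epsilon}\le R\le e^{\epsilon}]\ge 1-\delta$. The discrete/continuous dichotomy is immaterial: replacing the mass function $F(\cdot\mid y)$ by a density and sums by integrals leaves the Markov argument intact, and the degenerate cases $\epsilon=0$ or $\delta=0$ are covered by the same inequalities.

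I do not expect a genuine obstacle here, since the argument is essentially one application of Markov's inequality; the only points requiring care are bookkeeping. The first is to confirm that the expectation in (b) is taken under the \emph{same} law $S\sim F(\cdot\mid y)$ that defines the probability in the conclusion, so that Markov applies without a change of measure. The second is the strict-versus-nonstrict inequality at the threshold $e^{\epsilon}$, which is harmless because $\textup{Pr}[R>e^{\epsilon}]\le \textup{Pr}[R\ge e^{\epsilon}]\le \mathbb{E}[R]/e^{\epsilon}$. A final minor check is that $P_{S}(s)>0$ on the relevant support so that $R$ is well defined, which is guaranteed in the intended application by $P^{*}_{\theta}=\min_{s}P_{\theta}(s)>0$.
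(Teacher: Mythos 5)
Your proposal is correct and follows essentially the same route as the paper's own proof: both use the pointwise hypothesis $F(s\mid y)\geq e^{-\epsilon}P_{S}(s)$ to make the lower-tail event null, then bound the upper tail $\textup{Pr}\left[\frac{F(S\mid y)}{P_{S}(S)}\geq e^{\epsilon}\right]$ by Markov's inequality combined with the expectation hypothesis. Your explicit handling of the strict-versus-nonstrict inequality at the threshold $e^{\epsilon}$ is in fact slightly more careful than the paper's write-up, which conflates $\{R\geq e^{\epsilon}\}$ with $\{R> e^{\epsilon}\}$ in the complement identity, but this does not affect the validity of either argument.
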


\begin{proof}
First, $F(s|y)\geq e^{-\epsilon}p_{\theta_{S}}(s)$ implies 
\[
\begin{aligned}
    \textup{Pr}\left[e^{-\epsilon}\leq\frac{F(s|y)}{P_{S}(s)}\leq e^{\epsilon}\right]=1-\textup{Pr}\left[\frac{F(s|\vec{y}_{[k]})}{P_{S}(s)}\geq e^{\epsilon}\right].
\end{aligned}
\]
By Markov's inequality, we have 
\[
\textup{Pr}\left[\frac{F(s|\vec{y}_{[k]})}{P_{S}(s)}\geq e^{\epsilon}\right]\leq \frac{\mathbb{E}\left[\frac{F(s|\vec{y}_{[k]})}{P_{S}(s)}\right] }{e^{\epsilon}}.
\]
Since $\mathbb{E}\left[\frac{F\left(S\middle|y\right)}{P_{S}(S)}\right]\leq \delta e^{\epsilon}$, we have 
\[
\textup{Pr}\left[\frac{F(s|\vec{y}_{[k]})}{P_{S}(s)}\geq e^{\epsilon}\right]\leq \delta.
\]
Therefore, $\textup{Pr}\left[e^{-\epsilon}\leq\frac{F(s|\vec{y}_{[k]})}{P_{S}(s)}\leq e^{\epsilon}\right]\geq 1-\delta$.
\end{proof}

\begin{lemma}\label{lemma:semantic_secure_1}
Given $\theta\in \Theta$, a mechanism $\mathcal{M}(\gamma)$ is $(\varepsilon(\epsilon_{g}, \theta),0)$-DCP with $\varepsilon(\epsilon_{g}, \theta)= \log\left( 1 + \frac{e^{\epsilon_{g}}-1}{P^{*}_{\theta}}\right)$ where $P^{*}_{\theta}=\min_{s}P_{\theta}(s)$ if the posterior $\textup{J}$ satisfies  
    \begin{equation}\label{eq:lemma_posterior_inequality}
        e^{-\epsilon}\leq\frac{\text{J}\left(s\middle|y\right)}{P_{\theta}(s)}\leq e^{\epsilon},
    \end{equation}
    for all $s\in \mathcal{S}$.
\end{lemma}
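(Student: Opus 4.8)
The plan is to reduce the $(\varepsilon(\epsilon_g,\theta),0)$-DCP conclusion to a \emph{pointwise} likelihood-ratio bound and then exploit the normalization of the posterior over the secret. Write $\ell(y\mid s)$ for the density of $\textbf{Pr}^{s}_{\gamma}$ given by (\ref{eq:prior_likelihood}) and $P(y)=\int_{s}\ell(y\mid s)P_{\theta}(s)\,ds$ for the marginal density of the output. By Bayes' rule $\text{J}(s\mid y)=\ell(y\mid s)P_{\theta}(s)/P(y)$, so the hypothesis (\ref{eq:lemma_posterior_inequality}) is equivalent to the statement that $r_{y}(s):=\text{J}(s\mid y)/P_{\theta}(s)=\ell(y\mid s)/P(y)$ lies in $[e^{-\epsilon_g},e^{\epsilon_g}]$ for every $s$ and $y$. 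The key extra ingredient is that, because $\text{J}(\cdot\mid y)$ is a probability distribution on $\mathcal{S}$, for each fixed $y$ it satisfies the normalization $\int_{s}r_{y}(s)P_{\theta}(s)\,ds=1$. The quantity I must control is exactly $\ell(y\mid s)/\ell(y\mid s')=r_{y}(s)/r_{y}(s')$.

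The core step is a per-$y$ estimate that converts this normalization identity into the claimed constant. Fixing $y$ and an adjacent pair $(s,s')$, I would split the normalization integral into the $s$-term, the $s'$-term, and the remainder, and lower-bound the remainder by $e^{-\epsilon_g}\bigl(1-P_{\theta}(s)-P_{\theta}(s')\bigr)$ using $r_{y}\ge e^{-\epsilon_g}$. This produces an upper bound on $r_{y}(s)P_{\theta}(s)$ in terms of $r_{y}(s')$; dividing by $r_{y}(s')P_{\theta}(s)$ yields an upper bound on the ratio that is decreasing in $r_{y}(s')$, so its largest value over the feasible range $r_{y}(s')\ge e^{-\epsilon_g}$ is attained at $r_{y}(s')=e^{-\epsilon_g}$. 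At that point the prior masses of $s$ and $s'$ cancel and one is left with
\[
\frac{\ell(y\mid s)}{\ell(y\mid s')}=\frac{r_{y}(s)}{r_{y}(s')}\le 1+\frac{e^{\epsilon_g}-1}{P_{\theta}(s)}.
\]
Since $P_{\theta}(s)\ge P^{*}_{\theta}$ and $e^{\epsilon_g}-1\ge 0$, this is at most $1+\frac{e^{\epsilon_g}-1}{P^{*}_{\theta}}=e^{\varepsilon(\epsilon_g,\theta)}$.

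Finally, because this is a bound on the densities valid for every $y$, integrating over any measurable $\mathcal{W}\subset\mathcal{Y}$ gives $\textbf{Pr}^{s}_{\gamma}[Y\in\mathcal{W}]\le e^{\varepsilon(\epsilon_g,\theta)}\,\textbf{Pr}^{s'}_{\gamma}[Y\in\mathcal{W}]$ for all $(s,s')\in\mathcal{Q}$, which is precisely $(\varepsilon(\epsilon_g,\theta),0)$-DCP. I expect the main obstacle to be the second step: the naive combination $r_{y}(s)\le e^{\epsilon_g}$ together with $r_{y}(s')\ge e^{-\epsilon_g}$ only yields the loose constant $e^{2\epsilon_g}$, and recovering the sharp $1+\frac{e^{\epsilon_g}-1}{P^{*}_{\theta}}$ genuinely requires the per-$y$ normalization of the posterior, which would be irretrievably lost if one argued directly at the level of the integrated probabilities $\textbf{Pr}^{s}_{\gamma}[Y\in\mathcal{W}]$. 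Care is also needed to replace integrals by sums in the discrete-secret case and to use $P_{\theta}(s),P_{\theta}(s')>0$ (guaranteed by $P^{*}_{\theta}>0$) so that every ratio is well defined.
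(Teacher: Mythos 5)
Your proof is correct and reaches the paper's exact constant, but by a genuinely different route. The paper argues in the style of Claim 3 of \cite{dwork2006calibrating}: it constructs, ``with abuse of notation,'' a \emph{two-point prior} supported on the adjacent pair $\{s_{0},s_{1}\}$, sets $E=\{s_{1}\}$, writes the exact identity $P_{\theta}(E)/\textup{J}(E\mid y)=1-P_{\theta}(s_{0})+P_{\theta}(s_{0})\,\psi(y\mid s_{0})/\psi(y\mid s_{1})$, and inverts the hypothesis to get $\psi(y\mid s_{0})/\psi(y\mid s_{1})\leq 1+\bigl(e^{\epsilon_{g}}-1\bigr)/P_{\theta}(s_{0})$; the two-point normalization makes the algebra immediate, with no remainder term. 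You instead keep the original prior on all of $\mathcal{S}$, use the normalization of the posterior $\int_{s}r_{y}(s)P_{\theta}(s)\,ds=1$ with $r_{y}(s)=\textup{J}(s\mid y)/P_{\theta}(s)$, lower-bound the mass of the remaining secrets by $e^{-\epsilon_{g}}\bigl(1-P_{\theta}(s)-P_{\theta}(s')\bigr)$, and take the worst case over $r_{y}(s')\geq e^{-\epsilon_{g}}$ (your monotonicity claim checks out: the upper bound has derivative in $r_{y}(s')$ of sign $-\bigl(1-e^{-\epsilon_{g}}(1-P_{\theta}(s)-P_{\theta}(s'))\bigr)<0$), recovering the same constant $1+\bigl(e^{\epsilon_{g}}-1\bigr)/P_{\theta}(s)$ before relaxing to $P^{*}_{\theta}$. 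What your route buys is significant: it proves the lemma \emph{as literally stated}, with the hypothesis applied to the posterior induced by the given $\theta$. The paper's proof, by contrast, needs the hypothesis to hold for the posterior formed under the constructed two-point prior, which is a different conditional distribution (different normalizing denominator) than the $\textup{J}$ appearing in the hypothesis; this prior-switching step is a real logical wrinkle that your argument avoids entirely. What the paper's route buys is lighter algebra and no dependence on the secrets $s,s'$ carrying atoms.

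Two caveats on your write-up. First, splitting the normalization integral into the $s$-term, the $s'$-term, and a remainder is only meaningful when the prior has atoms at $s$ and $s'$ (countable $\mathcal{S}$); written with integrals as you have it, the two point terms have measure zero and the split yields nothing. This matches the setting in which the lemma is actually invoked (Section IV manipulates sums over $s$), but your closing caution has the direction reversed: it is the \emph{continuous} secret case, not the discrete one, that requires care. Second, your remark that $1+\bigl(e^{\epsilon_{g}}-1\bigr)/P^{*}_{\theta}$ is the ``sharp'' constant and $e^{2\epsilon_{g}}$ the ``loose'' one is not accurate in general: the two are incomparable, and whenever $P^{*}_{\theta}<1/(e^{\epsilon_{g}}+1)$ the naive constant $e^{2\epsilon_{g}}$ is in fact the smaller of the two. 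This does not affect correctness—the lemma asks for the stated constant, and you prove it—but the normalization argument is indispensable only for obtaining this particular constant, not for obtaining a pure $\delta=0$ guarantee.
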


Lemma \ref{lemma:semantic_secure_1} extends Claim 3 of \cite{dwork2006calibrating} to the general differential confounding privacy.

\begin{proof}

Let $\mathcal{N}(\psi):\mathcal{S}\mapsto \mathcal{Y}$ be the effective mechanism of $\mathcal{M}(\gamma)$, with well-defined density $\psi$, prior $P_{\theta}$, and induced posterior $\mathtt{J}$. Let $E$ be any measurable event.

Suppose that the mechanism $\mathcal{M}(\gamma)$ satisfies (\ref{eq:lemma_posterior_inequality}). 
Then, 
\[
e^{-\epsilon}\leq\frac{\text{J}\left(E\middle|y\right)}{P_{\theta}(E)}\leq e^{\epsilon}.
\]
Given $\theta\in\Theta$, construct a prior $P_{\theta}$ (with abuse of notation) that places mass only on $s_{0}$ and $s_{1}$, such that $P_{\theta}(s_{0}) + P_{\theta}(s_{1})=1$.
Set the event $E=\{s_{1}\}$. Then, 
\[
\mathtt{J}(E|y)=\frac{\psi(y|s_{1})P_{\theta}(s_{1})}{\psi(y|s_{0})P_{\theta}(s_{0}) + \psi(y|s_{1})P_{\theta}(s_{1})},
\]
which gives
\[
\begin{aligned}
    \frac{P_{\theta}(E)}{\text{J}(E|y)}&=P_{\theta}(s_{1}) + P_{\theta}(s_{0}) \frac{\psi(y|s_{0})}{\psi(y|s_{1})}\\
    &=1- P_{\theta}(s_{0})  + P_{\theta}(s_{0}) \frac{\psi(y|s_{0})}{\psi(y|s_{1})}.
\end{aligned}
\]
Since (\ref{eq:lemma_posterior_inequality}) is satisfied, we have
\[
\begin{aligned}
    e^{-\epsilon}\leq 1- P_{\theta}(s_{0})  + P_{\theta}(s_{0}) \frac{\psi(y|s_{0})}{\psi(y|s_{1})}\leq e^{\epsilon},
\end{aligned}
\]
which yields
\[
\begin{aligned}
   \max\{0, \frac{e^{-\epsilon} - 1}{P_{\theta}(s_{0})} + 1\} \leq \frac{\psi(y|s_{0})}{\psi(y|s_{1})}\leq \frac{e^{\epsilon} - 1}{P_{\theta}(s_{0})} + 1,
\end{aligned}
\]
for all $(s_{0}, s_{1})\in\mathcal{Q}$. 
Therefore, the maximum upper bound of the ratio $\frac{\psi(y|s_{0})}{\psi(y|s_{1})}$, for all $(s_{0},s_{1})\in\mathcal{Q}$, that we can have is $\varepsilon(\epsilon_{g}, \theta)= \log\left( 1 + \frac{e^{\epsilon_{g}}-1}{\min_{s}P_{\theta}(s) }\right)$, where $\min_{s}P_{\theta}(s) >0$ for all $s\in\{s_{0},s_{1}\}\in\mathcal{Q}$.
%

\end{proof}

The condition (\ref{eq:probability_BCP}) in Lemma \ref{lemma:probability_to_deterministic} is equivalent to the \textit{strong Bayesian DP} introduced by \cite{triastcyn2020bayesian}.
By treating the DCP as a single-point DP, Lemma \ref{lemma:probability_to_deterministic} can be proved in the similar way as Proposition 1 of \cite{triastcyn2020bayesian}.

\begin{lemma}\label{lemma:probability_to_deterministic}
Given $\theta\in\Theta$, if
\begin{equation}\label{eq:probability_BCP}
    \begin{aligned}
        \text{Pr}_{\vec{Y}_{\alpha}\sim \vec{\psi}^{c}_{\alpha}(\cdot|s)}\left[\frac{\textbf{Pr}^{s}_{\vec{\gamma}^{c}_{\alpha}}\left[\vec{y}_{\alpha}\in\vec{\mathcal{W}}_{\alpha}\right]}{\textbf{Pr}^{s'}_{\vec{\gamma}^{c}_{\alpha}}\left[\vec{y}_{\alpha}\in\vec{\mathcal{W}}_{\alpha}\right]} \leq e^{\epsilon_{g}}  \right]\geq 1-\delta_{g},
    \end{aligned}
\end{equation}
for all $\vec{\mathcal{W}}_{\alpha}\subseteq \vec{\mathcal{Y}}\times\mathcal{Y}_{\alpha}$, $(s,s')\in \mathcal{Q}$, $\epsilon_{g}\geq 0$, $\delta_{g}\in[0,1)$, then we have
\[
\textbf{Pr}^{s}_{\vec{\gamma}^{c}_{\alpha}}\left[\vec{y}_{\alpha}\in\vec{\mathcal{W}}_{\alpha}\right]\leq e^{\epsilon_{g}} \textbf{Pr}^{s'}_{\vec{\gamma}^{c}_{\alpha}}\left[\vec{y}_{\alpha}\in\vec{\mathcal{W}}_{\alpha}\right]+\delta_{g}.\]
\end{lemma}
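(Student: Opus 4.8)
The plan is to follow the route indicated after the statement: reinterpret the high-probability hypothesis (\ref{eq:probability_BCP}) as a pointwise bound on the privacy-loss (likelihood-ratio) random variable and then run the standard ``good event / bad event'' decomposition that underlies Proposition~1 of \cite{triastcyn2020bayesian}. Concretely, I would fix $(s,s')\in\mathcal{Q}$ and write $p(\cdot)=\vec{\psi}^{c}_{\alpha}(\cdot\mid s)$ and $q(\cdot)=\vec{\psi}^{c}_{\alpha}(\cdot\mid s')$ for the two effective output densities, which are well defined because the copula/effective-mechanism construction of Appendix~\ref{app:section_III_details} supplies a joint density $b$ and hence a density for $\vec{Y}_{\alpha}$. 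The ratio inside (\ref{eq:probability_BCP}) is then read as the likelihood ratio $p(\vec{Y}_{\alpha})/q(\vec{Y}_{\alpha})$ evaluated at the random draw $\vec{Y}_{\alpha}\sim p$, so the hypothesis becomes $\textup{Pr}_{\vec{Y}_{\alpha}\sim p}\left[\,p(\vec{Y}_{\alpha})\le e^{\epsilon_{g}}q(\vec{Y}_{\alpha})\,\right]\ge 1-\delta_{g}$.

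Next I would introduce the ``bad'' region $B=\{\vec{y}_{\alpha}\in\vec{\mathcal{Y}}_{\alpha}: p(\vec{y}_{\alpha})> e^{\epsilon_{g}}q(\vec{y}_{\alpha})\}$, so that the hypothesis states exactly $\int_{B}p(\vec{y}_{\alpha})\,d\vec{y}_{\alpha}\le\delta_{g}$. For an arbitrary measurable $\vec{\mathcal{W}}_{\alpha}\subseteq\vec{\mathcal{Y}}_{\alpha}$ I would split the integral of $p$ over $\vec{\mathcal{W}}_{\alpha}$ into its intersections with $B$ and with $B^{c}$. On $\vec{\mathcal{W}}_{\alpha}\cap B$ the contribution is bounded above by $\int_{B}p\le\delta_{g}$; on $\vec{\mathcal{W}}_{\alpha}\cap B^{c}$ the defining inequality $p(\vec{y}_{\alpha})\le e^{\epsilon_{g}}q(\vec{y}_{\alpha})$ gives $\int_{\vec{\mathcal{W}}_{\alpha}\cap B^{c}}p\le e^{\epsilon_{g}}\int_{\vec{\mathcal{W}}_{\alpha}\cap B^{c}}q\le e^{\epsilon_{g}}q(\vec{\mathcal{W}}_{\alpha})$. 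Adding the two pieces yields $\textbf{Pr}^{s}_{\vec{\gamma}^{c}_{\alpha}}[\vec{y}_{\alpha}\in\vec{\mathcal{W}}_{\alpha}]\le e^{\epsilon_{g}}\textbf{Pr}^{s'}_{\vec{\gamma}^{c}_{\alpha}}[\vec{y}_{\alpha}\in\vec{\mathcal{W}}_{\alpha}]+\delta_{g}$, which is the claim; since $(s,s')\in\mathcal{Q}$ and $\vec{\mathcal{W}}_{\alpha}$ were arbitrary, the bound holds uniformly, and the boundary case $\delta_{g}=0$ is immediate because then $B$ is $p$-null.

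I expect the only genuinely delicate point to be the notational reconciliation in the first step. The hypothesis is written as a ratio of set-probabilities, but it must be understood as the privacy-loss random variable $\log\bigl(p(\vec{Y}_{\alpha})/q(\vec{Y}_{\alpha})\bigr)$ assessed \emph{pointwise}, with the outer probability taken under the $s$-distribution $p$ (not $q$). Getting this matching right is precisely what makes the additive slack equal to $\delta_{g}$ rather than some rescaled quantity: it is exactly the $p$-mass of $B$ that is discarded, so the ``bad'' contribution is controlled by $\delta_{g}$ while the ``good'' contribution carries the clean multiplicative factor $e^{\epsilon_{g}}$. The remaining work is the routine measure-theoretic decomposition above, which requires only that $p$ and $q$ admit densities --- already guaranteed by the copula machinery --- so no further regularity assumptions are needed.
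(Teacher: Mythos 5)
Your proof is correct and follows essentially the same route as the paper's: the paper does not spell the argument out but instead observes that hypothesis~(\ref{eq:probability_BCP}) is the strong Bayesian DP condition and defers to Proposition~1 of \cite{triastcyn2020bayesian}, whose proof is precisely your bad-event decomposition $B=\left\{\vec{y}_{\alpha}:\, p(\vec{y}_{\alpha})>e^{\epsilon_{g}}q(\vec{y}_{\alpha})\right\}$ with the likelihood ratio read pointwise under the $s$-distribution $p$. Your explicit handling of the notational subtlety---that the ratio of set-probabilities must be interpreted as the pointwise privacy-loss random variable, since otherwise the outer probability over $\vec{Y}_{\alpha}$ is degenerate---is a useful clarification the paper leaves implicit, but it does not constitute a different approach.
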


\subsection{Experimental Details}

\subsubsection{Network Configurations and Hyperparameters}

The Defender neural network is a generative model designed to process membership vectors and generate beacon modification decisions. The input layer is followed by a series of fully connected layers with activation functions applied after each layer. The first and second hidden layers use ReLU activation, while the third hidden layer employs LeakyReLU activation. Layer normalization is applied after the second and third hidden layers. The output layer utilizes a custom ScaledSigmoid activation function, producing a real value in a scaled range determined by this function. The network is trained using the AdamW optimizer with a learning rate of $1.5 \times 10^{-5}$ and weight decay of $1.5 \times 10^{-5}$.
Specific configurations for Defender performing single-shot re-design and copula perturbation are provided in Tables \ref{table:defender_single} and \ref{table:defender_copula}, respectively.

The Attacker neural network is a generative model designed to process beacons and noise, producing membership vectors. The input layer is followed by multiple fully connected layers, each equipped with batch normalization and ReLU activation. The output layer applies a standard sigmoid activation function. All Attacker models were trained using the AdamW optimizer, with a learning rate of $1.0 \times 10^{-4}$ and a weight decay of $1.0 \times 10^{-4}$.
Specific configurations for Attacker when Defender performing single-shot re-design and copula perturbation are provided in Tables \ref{table:attacker_single} and \ref{table:attacker_copula}, respectively.

\begin{table}[h]
\centering
\small
\begin{tabular}{|l|c|c|}
\hline
\textbf{Defender} & \textbf{Input Units} & \textbf{Output Units} \\ \hline
Input Layer      & 800 & 5000        \\ \hline
Hidden Layer 1   & 5000          & 9000        \\ \hline
Hidden Layer 2   & 9000          & 3000        \\ \hline
Output Layer     & 3000          & 1000 \\ \hline
\end{tabular}
\vspace{0.5cm}
\caption{\small Defender Neural Network Architecture (Single Re-Design)}\label{table:defender_single}
\end{table}

\begin{table}[h]
\centering
\small
\begin{tabular}{|l|c|c|}
\hline
\textbf{Defender} & \textbf{Input Units} & \textbf{Output Units} \\ \hline
Input Layer      & 830             & 3000        \\ \hline
Hidden Layer 1   & 3000          & 2000        \\ \hline
Hidden Layer 2   & 2000          & 1200        \\ \hline
Hidden Layer 3   & 1200          & 1200        \\ \hline
Output Layer     & 1200          & 1           \\ \hline
\end{tabular}
\caption{\small Defender Neural Network Architecture (Copula Perturbation)}\label{table:defender_copula}
\end{table}

\begin{table}[h]
\centering
\small
\begin{tabular}{|l|c|c|}
\hline
\textbf{Attacker} & \textbf{Input Units} & \textbf{Output Units} \\ \hline
Input Layer      & 1500 & 14000       \\ \hline
Hidden Layer 1   & 14000         & 9500        \\ \hline
Hidden Layer 2   & 9500          & 4500        \\ \hline
Output Layer     & 4500          & 800 \\ \hline
\end{tabular}
\vspace{0.5cm}
\caption{\small  Attacker Neural Network Architecture (Single Re-Design)}\label{table:attacker_single}
\end{table}

\begin{table}[h]
\centering
\small
\begin{tabular}{|l|c|c|}
\hline
\textbf{Attacker} & \textbf{Input Units} & \textbf{Output Units} \\ \hline
Input Layer      & 1500             & 31000       \\ \hline
Hidden Layer 1   & 31000         & 25000       \\ \hline
Hidden Layer 2   & 25000         & 15000       \\ \hline
Output Layer     & 15000         & 800 \\ \hline
\end{tabular}
\vspace{0.5cm}
\caption{\small  Attacker Neural Network Architecture (Copula Perturbation)}\label{table:attacker_copula}
\end{table}

\subsubsection{AUC Values with Standard Deviations}

Tables \ref{table:auc_single} and \ref{table:auc_copula} show the AUC values shown in the plots.

\begin{table}[h]
\centering
\begin{tabular}{|l|l|c|}
\hline
\textbf{Target $\&$ Ind. $\epsilon$} & \textbf{Mechanisms} & \textbf{AUC ± std} \\ \hline
\multirow{2}{*}{$\epsilon_g=0.25$, $\epsilon_i=0.05$} & Composition & 0.7441 ± 0.0055 \\ \cline{2-3} 
                                                      & Single      & 0.7485 ± 0.0064 \\ \hline

\multirow{2}{*}{$\epsilon_g=0.5$, $\epsilon_i=0.1$} & Composition & 0.8093 ± 0.0042 \\ \cline{2-3} 
                                                    & Single      & 0.8165 ± 0.0044 \\ \hline

\multirow{2}{*}{$\epsilon_g=1.5$, $\epsilon_i=0.3$} & Composition & 0.8921 ± 0.0031 \\ \cline{2-3} 
                                                    & Single      & 0.9056 ± 0.0029 \\ \hline

\multirow{2}{*}{$\epsilon_g=3$, $\epsilon_i=0.6$} & Composition & 0.9138 ± 0.0031 \\ \cline{2-3} 
                                                  & Single      & 0.9175 ± 0.0031 \\ \hline

\multirow{2}{*}{$\epsilon_g=5$, $\epsilon_i=1$} & Composition & 0.9330 ± 0.0029 \\ \cline{2-3} 
                                                & Single      & 0.9308 ± 0.0028 \\ \hline
\end{tabular}
\vspace{0.5cm}
\caption{\small  AUC and standard deviation for single-shot re-design when the target $\delta_{g}$ and each individual $\delta_{i}$ are the same, i.e., $\delta_{g}=\delta_{i}=0.02$, and $\rho=0.5$}\label{table:auc_single}
\end{table}

\begin{table}[h]
\centering
\begin{tabular}{|l|l|c|}
\hline
\textbf{Target $\&$ Ind. $\epsilon$} & \textbf{Type} & \textbf{AUC ± std} \\ \hline
\multirow{2}{*}{$\epsilon_g=0.4$, $\epsilon_i=0.05$} & Composition & 0.5624 ± 0.0078 \\ \cline{2-3} 
                                                     & Single      & 0.5745 ± 0.0060 \\ \hline

\multirow{2}{*}{$\epsilon_g=0.6$, $\epsilon_i=0.1$} & Composition & 0.6206 ± 0.0071 \\ \cline{2-3} 
                                                    & Single      & 0.6388 ± 0.0053 \\ \hline

\multirow{2}{*}{$\epsilon_g=1$, $\epsilon_i=0.18$} & Composition & 0.6943 ± 0.0063 \\ \cline{2-3} 
                                                   & Single      & 0.7086 ± 0.0052 \\ \hline

\multirow{2}{*}{$\epsilon_g=2$, $\epsilon_i=0.3$} & Composition & 0.7429 ± 0.0055 \\ \cline{2-3} 
                                                  & Single      & 0.7324 ± 0.0051 \\ \hline

\multirow{2}{*}{$\epsilon_g=4$, $\epsilon_i=0.6$} & Composition & 0.8235 ± 0.0035 \\ \cline{2-3} 
                                                  & Single      & 0.8139 ± 0.0042 \\ \hline

\multirow{2}{*}{$\epsilon_g=6$, $\epsilon_i=1$} & Composition & 0.8238 ± 0.0043 \\ \cline{2-3} 
                                                & Single      & 0.8142 ± 0.0044 \\ \hline
\end{tabular}
\vspace{0.5cm}
\caption{ AUC and standard deviation for copula perturbation when the target $\delta_{g}$ and each individual $\delta_{i}$ are the same, i.e., $\delta_{g}=\delta_{i}=0.02$, and $\rho=0.5$}\label{table:auc_copula}
\end{table}

\end{document}